\renewcommand{\epsilon}{\varepsilon}
\def\C{\mathbb{C}}
\def\bg{\boldsymbol{\gamma}}
\def\beq{\begin{equation}}
\def\eeq{\end{equation}}
\def\beqq{\begin{equation*}}
\def\eeqq{\end{equation*}}
\def\bs{\begin{split}}
	\def\es{\end{split}}
\def\bg{\bm{\gamma}}
\def\bl{{\boldsymbol{\lambda}}}
\def\bbg{\underline{\bm{\gamma}}}
\def\bbl{\underline{\boldsymbol{\lambda}}}
\def\bo{\boldsymbol{\omega}}
\def\bx{\boldsymbol{x}}
\def\by{\boldsymbol{y}}
\def\bz{\boldsymbol{z}}
\def\bbx{\underline{\boldsymbol{x}}}
\def\bby{\underline{\boldsymbol{y}}}
\def\bbz{\underline{\boldsymbol{z}}}
\def\const{{2\pi\imath}}
\def\cbk{\color{black}}
\def\cbv{\color{violet}}
\def\eps{\epsilon}
\def\Im{\operatorname{Im}}
\def\g{\gamma}
\def\gg{{\hat{g}^\ast}}
\def\i{\hat{i}}
\def\K{{K}}
\def\KK{\hat{\K}}
\def\l{\lambda}
\def\mm{\hat{\mu}}
\def\o{\omega}
\def\R{\mathbb{R}}
\def\Re{\mathrm{Re}\,}
\def\Res{\operatorname{Res}}
\def\ve{\varepsilon}
\def\vf{\varphi}
\newtheorem{lemma}{Lemma}
\newtheorem*{proposition*}{Proposition}
\newtheorem{corollary}{Corollary}
\newtheorem*{theorem*}{Theorem}
\newcommand{\rf}[1]{(\ref{#1})}
\newcommand{\LL}[1]{\Lambda_{#1}}
\def\LLL{\hat{\Lambda}}
\def\QQ{\hat{Q}}
\begin{document}
\begin{center}
{\bf \large Baxter operators in Ruijsenaars hyperbolic system III. \\[4pt]
Orthogonality and completeness  of wave functions}

\bigskip

{\bf N. Belousov$^{\dagger\times}$, S. Derkachov$^{\dagger\times}$, S. Kharchev$^{\bullet\ast}$, S. Khoroshkin$^{\circ\ast}$
}\medskip\\
$^\dagger${\it Steklov Mathematical Institute, Fontanka 27, St. Petersburg, 191023, Russia;}\smallskip\\
$^\times${\it National Research University Higher School of Economics, Soyuza Pechatnikov 16, \\St. Petersburg, 190121, Russia;}\smallskip\\
$^\bullet${\it National Research Center ``Kurchatov Institute'', 123182, Moscow, Russia;}\smallskip\\
$^\circ${\it National Research University Higher School of Economics, Myasnitskaya 20, \\Moscow, 101000, Russia;}\smallskip\\
$^\ast${\it Institute for Information Transmission Problems RAS (Kharkevich Institute), \\Bolshoy Karetny per. 19, Moscow, 127994, Russia}

\end{center}

\begin{abstract}
\noindent 
In the previous paper we showed that the wave functions of
the quantum Ruijsenaars hyperbolic system diagonalize
Baxter Q-operators.
Using this property and duality relation we prove orthogonality and
completeness relations for the wave functions or, equivalently, unitarity
of the corresponding integral transform.
\end{abstract}

\tableofcontents

\section{Introduction}

\subsection{Ruijsenaars system: known results}
The Hamiltonians of the hyperbolic Ruijsenaars system \cite{R4} are given  by commuting symmetric difference operators
\begin{equation}
	\label{I2}
	H_r(\bx_n;g|\bo) = \sum_{\substack{I\subset[n] \\ |I|=r}}
	\prod_{\substack{i\in I \\ j\notin I}}
	\frac{\sh^{\frac{1}{2}}\frac{\pi}{\o_2}\left(x_i-x_j-\imath g\right)}
	{\sh^{\frac{1}{2}}\frac{\pi}{\o_2}\left(x_i-x_j\right)}
	\cdot T^{-\imath\o_1}_{I,x}\cdot \prod_{\substack{i\in I \\ j\notin I}}
	\frac{\sh^{\frac{1}{2}}\frac{\pi}{\o_2}\left(x_i-x_j+\imath g\right)}
	{\sh^{\frac{1}{2}}\frac{\pi}{\o_2}\left(x_i-x_j\right)}.
\end{equation}
They act on functions of $n$ coordinates
\begin{equation}
\bx_n = (x_1, \dots, x_n)
\end{equation}
and contain products of shift operators
\begin{equation}
	T^{a}_{I,x}=\prod_{i \in I} T_{x_i}^a, \qquad \left( T^{a}_{x_i} \, f \right)(x_1,\ldots,x_i,\ldots,x_n) = f(x_1,\ldots,x_i+a,\ldots,x_n)
\end{equation}
defined for any subset
\begin{equation}
I \subset [n] = \{1, \dots, n\}.
\end{equation}
One can also consider gauge equivalent Macdonald operators
\begin{equation}
	\label{I2a}
	M_r(\bx_n;g|\bo) = \sum_{\substack{I\subset[n] \\ |I|=r}}
	\prod_{\substack{i\in I \\ j\notin I}}
	\frac{\sh\frac{\pi}{\o_2}\left(x_i-x_j-\imath g\right)}
	{\sh\frac{\pi}{\o_2}\left(x_i-x_j\right)}
	\cdot T^{-\imath\o_1}_{I,x}
\end{equation}
collected into a generating function
\begin{align}
	M_n(\lambda;g) = \sum_{r=0}^{n} \lambda^{n-r}\,(-1)^r\,M_r(\bx_n;g|\bo).
\end{align}
Here we assume $M_0 = 1$.

Both families of operators are parametrized by three constants: periods $\bm{\omega}=(\omega_1, \omega_2)$ and coupling constant $g$,
which originally are supposed to be real positive, but we keep them to be complex unless otherwise stated and assume the following restrictions:
\beq\label{I0a} \Re \o_1 > 0, \qquad \Re \o_2 > 0,\qquad 0< \Re g<\Re \o_1+\Re \o_2  \eeq
and
\beq\label{I0b} \nu_g=\Re\frac{ g}{\o_1\o_2}>0.\eeq
 The gauge equivalence between Rujisenaars and  Macdonald operators  is established by means of the measure function
\beq\label{I5}
\mu(\bx_n)=\frac{1}{n!} \prod_{\substack{i,j=1 \\ i\not=j}}^n\mu(x_i-x_j),\qquad
\mu(x)=S_2(\imath x|\bo) S_2^{-1}(\imath x+g|\bo). \eeq
Here $S_2(z|\bo)$ is the double sine function, see Appendix \ref{S2}.
Namely,
\beq\label{I4}
\sqrt{\mu (\bx_n)} \,
M_r(\bx_n;g|\bo) \, \frac{1}{\sqrt{\mu  (\bx_n)}}=
H_r(\bx_n;g|\bo).
\eeq

Macdonald operators are symmetric with respect
to the bilinear pairing
\begin{align}\label{Mpair}
\bigl(\Phi|\Psi\bigr)_\mu =
\int_{\R^{n}} d\bx_{n}\,
\mu(\bx_{n})\,\Phi(-\bm{x}_n)
\Psi(\bm{x}_{n})
\end{align}
related to the measure $\mu(\bx_n)d\bx_n$, and with respect to the hermitian pairing
\begin{align}\label{Mpair2}
	\langle \Phi|\Psi\rangle_\mu =
	\int_{\R^{n}} d\bx_{n}\,
	\mu(\bx_{n})\,\overline{\Phi(\bm{x}_n)}
	\Psi(\bm{x}_{n})
\end{align}
as well once the periods $\o_1, \o_2$ and the coupling constant $g$ are supposed to be real.

There exists another family of commuting operators $Q_{n}(\lambda)$ parameterized
by $\lambda \in \mathbb{C}$ and called Baxter $Q$-operators. These are integral operators
\begin{equation}\label{I14}
\left[ Q_{n}(\lambda) f\right] (\bm{x}_n) =
\int_{\mathbb{R}^n} d\bm{y}_n \, Q(\bm{x}_n, \bm{y}_n; \lambda)\, f(\bm{y}_n)
\end{equation}
with the kernel
\beq\label{Qker} Q(\bx_n,\by_{n};\l) = d_n(g|\bo) \,e^{\const \l(\bbx_n-\bby_n)}
\K(\bx_n,\by_{n})\,\mu  (\by_{n}).
\eeq
Here and in what follows we denote
\begin{equation}\label{I6a}
\K(\bx_n,\by_m)=\prod_{i=1}^n \prod_{j = 1}^m \K(x_i-y_j)
\end{equation}
where $\K(x)$ is the following function
\beq\label{I6} \K(x) = S_2^{-1}\Bigl(\imath x +\frac{g^\ast}{2}\Big|\bo\Bigr)S_2^{-1}\Bigl(-\imath x+\frac{g^\ast}{2}\Big|\bo\Bigr) \eeq
and
\beq\label{I3b} g^\ast=\o_1+\o_2-g.\eeq
We also use notation
\begin{equation}
\bbx_n = x_1 + \ldots + x_n.
\end{equation}
The normalization constant
\begin{equation}\label{dconst}
d_n(g|\bo) = \left[ \sqrt{\omega_1 \omega_2} S_2(g|\bo) \right]^{-n}
\end{equation}
is inserted for the convenience: with this normalization eigenvalue of the $Q$-operator has the simplest form. Note that this constant differs from the one used in our previous work \cite{BDKK2} by the factor $1/n!$, which we instead include in the measure function \eqref{I5}.

Under assumptions \rf{I0a} and \rf{I0b} in \cite{BDKK} we established the equalities
\begin{align}\label{QQcomm}
Q_{n}(\lambda) \, Q_{n}(\rho) & = Q_{n}(\rho) \, Q_{n}(\lambda)\,,
\\[6pt] \label{QMcomm}
Q_{n}(\lambda) \, M_n(\rho; g) & = M_n(\rho; g) \,
Q_{n}(\lambda)\,.
\end{align}
Commutativity relations  \rf{QQcomm} and  \rf{QMcomm} suggest that $Q$-operators and Macdonald operators should have common eigenfunctions. They
can be constructed recursively
\begin{equation}\label{Psi-rec}
\Psi_{\bm{\lambda}_n}(\bm{x}_n) = \Lambda_{n}(\lambda_n) \, \Psi_{\bm{\lambda}_{n - 1}}(\bm{x}_{n - 1})\,,
\end{equation}
step by step, starting from the plane wave
$\Psi_{\lambda_1}(x_1) = e^{{2\pi \imath}{} \lambda_1 x_1}$,
by application of the raising operators $\Lambda_{n}(\lambda)$
\beq \label{I12}\Psi_{\bl_n}(\bx_n)=
\LL{n}(\l_n) \, \LL{n-1}(\l_{n-1})\cdots \LL{2}(\l_2) \, e^{\const \l_1x_1}.
\eeq
In some cases, to avoid misunderstanding, we use more detailed notation
for eigenfunctions $\Psi_{\bl_n}(\bx_n) = \Psi_{\bl_n}(\bx_n; g|\bo)$.
Raising operator $\Lambda_{n}(\lambda)$ is an integral operator
\beq\label{I11}\begin{split} \left[\LL{n}(\l)f\right](\bx_n) = \int_{\R^{n-1}}d\by_{n-1} \, \Lambda(\bx_n,\by_{n-1};\l) f(\by_{n-1})
\end{split}\eeq
with the kernel
\beq\label{Lker} \Lambda(\bx_n,\by_{n-1};\l)= d_{n - 1}(g|\bo) \,
e^{\const \l(\bbx_n-\bby_{n-1})}
\K(\bx_n,\by_{n-1})\,\mu  (\by_{n-1}).
\eeq
The normalization constant is the same as in \eqref{dconst}.
The kernel of the raising operator  is obtained from the corresponding integral kernel
of the $Q$-operator \eqref{Qker} by the  replacement $\by_{n} \to \by_{n-1}$
 and $d_{n}(g|\bo) \to d_{n - 1}(g|\bo)$.

Note that the operator $\LL{1}(\l_1)$ can be defined in a natural way
as the operator of multiplication by the function $e^{\const \l_1x_1}$ so that
the formula \eqref{I12} can be rewritten as
\beq \Psi_{\bl_n}(\bx_n)=
\LL{n}(\l_n) \, \LL{n-1}(\l_{n-1})\cdots \LL{2}(\l_2)\,\LL{1}(\l_1)\,1
\eeq
or in explicit terms
\begin{equation}\label{I12'}
\begin{aligned}
&\Psi_{\bl_n}(\bx_n; g|\bo)=  \prod_{k=1}^{n-1} \frac{d_{k}(g|\bo)}{k!}\,\int\limits_{\R^{\frac{n(n-1)}{2}}}\prod_{i=1}^{n-1} \Biggl\{
\prod_{1\leq l\neq m\leq i}\frac{S_2\big(\imath(x_{i,l}-x_{i,m})\big|\bo\big)}
{S_2\big(\imath(x_{i,l}-x_{i,m})+g\big|\bo\big)} \\[6pt]
& \times \prod\limits_{j=1}^{i+1}\prod\limits_{k=1}^{i}
\frac{S_2(\imath(x_{i+1,j}-x_{i,k})+\frac{\o_1+\o_2+g}{2}|\bo)}
{S_2(\imath(x_{i+1,j}-x_{i,k})+\frac{\o_1+\o_2-g}{2}|\bo)} \Biggr\} \,
e^{2\pi\imath \sum\limits_{i=1}^n \l_i \Bigl(\sum\limits_{j = 1}^{i} x_{i, j} - \sum\limits_{j = 1}^{i - 1}x_{i - 1, j}\Bigr)}
\prod_{i=1}^{n-1} \prod_{j = 1}^i dx_{i,j}
\end{aligned}
\end{equation}
Here the inversion formula \rf{inv} in used and we assume $x_{n, i} \equiv x_i$.

M. Hallnäs and S. Ruijsenaars proved that for real periods the function  \rf{I12} is an eigenfunction of Macdonald operators. In \cite{BDKK2} it is shown that under conditions \rf{I0a} and \rf{I0b} the function \rf{I12} is a joint eigenfunction of Macdonald operators and $Q$-operators as well
\begin{align}\label{I13}
& M_n(\lambda;g)\,\Psi_{\bl_n}(\bx_n) =
\prod_{j = 1}^n \left(\lambda-e^{2\pi\lambda_j \o_1}\right)\,
\Psi_{\bl_n}(\bx_n)\,,\\
\label{QPsi}
& Q_{n}(\lambda) \, \Psi_{ \bm{\lambda}_n }(\bm{x}_n) =
\prod_{j = 1}^n \KK(\lambda-\lambda_j)\, \Psi_{ \bm{\lambda}_n }(\bm{x}_n)\,.
\end{align}
\cbv The  \cbk eigenvalue of the operator $M_n(\lambda;g)$ coincides with the generating
function of elementary symmetric polynomials of arguments $e^{2\pi\lambda_j \o_1}$
and the eigenvalue of the $Q$-operator is expressed as the product of functions
\beq
\label{I6c} \KK(\l) = S_2^{-1}\Bigl(\imath \l +\frac{\hat{g}}{2}\Big|\hat{\bo}\Bigr)S_2^{-1}
\Bigl(-\imath \l+\frac{\hat{g}}{2}\Big|\hat{\bo}\Bigr)\,.
\eeq
Here we use the notation $\hat{a}=\frac{a}{\o_1\o_2}$ for any $a\in\C$
so that
\beq\label{hat}
\hat{\bo}=\left(\hat{\o}_1,\hat{\o}_2\right) = \left(\frac{1}{\o_2},\frac{1}{\o_1}\right) ,
\qquad \hat{g}=\frac{g}{\o_1\o_2},\qquad  \gg=\hat{\o}_1 + \hat{\o}_2 - \hat{g} = \frac{g^\ast}{\o_1\o_2}.
\eeq
The  proof of both relations \eqref{I13} and \eqref{QPsi} is based on
the intertwining relations
\begin{align}\label{MQLcomm}
&M_n(\lambda;g)\,\Lambda_n(\l_n)=(\lambda-e^{2\pi\lambda_n \omega_1})\,\Lambda_n(\l_n)\,M_{n-1}(\lambda;g)\,,\\[6pt]
&Q_{n}(\lambda) \, \Lambda_{n}(\lambda_n) = \KK(\lambda - \lambda_n)\,
\Lambda_{n}(\lambda_n) \, Q_{n-1}(\lambda)\,.
\end{align}

In \cite{BDKK2} we proved a number of symmetries of the wave function \rf{I12}. It is clearly a symmetric function of the variables $x_k$. Besides, it is symmetric function of spectral variables $\l_k$ due to exchange relation
\begin{equation*}
	\Lambda_{n}(\lambda_n) \, \Lambda_{n - 1}(\lambda_{n-1}) =
\Lambda_{n}(\lambda_{n-1}) \, \Lambda_{n - 1}(\lambda_n).
\end{equation*}
Moreover, it enjoys the remarkable duality property with respect to interchange of space and spectral variables. Similar to  \eqref{I6c} define the function
\beq\label{I5d}
 \mm(\l) = S_2(\imath \l|\hat{\bo}) S_2^{-1}(\imath \l+\gg|\hat{\bo})\,
\eeq
and corresponding products
\begin{equation}\label{I6d}
\mm(\bl_n)= \frac{1}{n!}\,\prod_{\substack{i,j = 1\\i\not=j}}^n \mm(\l_i-\l_j),\qquad	\KK(\bl_n,\bg_m)=\prod_{i=1}^n \prod_{j = 1}^m \KK(\l_i-\g_j).
\end{equation}
Next we introduce counterparts of the Macdonald operators and their generating function
that act on functions of spectral variables $\lambda_j$
\begin{align*}
&M_r(\bl_n;\hat{g}^*|\hat{\bo}) = \sum_{\substack{I\subset[n] \\ |I|=r}}
\prod_{\substack{i\in I \\ j\notin I}}
\frac{\sh\pi \o_1\left(\lambda_i-\lambda_j-\imath \hat{g}^*\right)}
{\sh \pi \o_1\left(\lambda_i-\lambda_j\right)}
\cdot T^{-\frac{\imath}{\o_2}}_{I,\lambda}\,,\\[6pt]
&\hat{M}_n(x;\hat{g}^*) = \sum_{r=0}^{n} x^{n-r}\,(-1)^r\,
M_r(\bl_n;\hat{g}^*|\hat{\bo})
\end{align*}
and counterparts of the $Q$-operator and raising $\Lambda$-operator as integral operators
\begin{equation}\label{I15d}\begin{aligned}
\bigl[\QQ_n(x) \varphi\bigr] (\bm{\l}_n)& = \int_{\mathbb{R}^n} d\bm{\gamma}_n \, \QQ(\bm{\l}_n, \bm{\gamma}_n; x) \,\varphi(\bm{\gamma}_n),\\[7pt]
	 \bigl[\LLL_{n}(x)\varphi\bigr](\bl_n)& = \int_{\R^{n-1}}d\bg_{n-1} \, \LLL(\bl_n,\bg_{n-1};x) \,\varphi(\bg_{n-1})
	\end{aligned}
\end{equation}
with the kernels
\beq\label{I14d}\begin{split} \QQ(\bl_n,\bg_{n};x)&=
d_n(\hat{g}^*|\hat{\bo})\,e^{\const x (\bbl_n-\bbg_n)}	
\,\KK(\bl_n,\bg_{n})\,\hat{\mu}(\bg_{n}),\\[10pt]
\LLL(\bl_n,\bg_{n-1};x)&= d_{n - 1}(\hat{g}^*|\hat{\bo})\,
e^{\const x (\bbl_n-\bbg_{n-1} )}\, \KK(\bl_n,\bg_{n-1})\, \hat{\mu}  (\bg_{n-1}).
\end{split}
\eeq
In addition to \eqref{I0a}, \eqref{I0b} assume
\begin{equation}
\nu_{g^*} = \Re \frac{g^*}{\o_1 \o_2} > 0.
\end{equation}
Altogether we have two pairs of inequalities on coupling constant
\begin{equation}
0 < \Re g < \Re \o_1 + \Re \o_2, \qquad 0 < \Re \hat{g}^* < \Re \hat{\o}_1 + \Re \hat{\o}_2.
\end{equation}
Notice that they are equivalent in the case of real periods $\o_1 ,\o_2$. Then define
\begin{equation}\label{Psi-rec2}
	\hat{\Psi}_{\bx_n}(\bl_n) := \Psi_{\bx_n}(\bl_n;\gg|\hat{\bo}) = \LLL_{n}(x_n) \, \hat{\Psi}_{\bx_{n - 1}}(\bl_{n - 1}), \qquad \hat{\Psi}_{x_1}(\l_1) = e^{{2\pi \imath}{} \lambda_1 x_1}
\end{equation}
or in explicit terms
\begin{equation*}
\begin{aligned}
&\hat{\Psi}_{\bx_n}(\bl_n)=  \prod_{k=1}^{n-1} \frac{d_{k}(\hat{g}^*|\hat{\bo})}{k!}\,\int\limits_{\R^{\frac{n(n-1)}{2}}}\prod_{i=1}^{n-1} \Biggl\{
\prod_{1\leq l\neq m\leq i}\frac{S_2\big(\imath(\l_{i,l}-\l_{i,m})\big|\hat{\bo}\big)}
{S_2\big(\imath(\l_{i,l}-\l_{i,m})+\hat{g}^*\big|\hat{\bo}\big)} \\[6pt]
& \times \prod\limits_{j=1}^{i+1}\prod\limits_{k=1}^{i}
\frac{S_2(\imath(\l_{i+1,j}-\l_{i,k})+\frac{\o_1+\o_2+g^*}{2\o_1 \o_2}|\hat{\bo})}
{S_2(\imath(\l_{i+1,j}-\l_{i,k})+\frac{\o_1+\o_2-g^*}{2 \o_1 \o_2}|\hat{\bo})} \Biggr\} \,
e^{2\pi\imath \sum\limits_{i=1}^n x_i \Bigl(\sum\limits_{j = 1}^{i} \l_{i, j} - \sum\limits_{j = 1}^{i - 1}\l_{i - 1, j}\Bigr)}
\prod_{i=1}^{n-1} \prod_{j = 1}^i d\l_{i,j}
\end{aligned}
\end{equation*}
where we assume $\l_{n,i} \equiv \l_i$. The following statement was conjectured in \cite{HR1} and proved in \cite{BDKK2}.
\begin{theorem*}\label{thm-duality}
	The wave function $\Psi_{\bm{\lambda}_n}(\bm{x}_n)$ satisfies duality relation
	\begin{equation}\label{duality}
		\Psi_{\bm{\lambda}_n}(\bm{x}_n; g|\bo) = \Psi_{\bm{x}_n}(\bm{\lambda}_n; \gg|\hat{\bo}).
	\end{equation}
\end{theorem*}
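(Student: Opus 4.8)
The plan is to prove \eqref{duality} by induction on $n$, using the recursive constructions \eqref{Psi-rec}, \eqref{Psi-rec2}. For $n=1$ both sides equal the plane wave $e^{\const\l_1 x_1}$, which is symmetric in $x_1\leftrightarrow\l_1$. The case $n=2$ already exhibits the mechanism: writing out one raising operator on each side and shifting the single integration variable, \eqref{duality} becomes the Parseval-type identity
\[
d_1(g|\bo)\int_{\R}\K(a-y)\,\K(y)\,e^{\const by}\,dy
=d_1(\gg|\hat\bo)\int_{\R}\KK(b-\g)\,\KK(\g)\,e^{\const a\g}\,d\g ,
\qquad a=x_1-x_2,\quad b=\l_1-\l_2 ,
\]
which follows from two uses of the Fourier identity $d_1(g|\bo)\int_{\R}e^{\const ty}\K(y)\,dy=\KK(t)$, the evenness of $\K$ and $\KK$, and the homogeneity and reflection equations of $S_2$ — the latter also giving the identity $S_2(g|\bo)\,S_2(\gg|\hat\bo)=1$ that matches the normalizations $d_1(g|\bo)$ and $d_1(\gg|\hat\bo)$.

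For the inductive step, assume \eqref{duality} at level $n-1$. By \eqref{Psi-rec} the left side of \eqref{duality} at level $n$ equals $\int_{\R^{n-1}}\Lambda(\bx_n,\by_{n-1};\l_n)\,\Psi_{\bl_{n-1}}(\by_{n-1};g|\bo)\,d\by_{n-1}$, and by \eqref{Psi-rec2} the right side equals $\int_{\R^{n-1}}\LLL(\bl_n,\bg_{n-1};x_n)\,\hat\Psi_{\bx_{n-1}}(\bg_{n-1})\,d\bg_{n-1}$; replacing $\Psi_{\bl_{n-1}}$ and $\hat\Psi_{\bx_{n-1}}$ by the dual wave function through the inductive hypothesis, the claim reduces to the single identity
\[
\int_{\R^{n-1}}d\by_{n-1}\,\Lambda(\bx_n,\by_{n-1};\l_n)\,\hat\Psi_{\by_{n-1}}(\bl_{n-1})
=\int_{\R^{n-1}}d\bg_{n-1}\,\LLL(\bl_n,\bg_{n-1};x_n)\,\Psi_{\bg_{n-1}}(\bx_{n-1};g|\bo) ,
\]
i.e.\ ``$\Lambda_n(\l_n)$ applied to a $\hat\Lambda$-tower equals $\LLL_n(x_n)$ applied to a $\Lambda$-tower''. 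It suffices to push $\Lambda_n(\l_n)$ through the $\hat\Lambda$-tower one layer at a time. When $\hat\Psi_{\by_{n-1}}(\bl_{n-1})$ is expanded by \eqref{Psi-rec2}, the spatial variable $y_{n-1}$ carried by the outermost $\hat\Lambda$-layer enters the integrand only through that layer's exponential $e^{\const y_{n-1}(\bbl_{n-1}-\bbg_{n-2})}$ and through the factors $e^{-\const\l_n y_{n-1}}$, $\prod_{i=1}^{n}\K(x_i-y_{n-1})$ and $\prod_{a=1}^{n-2}\mu(y_{n-1}-y_a)\mu(y_a-y_{n-1})$ supplied by $\Lambda(\bx_n,\by_{n-1};\l_n)$, so the $y_{n-1}$-integral is a one-dimensional hyperbolic integral of ``star--triangle''/chain type, exactly of the kind used in \cite{BDKK, BDKK2} to obtain \eqref{QQcomm}, \eqref{QMcomm} and \eqref{MQLcomm}. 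Evaluating it in closed form via the functional, reflection and homogeneity equations of $S_2$ produces the exponential and $\KK$-factors of one layer of a $\hat\Lambda$-tower in the $\bl$-variables together with a rearranged product in the $\bx$-variables; iterating over the remaining peeled variables turns the left integral into the right one. Equivalently, the whole manipulation can be packaged as an exchange relation between $\Lambda_n(\l_n)$, $\QQ_{n-1}(x_n)$ and their counterparts $\LLL_n(x_n)$, $Q_{n-1}(\l_n)$, in the spirit of \eqref{MQLcomm}.

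The hard part is the bookkeeping in this step: identifying the exact star--triangle integral, checking that its iteration reassembles precisely one $\hat\Lambda_n$-layer with the correct constant (already at $n=2$ this pins down $S_2(g|\bo)S_2(\gg|\hat\bo)=1$ and the interplay of $d_k(g|\bo)$ with $d_k(\gg|\hat\bo)$), and justifying the exchanges of integration. All the relevant integrals are absolutely convergent under \eqref{I0a}, \eqref{I0b} together with $\nu_{g^*}>0$, which force exponential decay of $\K$, $\KK$, $\mu$, $\mm$; near the boundary of this region one must still control contour shifts and pinching of poles. A less computational route is also available: \eqref{I13} applied to the dual system shows that $\hat\Psi_{\bx_n}(\bl_n)$ is a joint eigenfunction of the dual Macdonald operators $M_r(\bl_n;\gg|\hat\bo)$; one then proves, via a kernel-function intertwining identity between $M_r(\bx_n;g|\bo)$ and $M_r(\bl_n;\gg|\hat\bo)$, that $\hat\Psi_{\bx_n}(\bl_n)$ is simultaneously an eigenfunction of $M_r(\bx_n;g|\bo)$ with the eigenvalue of \eqref{I13}, and concludes by a uniqueness theorem for the joint eigenfunction with prescribed plane-wave asymptotics — which trades the integral identities for a uniqueness statement that is itself delicate in the coupling range \eqref{I0a}, \eqref{I0b}.
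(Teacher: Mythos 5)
Note first that the present paper does not prove this theorem at all: it is quoted as background, conjectured in \cite{HR1} and proved in the companion paper \cite{BDKK2}, so there is no internal proof to compare against; your attempt has to stand on its own. The set-up of your induction is correct: the base cases are fine (for $n=2$ the identity is indeed the convolution theorem applied to the Fourier pair \eqref{K-fourier} together with $S_2(g|\bo)S_2(\gg|\hat\bo)=1$, which follows from homogeneity \eqref{S-hom} and reflection \eqref{inv}), and the reduction of the inductive step to the single identity $\int d\by_{n-1}\,\Lambda(\bx_n,\by_{n-1};\l_n)\,\hat{\Psi}_{\by_{n-1}}(\bl_{n-1})=\int d\bg_{n-1}\,\hat{\Lambda}(\bl_n,\bg_{n-1};x_n)\,\Psi_{\bg_{n-1}}(\bx_{n-1};g|\bo)$ is the natural one.

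The genuine gap is the mechanism you propose for that identity. Peeling the outer layer of $\hat{\Psi}_{\by_{n-1}}(\bl_{n-1})$ and integrating over $y_{n-1}$ leads to
\begin{equation*}
\int_{\R} dy\; e^{2\pi\imath c\,y}\,\prod_{i=1}^{n}K(x_i-y)\,\prod_{a=1}^{n-2}\mu(y-y_a)\,\mu(y_a-y),
\qquad c=\underline{\bm{\lambda}}_{n-1}-\underline{\bm{\gamma}}_{n-2}-\lambda_n ,
\end{equation*}
with generic $c$. For $n\ge 3$ this contains $n$ factors of $K$ and $2(n-2)$ factors of $\mu$ and is \emph{not} a star--triangle or chain integral: it has no closed-form evaluation by the functional, reflection and homogeneity equations of $S_2$ (only the $n=2$ case, two $K$'s and no $\mu$'s, reduces to the hyperbolic Fourier formula \eqref{K-fourier}). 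Moreover, even granting an evaluation, a single $y_{n-1}$-integration cannot reassemble ``one $\hat{\Lambda}_n$-layer'': the kernel $\hat{\Lambda}(\bl_n,\bg_{n-1};x_n)$ carries $n-1$ fresh integration variables coupled to all of $\bl_n$, while the remaining $y_1,\dots,y_{n-2}$ occur both in $\Lambda(\bx_n,\by_{n-1};\l_n)$ (through $K(\bx_n,\by_{n-1})$ and $\mu(\by_{n-1})$) and inside the inner $\hat{\Psi}_{\by_{n-2}}(\bg_{n-2})$, so the rearrangement cannot be done layer by layer; it is a global identity. This is exactly the hard content of the duality theorem, and its proof in \cite{BDKK2} requires the full $Q$-operator machinery (commutativity \eqref{QQcomm}, the $Q\Lambda$ and $\Lambda\Lambda$ exchange relations, and a delicate induction with convergence control), not bookkeeping around a one-dimensional evaluation. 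Your alternative route (joint eigenfunction of both Macdonald families plus uniqueness) also stalls, since no uniqueness theorem for joint eigenfunctions with prescribed plane-wave asymptotics is available in the stated generality, as you acknowledge. So what you have is a plausible outline whose decisive step is missing.
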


The statement of duality has several direct corollaries.
The wave function  \rf{I12} is an eigenfunction of both operators $Q_{n}(\lambda)$ and $\QQ_{n}(x)$
\begin{align}\label{QQPsi1}
	Q_{n}(\lambda) \, \Psi_{ \bm{\lambda}_n }(\bm{x}_n) &= \prod_{j = 1}^n \KK(\lambda-\lambda_j) \, \Psi_{ \bm{\lambda}_n }(\bm{x}_n),\\
	\label{QQPsi2}
	\QQ_{n}(x) \, \Psi_{ \bm{\lambda}_n }(\bm{x}_n) &= \prod_{j = 1}^n \K(x-x_j) \, \Psi_{ \bm{\lambda}_n }(\bm{x}_n).
\end{align}
as well as of both families of Macdonald operators $M(\lambda;g|\bo)$ and $\hat{M}(x;\hat{g}^*|\hat{\bo})$:
	\beq\label{MM}\begin{split}
	 M(\lambda;g|\bo)\,\Psi_{\bl_n}(\bx_n)&=
\prod_{j=1}^{n}\left(\lambda-e^{2\pi\lambda_j\o_1}\right)\,\Psi_{\bl_n}(\bx_n),\\[5pt]
	 \hat{M}(x;\hat{g}^*|\hat{\bo})\, \Psi_{\bl_n}(\bx_n)&=
\prod_{j=1}^{n}\Bigl(x-e^{\frac{2\pi x_j}{\o_2}}\,\Bigr)\,\Psi_{\bl_n}(\bx_n).
\end{split}
\eeq
In other words, it solves bispectral problems for the pair of dual $Q$-operators and for the pair of dual Macdonald operators.  In non-relativistic limit the bispectrality property of the wave function of Sutherland model was established in \cite{KK1,KK2}. The  duality \rf{MM} \cbk provides us with two integral representations \eqref{Psi-rec}, \eqref{Psi-rec2}, recursive with respect to $x_j$ and $\lambda_j$ variables correspondingly
\begin{equation}
\Psi_{\bm{\lambda}_n}(\bm{x}_n) = \Lambda_{n}(\lambda_n) \, \Psi_{\bm{\lambda}_{n - 1}}(\bm{x}_{n - 1}) = \LLL_{n}(x_n) \, \Psi_{\bm{\lambda}_{n - 1}}(\bm{x}_{n - 1}).
\end{equation}

\subsection{Orthogonality and completeness: expectations}

In the present paper we prove orthogonality and
completeness relations for the eigenfunctions
$\Psi_{\bm{\lambda}_n}(\bm{x}_n)$. To \cbk start with let us itemize what we should expect on the formal level. Assume  that periods $\o_1 ,\o_2$ and the coupling constant $g$ are real.

Macdonald operators are formally self-adjoint with respect to the scalar product \rf{Mpair2}
so that we expect the orthogonality relation for the eigenfunctions in the following general form
\begin{align}\label{ort0}
\int_{\R^{n}} d\bx_{n}\,
\mu(\bx_{n})\,
\overline{\Psi}_{\bm{\lambda}'_n}(\bm{x}_n)
\Psi_{\bm{\lambda}_{n}}(\bm{x}_{n}) =
A(\bm{\lambda}_{n})\,\delta (\bm{\lambda}'_{n}, \bm{\lambda}_{n})
\end{align}
where $A(\bm{\lambda}_{n})$ is normalization constant and
\begin{align}
\delta (\bm{\lambda}'_{n}, \bm{\lambda}_{n}) = \frac1{n!}\,
\sum_{w\in S_n} \prod_{k=1}^n \delta(\lambda'_k-\lambda_{w(k)})
\end{align}
is the kernel of the identity operator on the space
of the symmetric functions of spectral variables $\bm{\lambda}_{n} \in \R^n$. Due to the duality \eqref{duality} we expect analogous orthogonality relation over spectral variables
\beq \label{orth2} \int_{\R^{n}} d\bl_{n}\,
\hat{\mu}(\bl_{n})\,
\overline{\Psi}_{\bm{\lambda}_n}(\bm{x}_n')
\Psi_{\bm{\lambda}_{n}}(\bm{x}_{n}) =
B(\bm{x}_{n})\,\delta (\bm{x}_{n}', \bm{x}_{n}) \eeq
where
\begin{align}\label{dx}
	\delta (\bm{x}'_{n}, \bm{x}_{n}) = \frac1{n!}\,
	\sum_{w\in S_n} \prod_{k=1}^n \delta(x'_k-x_{w(k)})
\end{align}
is the kernel of the identity operator on the space
of the symmetric functions of coordinate variables $\bm{x}_{n} \in \R^n$.
The relation \rf{orth2} can be regarded as a completeness relation. By formal linear algebra manipulations we then see that
  \begin{align}\label{ort3}
  	A(\bm{\lambda}_{n}) = \frac{c_n(g |\bm{\omega})}{\hat{\mu}(\bm{\lambda}_{n})}
  	\ \ ;\ \
  	B(\bm{x}_{n}) = \frac{c_n(g |\bm{\omega})}{\mu(\bm{x}_{n})}
  \end{align}
where $c_n(g |\bm{\omega})$ is some constant which does not
depends on $\bm{\lambda}_{n}$ and $\bm{x}_{n}$.
One of the main results of this paper is the proof of the formulae \eqref{ort0}, \eqref{orth2}, which shows that actually $c_n(g|\bo) = 1$.

The two different recursive integral representations of the wave function
 \begin{align}\label{itx}
 	\Psi_{\bm{\lambda}_{n+1}}(\bm{x}_{n+1}) =&
 	\int_{\R^{n}}d\by_{n} \,
 	\Lambda(\bx_{n+1},\by_{n};\lambda_{n+1})\,\Psi_{\bm{\lambda}_{n}}(\bm{y}_{n}),\\[6pt]
 \label{itlam}
 	\Psi_{\bm{\lambda}_{n+1}}(\bm{x}_{n+1}) =&
 	\int_{\R^{n}}d\bm{\gamma}_{n} \,
 	\hat{\Lambda}(\bm{\lambda}_{n+1},\bm{\gamma}_{n};x_{n+1})\,
 	\Psi_{\bm{\gamma}_{n}}(\bm{x}_{n})
 \end{align}
together with orthogonality relations \eqref{ort0}, \eqref{orth2} allow us to express the kernel of raising operators $\Lambda_{n+1}(\lambda_{n+1})$ and  $\LLL_{n+1}(x_{n+1})$, as well as of the corresponding $Q$-operators via scalar products of wave functions. Indeed, multiplying~\eqref{itlam} by
$\overline{\Psi}_{\bm{\g}_{n}}(\bm{x}_{n})$ and integrating
over~$\bm{x}_{n}$ with the needed measure $\mu(\bx_{n})$ we obtain
\begin{align*}&
\int_{\R^{n}}d\bm{x}_{n}\,\mu(\bx_{n})\, \overline{\Psi}_{\bm{\g}_{n}}(\bm{x}_{n})\,
\Psi_{\bm{\lambda}_{n+1}}(\bm{x}_{n+1}) = \\[4pt]
&\int_{\R^{n}}d\bm{\nu}_{n} \,
\hat{\Lambda}(\bm{\lambda}_{n+1},\bm{\nu}_{n};x_{n+1})\,
\int_{\R^{n}}d\bm{x}_{n}\,\mu(\bx_{n})\, \overline{\Psi}_{\bm{\g}_{n}}(\bm{x}_{n})\,
\Psi_{\bm{\nu}_{n}}(\bm{x}_{n}) = \\[4pt]
&c_n(g|\bo) \int_{\R^{n}}d\bm{\nu}_{n} \,
\hat{\Lambda}(\bm{\lambda}_{n+1},\bm{\nu}_{n};x_{n+1})\,
\left(\hat{\mu}(\bm{\g}_{n})\right)^{-1}\,
\delta (\bm{\g}_{n}, \bm{\nu}_{n}) =\\[4pt]
&c_n(g|\bo)\left(\hat{\mu}(\bm{\g}_{n})\right)^{-1}\,
\hat{\Lambda}(\bm{\lambda}_{n+1},\bm{\g}_{n};x_{n+1})\,.
\end{align*}
Finally, the expression for the integral kernel in dual space is fixed
in the following form by the scalar product
\begin{align}\label{kdual}
\hat{\Lambda}(\bm{\lambda}_{n+1},\bm{\g}_{n};x_{n+1}) =
c_n^{-1}(g|\bo) \, \hat{\mu}(\bm{\g}_{n})\,
\int_{\R^{n}}d\bm{x}_{n}\,\mu(\bx_{n})\, \overline{\Psi}_{\bm{\g}_{n}}(\bm{x}_{n})\,
\Psi_{\bm{\lambda}_{n+1}}(\bm{x}_{n+1}).
\end{align}
In a similar way we obtain the expression for the integral kernel of $\QQ$-operator.
We start from the relation \eqref{QQPsi2}
\begin{align*}
\int_{\mathbb{R}^n} d\bm{\nu}_n \, \QQ(\bm{\l}_n, \bm{\nu}_n; x)\,\Psi_{\bm{\nu}_{n}}(\bm{x}_{n}) =
\K(x,\bx_{n}) \,\Psi_{\bm{\lambda}_{n}}(\bm{x}_{n})\,,
\end{align*}
where
\begin{equation}
\K(x,\bx_{n}) = \prod_{i=1}^n \K(x-x_{i}),
\end{equation}
multiply it by $\overline{\Psi}_{\bm{\g}_{n}}(\bm{x}_{n})$ and integrate
over $\bm{x}_{n}$ with the needed measure $\mu(\bx_{n})$
\begin{align*}
&\int_{\mathbb{R}^n} d\bm{x}_n \,\mu(\bx_{n})\,
\overline{\Psi}_{\bm{\g}_{n}}(\bm{x}_{n})
\K(x,\bx_{n}) \,\Psi_{\bm{\lambda}_{n}}(\bm{x}_{n}) = \\[4pt]
&\int_{\mathbb{R}^n} d\bm{\nu}_n \, \QQ(\bm{\l}_n, \bm{\nu}_n; x)\,
\int_{\mathbb{R}^n} d\bm{x}_n \,\mu(\bx_{n})\,\overline{\Psi}_{\bm{\g}_{n}}(\bm{x}_{n})\,
\Psi_{\bm{\nu}_{n}}(\bm{x}_{n}) = \\[4pt]
&c_n(g|\bo)\int_{\mathbb{R}^n} d\bm{\nu}_n \, \QQ(\bm{\l}_n, \bm{\nu}_n; x)\,\left(\hat{\mu}(\bm{\g}_{n})\right)^{-1}\,
\delta (\bm{\g}_{n}, \bm{\nu}_{n}) =\\[4pt]
&c_n(g|\bo) \left(\hat{\mu}(\bm{\g}_{n})\right)^{-1}\,
\QQ(\bm{\l}_n, \bm{\g}_n; x)\,,
\end{align*}
which gives
\begin{align}\label{hatQ}
\QQ(\bm{\l}_n, \bm{\g}_n; x) =
c_n^{-1}(g|\bo) \, \hat{\mu}(\bm{\g}_{n})\,
\int_{\mathbb{R}^n} d\bm{x}_n \,\mu(\bx_{n})\,
\overline{\Psi}_{\bm{\g}_{n}}(\bm{x}_{n})
\K(x,\bx_{n}) \,\Psi_{\bm{\lambda}_{n}}(\bm{x}_{n}).
\end{align}
In the next section we calculate the integrals on the right hand sides \eqref{kdual}, \eqref{hatQ} explicitly in a more general setting of complex periods and coupling constant
\begin{align}\label{I0}
&\int_{\R^{n}}d\bm{x}_{n}\,\mu(\bx_{n})\,
\Psi_{\bm{\gamma}_{n}}(-\bm{x}_{n})\,
\Psi_{\bm{\lambda}_{n+1}}(\bm{x}_{n+1}) = \frac{1}{d_n(g|\bo)}\,e^{2\pi\imath \left(\bbl_{n+1}-
\bbg_n \right)\,x_{n+1}}\,
\hat{K}(\bm{\lambda}_{n+1},\bm{\gamma}_n)\,,\\[8pt]
&\label{J0}\int_{\R^{n}}d\bm{x}_{n}\,\mu(\bx_{n})\,
\Psi_{\bm{\gamma}_{n}}(-\bm{x}_{n})\,
\K(x,\bx_{n})\,\Psi_{\bm{\lambda}_{n}}(\bm{x}_{n}) = \,\frac{1}{d_n(g|\bo)}\,e^{2\pi\imath\left(\bbl_{n}-
\bbg_n\right) x }\,
\hat{K}(\bm{\lambda}_{n},\bm{\gamma}_n)\,.
\end{align}
Note that assuming all parameters to be real
\begin{equation}
\Psi_{\bg_n}(-\bx_n) = \overline{\Psi}_{\bg_n}(\bx_n).
\end{equation}
Therefore the above formulas are consistent with an assumption  $c_n(g|\bo)=1$ and with the relations~\rf{I14d}.

The integral \rf{J0} plays a very important role in our work: it is used for the construction of a delta sequence converging to the scalar product \rf{ort0}.

\subsection{Statements} Our main result may be formulated as follows. Let the periods and the coupling constant be real satisfying the conditions \rf{I0a}. Denote by $L^2_{\mathrm{sym}}\left(\mathbb R^{n}, \mu\right)$ the Hilbert space of symmetric functions square integrable with respect to the measure $\mu(\bx_n)d\bx_n$ and by $L^2_{\mathrm{sym}}\left(\mathbb R^{n}, \hat{\mu}\right)$ the Hilbert space of symmetric functions square integrable with respect to  the measure $\hat{\mu}(\bl_n)d\bl_n$. The corresponding norms are
\begin{align}
\| \phi \|_{\mu}^2 = \int_{\R^{n}} d\bx_n \, \mu(\bx_n) \, | \phi (\bx_n) |^2, \\[4pt]
 \| \vf \|_{\hat{\mu}}^2 = \int_{\R^{n}} d\bl_n \, \hat{\mu}(\bl_n) \, | \vf (\bl_n) |^2.
\end{align}
For any finitely supported smooth functions $\vf(\bm{\lambda}_n)$ and $\phi(\bm{x}_n)$
define linear operators
\begin{align}
&(T \varphi) (\bm{x}_n) = \int d\bm{\lambda}_n \, \hat{\mu}(\bm{\lambda}_n) \, \Psi_{\bm{\lambda}_n} (\bm{x}_n) \, \varphi(\bm{\lambda}_n)\,, \\
&(S \phi) (\bm{\lambda}_n) = \int d\bm{x}_n \, \mu(\bm{x}_n) \,
\Psi_{\bm{\lambda}_n} (\bm{x}_n) \, \phi(\bm{x}_n)
\,.
\end{align}
Denote also by $R$ the total reflection operator in $\R^n$,
\beq \label{R} Rf(\bl_n)= f(-\bl_n), \qquad -\bl_n=(-\l_1,\ldots, -\l_n). \eeq

\begin{restatable}{thm}{Ttheorem}\label{T-theorem}
	For any smooth, compactly supported function $\varphi$ on $\mathbb{R}^{n}$,
	$T \varphi  \in L^2_{\mathrm{sym}}\left(\mathbb R^{n}, \mu\right)$ and the following relation holds
	\begin{align}\label{unit-T}
		\| T \varphi\|_{\mu}^2=\|{\varphi}\|_{\hat{\mu}}^2\,.
	\end{align}
As such, $T$ extends to a linear isometry $T \colon
	L^2_{\mathrm{sym}}\left(\mathbb R^{n}, \hat{\mu}\right)
	\mapsto L^2_{\mathrm{sym}}\left(\mathbb R^{n}, \mu\right)$.
\end{restatable}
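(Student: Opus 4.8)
The plan is to deduce the isometry \eqref{unit-T} from the orthogonality relation \eqref{ort0} with the constant $c_n(g|\bo)$ equal to $1$, read in the distributional sense. Since $\Psi_{\bl_n}(\bx_n)$ is symmetric in $\bl_n$ and $\hat\mu$ is symmetric, $T\varphi$ depends only on the symmetrization of $\varphi$, so it suffices to treat symmetric $\varphi$. Using $\overline{\Psi_{\bl_n}(\bx_n)}=\Psi_{\bl_n}(-\bx_n)$ for real parameters,
\begin{equation*}
\|T\varphi\|_\mu^2=\int_{\R^n}d\bx_n\,\mu(\bx_n)\int_{\R^n}d\bl_n\int_{\R^n}d\bl'_n\;\hat\mu(\bl_n)\hat\mu(\bl'_n)\,\varphi(\bl_n)\overline{\varphi(\bl'_n)}\;\Psi_{\bl'_n}(-\bx_n)\Psi_{\bl_n}(\bx_n).
\end{equation*}
If the $\bx_n$-integration were legitimate first, \eqref{ort0} would turn the inner integral into $\hat\mu(\bl_n)^{-1}\delta(\bl_n,\bl'_n)$, the symmetrized $\delta$ would collapse the $\bl'_n$-integral to $\overline{\varphi(\bl_n)}$, and $\|\varphi\|_{\hat\mu}^2$ would result. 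The obstruction is that $\mu(\bx_n)\Psi_{\bl'_n}(-\bx_n)\Psi_{\bl_n}(\bx_n)$ is not absolutely integrable in $\bx_n$: the Ruijsenaars-gauge eigenfunction $\sqrt{\mu(\bx_n)}\,\Psi_{\bl_n}(\bx_n)$ from \eqref{I4} has plane-wave asymptotics and is merely bounded rather than decaying. The overlap must therefore be realised as the limit of a delta sequence, and this is exactly the role of \eqref{J0}.

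For fixed real $x$ the integral \eqref{J0}, taken with $\bg_n$ replaced by $\bl'_n$, is absolutely convergent — the presence of the regulator $\K(x,\bx_n)=\prod_i\K(x-x_i)$, absent from the bare overlap, is what makes it so — and it has the explicit value $\tfrac{1}{d_n(g|\bo)}\,e^{2\pi\imath(\bbl_n-\bbl'_n)x}\,\hat{K}(\bl_n,\bl'_n)$. Multiplying by $\hat\mu(\bl'_n)$ and invoking the $S_2$ inversion formula, which via $S_2(\hat{g}^{\ast}|\hat\bo)=S_2(g^{\ast}|\bo)$ and $\hat\o_1\hat\o_2=(\o_1\o_2)^{-1}$ gives $d_n(g|\bo)\,d_n(\hat{g}^{\ast}|\hat\bo)=1$, identifies this with the dual kernel $\QQ(\bl_n,\bl'_n;x)$ of \eqref{I14d}. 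Consequently, inserting $\K(x,\bx_n)$ into the triple integral above and interchanging the now-legitimate $\bl'_n$- and $\bx_n$-integrations gives, for every $x$,
\begin{equation*}
\int_{\R^n}d\bx_n\,\mu(\bx_n)\,\K(x,\bx_n)\,|(T\varphi)(\bx_n)|^2=\int_{\R^n}d\bl_n\,\hat\mu(\bl_n)\,\varphi(\bl_n)\,\bigl(\QQ_n(x)\overline\varphi\bigr)(\bl_n).
\end{equation*}
It then remains to drive the regularization parameter to its degenerate value: one chooses it (along a ray in the complex $x$-plane, normalised by an $\bx_n$-independent scalar $\kappa(x)$ and, where needed, using the shift covariance $\Psi_{\bl_n+s\mathbf{1}}(\bx_n)=e^{2\pi\imath s\bbx_n}\Psi_{\bl_n}(\bx_n)$ from \eqref{I12'} to dispose of the linear part of the phase of $\K$) so that $\kappa(x)^{-n}\K(x,\bx_n)\to1$ locally uniformly with an integrable majorant. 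Then on the left dominated convergence yields $\|T\varphi\|_\mu^2$ — establishing in particular $T\varphi\in L^2_{\mathrm{sym}}(\R^n,\mu)$ — while on the right the same condition makes $\kappa(x)^{-n}\QQ_n(x)$ an approximate identity on symmetric test functions (this is forced by \eqref{QQPsi2}, $\QQ_n(x)\Psi_{\bl_n}(\bx_n)=\K(x,\bx_n)\Psi_{\bl_n}(\bx_n)$, so that $\kappa(x)^{-n}\hat\mu(\bl'_n)^{-1}\QQ(\bl_n,\bl'_n;x)$ is a delta sequence for $\hat\mu(\bl_n)^{-1}\delta(\bl_n,\bl'_n)$), so the right side tends to $\|\varphi\|_{\hat\mu}^2$. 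Equating the limits gives \eqref{unit-T}, and the isometric extension to all of $L^2_{\mathrm{sym}}(\R^n,\hat\mu)$ follows by density. (For $n=1$ everything collapses to Plancherel's theorem: $\mu\equiv\hat\mu\equiv1$, $\Psi_{\lambda_1}(x_1)=e^{2\pi\imath\lambda_1x_1}$, and \eqref{J0} becomes the elementary identity $\int_\R e^{2\pi\imath(\lambda_1-\lambda'_1)x_1}\K(x-x_1)\,dx_1$.)

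The analytic heart of the argument is the pair of limits above. One needs, first, the growth estimate that $\sqrt{\mu(\bx_n)}\,\Psi_{\bl_n}(\bx_n)$ is bounded — more precisely, its Harish-Chandra-type plane-wave asymptotics in $\bx_n$ — which is what renders \eqref{J0} absolutely convergent and underpins both the dominated-convergence step and the claim $T\varphi\in L^2_\mu$; and, second, a sufficiently sharp control of the degeneration of $\K(x,\bx_n)$ and the normalisation $\kappa(x)$ to guarantee (a) $\kappa(x)^{-n}\K(x,\bx_n)\to1$ with an integrable bound, and (b) that $\kappa(x)^{-n}\QQ_n(x)$ genuinely converges to the identity on symmetric test functions, i.e. that the regulated kernel is an honest approximate identity for the symmetrized delta. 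Both (a) and (b) rest on precise asymptotics of the double sine function $S_2$; everything else — the $d_nd_n=1$ bookkeeping via the $S_2$ inversion formula, the Fubini interchange for fixed $x$, and the density extension — is routine once these estimates are in hand.
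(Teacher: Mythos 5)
Your overall skeleton (regularize the norm with $\K(x,\bx_n)$, use the explicitly computable integral \eqref{J0} to evaluate the inner $\bx_n$-integral, recognize the dual kernel $\QQ$, and pass to a limit in which the regulator becomes a delta sequence) is the same as the paper's, but two essential steps are asserted rather than proved, and as stated they do not go through. First, the claim that $\kappa(x)^{-n}\hat\mu(\bl'_n)^{-1}\QQ(\bl_n,\bl'_n;x)$ is a delta sequence is not ``forced by'' the eigenvalue relation \eqref{QQPsi2}: an eigenvalue identity on the non-$L^2$ functions $\Psi_{\bl_n}$ says nothing about distributional convergence of the kernel on test functions. This convergence is precisely the analytic core of the paper (Proposition \ref{proposition-delta}): one must write the explicit value of \eqref{J} through double sine functions, extract the factors $(\lambda_k-\lambda'_i-\imath\epsilon)^{-1}$ that pinch at coinciding arguments, and invoke the multidimensional Sokhotski--Plemelj Lemma \ref{lemma:delta}, which also produces the correct normalization $\hat\mu^{-1}(\bl_n)$ (equivalently $c_n=1$). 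Second, your limiting scheme has only one regularization parameter. After you absorb the $\bx_n$-dependent part $e^{\pi\hat g\,\bbx_n}$ of the asymptotics of $\K(x,\bx_n)$ by the imaginary shift $\bl_n\to\bl_n+\frac{\imath\hat g}{2}\bm{e}$, the inner integral is the basic integral $J$ exactly on the boundary $|\Im(\lambda_i-\lambda'_j)|=\nu_g/2$ of the convergence region \eqref{2}, where absolute convergence (hence your Fubini step and the closed-form evaluation) is no longer available. This is why the paper inserts the additional damping $e^{2\pi\epsilon(\bbx_n-nx)}$ in \eqref{dseq2} and takes the iterated limit $\lim_{x\to+\infty}\lim_{\epsilon\to0+}$; the $\imath\epsilon$ also supplies the pole prescription needed for Lemma \ref{lemma:delta}. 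No choice of a ray in the complex $x$-plane with an $\bx_n$-independent $\kappa(x)$ removes this boundary problem, since the leading asymptotics of $\K(x,\bx_n)$ always carries a factor linear in $\bbx_n$ in the exponent.

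There is also a circularity in how you obtain $T\varphi\in L^2_{\mathrm{sym}}(\R^n,\mu)$. Dominated convergence on the left-hand side needs an integrable majorant, and the natural majorant $C\,\mu(\bx_n)|T\varphi(\bx_n)|^2$ is integrable only if $T\varphi\in L^2_\mu$ --- the very thing being proved. You prop this up with ``Harish-Chandra-type plane-wave asymptotics'' of $\sqrt{\mu}\,\Psi_{\bl_n}$, i.e.\ pointwise bounds on the wave function, which are not established in the paper (the authors explicitly note that rapid decay of $T\varphi$ is unknown beyond $n=1$). The paper breaks the circle differently: it first applies Fatou's lemma to the regularized norms to conclude $\|T\varphi\|_\mu^2\le\|\varphi\|_{\hat\mu}^2<\infty$, and only then uses dominated convergence with the uniform bound \eqref{Kmu-bound} on the regulator to identify the limit with $\|T\varphi\|_\mu^2$. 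Without the Fatou step (or a proof of the pointwise bounds you invoke), your argument establishes neither membership in $L^2$ nor the equality \eqref{unit-T}.
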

\newpage

\begin{restatable}{thm}{Stheorem}\label{S-theorem} \-
\begin{enumerate}
	\item[(i)] For any smooth, compactly supported function $\phi$ on $\mathbb{R}^{n}$,
	$S \phi  \in L^2_{\mathrm{sym}}\left(\mathbb R^{n}, \hat{\mu}\right)$ and the following relation holds
	\begin{align}\label{unit-S}
	\| S \phi\|_{\hat{\mu}}^2=\|{\phi}\|_{\mu}^2\,.
	\end{align}
	As such, $S$ extends to a linear isometry $S \colon
	L^2_{\mathrm{sym}}\left(\mathbb R^{n}, \mu\right)
	\mapsto L^2_{\mathrm{sym}}\left(\mathbb R^{n}, \hat{\mu}\right)$.
	
	\item[(ii)] Operators $S$  and $T$ establish unitary isomotphisms between the spaces 	$L^2_{\mathrm{sym}}\left(\mathbb R^{n}, \mu\right)$ and $L^2_{\mathrm{sym}}\left(\mathbb R^{n}, \hat{\mu}\right)$ and are connected  by the relation
	\beq\label{dseq5} S=RT^\ast =RT^{-1}, \eeq
	so that in $L^2$-sense
	\beq \label{dseq6} STf(\bl_n)=f(-\bl_n),\qquad TS g(\bx_n)=g(-\bx_n). \eeq
\end{enumerate}
\end{restatable}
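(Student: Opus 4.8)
The plan is to deduce Theorem~\ref{S-theorem} from Theorem~\ref{T-theorem} and the duality relation \eqref{duality} without any new analysis. For part~(i), the key observation is that $S$ is exactly the transform ``$T$'' of the \emph{dual} Ruijsenaars system — the one with periods $\hat{\bo}$ and coupling constant $\gg$. Indeed, \eqref{duality} gives $\Psi_{\bl_n}(\bx_n;g|\bo)=\Psi_{\bx_n}(\bl_n;\gg|\hat{\bo})$, hence
\begin{equation*}
(S\phi)(\bl_n)=\int d\bx_n\,\mu(\bx_n)\,\Psi_{\bx_n}(\bl_n;\gg|\hat{\bo})\,\phi(\bx_n),
\end{equation*}
which is the integral transform built from the wave functions of the $(\gg,\hat{\bo})$-system, provided one matches the weights: the coordinate measure of that system is $S_2(\imath z|\hat{\bo})S_2^{-1}(\imath z+\gg|\hat{\bo})=\hat{\mu}(z)$ by \eqref{I5d}, while its dual periods and coupling are $(\o_1,\o_2)$, $g$, so that its dual measure is $\mu$. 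Since $0<\gg<\hat{\o}_1+\hat{\o}_2$ precisely when $0<g<\o_1+\o_2$, this dual system satisfies the hypotheses \eqref{I0a}, and Theorem~\ref{T-theorem} applied to it gives $S\phi\in L^2_{\mathrm{sym}}(\R^n,\hat{\mu})$ together with $\|S\phi\|_{\hat{\mu}}^2=\|\phi\|_{\mu}^2$, so $S$ extends to an isometry. (Equivalently one dualizes \eqref{ort0} into \eqref{orth2}.)

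For part~(ii) I would use two elementary symmetries. First, both weights are even, $\mu(-\bx_n)=\mu(\bx_n)$ and $\hat{\mu}(-\bl_n)=\hat{\mu}(\bl_n)$, since the sign change merely permutes the pair factors over the set of ordered pairs; hence $R$ is unitary on both $L^2_{\mathrm{sym}}$ spaces with $R^{2}=\Id$. Second, combining \eqref{duality} with the reality relation $\Psi_{\bg_n}(-\bx_n)=\overline{\Psi_{\bg_n}(\bx_n)}$ — applied both in the original system and, with $-\bl_n$ placed in the argument slot, in the dual one — yields $\Psi_{-\bl_n}(\bx_n)=\Psi_{\bl_n}(-\bx_n)=\overline{\Psi_{\bl_n}(\bx_n)}$. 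The identity $\Psi_{-\bl_n}(\bx_n)=\Psi_{\bl_n}(-\bx_n)$ plus evenness of $\hat{\mu}$ turns the substitution $\bl_n\to-\bl_n$ inside $\int d\bl_n\,\hat{\mu}(\bl_n)\Psi_{\bl_n}(\bx_n)(R\varphi)(\bl_n)$ into $(T\varphi)(-\bx_n)$, i.e. $TR=RT$; the identity $\overline{\Psi_{\bl_n}(\bx_n)}=\Psi_{-\bl_n}(\bx_n)$, together with Fubini on the dense subspace of smooth compactly supported functions, gives $\langle Sg|f\rangle_{\hat{\mu}}=\langle g|TRf\rangle_{\mu}$ for all such $f,g$, that is $S^{\ast}=TR$, equivalently $S=RT^{\ast}$.

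Everything then assembles formally. Since $T$ is an isometry, $T^{\ast}T=\Id$, so $SS^{\ast}=(RT^{\ast})(TR)=R(T^{\ast}T)R=R^{2}=\Id$; combined with $S^{\ast}S=\Id$ from part~(i) this makes $S$ unitary, and then $T=S^{\ast}R$ is unitary as well, so $T^{\ast}=T^{-1}$ and $S=RT^{\ast}=RT^{-1}$, which is \eqref{dseq5}. Therefore $ST=RT^{\ast}T=R$ and, using $TR=RT$, $TS=T(RT^{-1})=(RT)T^{-1}=R$, i.e. $STf(\bl_n)=f(-\bl_n)$ and $TSg(\bx_n)=g(-\bx_n)$, which is \eqref{dseq6}, and $S$, $T$ are the claimed unitary isomorphisms; these last relations may alternatively be obtained directly from the orthogonality/completeness relations \eqref{ort0}, \eqref{orth2}. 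Granting Theorem~\ref{T-theorem}, the only points that need genuine care are the bookkeeping in part~(i) — verifying that the dual system falls under \eqref{I0a} and that duality swaps $(\mu,\hat{\mu})$ with $(\hat{\mu},\mu)$ while sending $S$ to the dual $T$ — together with the routine observation that every operator identity, first proved on smooth compactly supported functions, extends by continuity to all of $L^2_{\mathrm{sym}}$ because $T$, $S$, $R$ are bounded.
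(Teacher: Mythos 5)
Your proposal is correct and follows essentially the same route as the paper: part (i) is obtained by using the duality \eqref{duality} to identify $S$ with the $T$-transform of the dual $(\gg,\hat{\bo})$-system (the paper phrases this as the proof "repeating" that of Theorem \ref{T-theorem} after the swap $\bx_n\rightleftarrows\bl_n$, $g,\bo\rightleftarrows\gg,\hat{\bo}$, $\mu\rightleftarrows\hat{\mu}$), and part (ii) rests on the same relation $\overline{\Psi}_{\bl_n}(\bx_n)=\Psi_{-\bl_n}(\bx_n)=\Psi_{\bl_n}(-\bx_n)$, which gives $RS=T^{\ast}$ (your $S^{\ast}=TR$) and then the identities \eqref{dseq5}, \eqref{dseq6} by the same operator algebra with $T^{\ast}T=S^{\ast}S=\mathrm{Id}$. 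The only cosmetic difference is that you obtain $TS=R$ via $SS^{\ast}=\mathrm{Id}$ and $TR=RT$, whereas the paper applies the adjoint argument symmetrically to $S^{\ast}$; both are equivalent.
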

\cbk
In particular, we have the relations of orthogonality and completeness
\begin{align} \label{St1}&\int_{\R^{n}} d\bx_{n}\,
\mu(\bx_{n})\,
\overline{\Psi}_{\bm{\lambda}'_n}(\bm{x}_n)
\Psi_{\bm{\lambda}_{n}}(\bm{x}_{n}) =
\hat{\mu}^{-1}(\bm{\lambda}_{n})\,\delta (\bm{\lambda}'_{n}, \bm{\lambda}_{n}),\\[6pt]
 &\int_{\R^{n}} d\bl_{n}\,
\hat{\mu}(\bl_{n})\,
\overline{\Psi}_{\bm{\lambda}_n}(\bm{x}_n')
\Psi_{\bm{\lambda}_{n}}(\bm{x}_{n}) =
\mu^{-1}(\bm{x}_{n})\,\delta (\bm{x}_{n}', \bm{x}_{n}). \label{St2}
\end{align}
The proof of Theorem \ref{T-theorem} follows the strategy of the arguments from \cite{DKM}. Namely, the spectral properties of $Q$-operators provide us with two sets of explicitly computable integrals \rf{I} and \rf{J}, which are then used for the construction of  specific delta sequences converging to the scalar products between the wave functions. Note an important role of Lemma \ref{lemma:delta}, which gives us a multidimensional version of Sokhotski - Plemelj formula. It also appeared first in \cite{DKM}. The subtle point is to prove that this delta sequence indeed converges to a scalar product in a sense of distributions. Here, similar to \cite{DKM} we use Fatou's lemma.

The first statement of Theorem \ref{S-theorem} essentially follows from Theorem \ref{T-theorem} due to the remarkable duality symmetry of the eigenfunctions \eqref{duality}.

Unless the main theorems are proved for real parameters, most of intermediate statements are proved for complex parameters satisfying \rf{I0a} and \rf{I0b}. First of all, we have explicit calculations of certain integrals.
Set
\begin{align}\label{I}
	&I(\bm{\gamma}_{n}\,,\bm{\lambda}_{n+1}; x) = \int_{\R^{n}}d\bm{x}_{n}\,\mu(\bx_{n})\,
	\Psi_{\bm{\gamma}_{n}}(-\bm{x}_{n})\,
	\Psi_{\bm{\lambda}_{n+1}}(\bm{x}_{n}\,,x)\,,\\[6pt]
	\label{J}
	&J(\bm{\gamma}_{n}\,,\bm{\lambda}_{n}; x) = \int_{\R^{n}}d\bm{x}_{n}\,\mu(\bx_{n})\,
	\Psi_{\bm{\gamma}_{n}}(-\bm{x}_{n})\,
	\K(x,\bx_{n})\,\Psi_{\bm{\lambda}_{n}}(\bm{x}_{n})\,.
\end{align}
\begin{restatable}{prop}{theoremIJ}\label{theoremIJ}
	Under conditions \eqref{I0a}, \eqref{I0b} and
	\begin{align}\label{2}
		&\Im(\lambda_i - \lambda_j) = \Im(\gamma_i - \gamma_j) = 0,
		\qquad |\Im(\lambda_i - \gamma_j)| < \frac{\nu_g}{2}
	\end{align}
	the integrals \eqref{I}, \eqref{J} are absolutely convergent and explicit expressions
	are given by the formulae
	\begin{align}\label{I1}
		& I(\bm{\gamma}_{n}\,,\bm{\lambda}_{n+1}; x) =
		\,\frac{1}{d_n(g|\bo)}\,e^{2\pi\imath\left(\bbl_{n+1}-
			\bbg_n\right) x}\,
		\prod_{i=1}^{n+1}\prod_{j=1}^n\,\hat{K}(\lambda_i-\gamma_j)
		\,, \\[6pt]
		\label{J1}
		& J(\bm{\gamma}_{n}\,,\bm{\lambda}_{n}; x ) =
		\,\frac{1}{d_n(g|\bo)}\,e^{2\pi\imath\left(\bbl_{n}-
			\bbg_n\right) x }\,
		\prod_{i,j=1}^n \hat{K}(\lambda_i-\gamma_j)\,.
	\end{align}
\end{restatable}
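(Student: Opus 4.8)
The plan is to prove the two integral identities \eqref{I1}, \eqref{J1} by induction on $n$, exploiting the recursive structure \eqref{itx} of the wave functions together with the intertwining/eigenvalue properties of the $Q$- and $\Lambda$-operators. The base case $n=0$ (or $n=1$) reduces to a direct computation with plane waves and the kernel $\K$; in particular the $Q$-operator eigenvalue relation \eqref{QPsi} evaluated on a one-variable plane wave already encodes the appearance of the $\KK$-factors. For the inductive step I would first observe that the two integrals are closely related: setting $x$ appropriately and comparing \eqref{I} with \eqref{J}, the integral $J$ is exactly the kernel of the dual $Q$-operator $\QQ_n$ tested against wave functions (this is the computation \eqref{hatQ} performed in the introduction with $c_n = 1$), while $I$ is the analogous object for the dual raising operator $\LLL_{n+1}$ via \eqref{kdual}. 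So it suffices to establish one of them — say $J$ — and deduce the other from the recursion $\Psi_{\bl_{n+1}}(\bx_{n+1}) = \LL{n+1}(\l_{n+1})\Psi_{\bl_n}(\bx_n)$, which turns the $(n+1)$-dimensional integral $I$ into an $n$-dimensional integral of the $J$-type against the kernel $\Lambda(\bx_{n+1},\by_n;\l_{n+1})$, followed by a residue-free contour manipulation.

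For the core identity, the strategy is to insert the recursion $\Psi_{\bm{\lambda}_{n+1}}(\bx_n,x) = \int_{\R^n} d\by_n\,\Lambda(\bx_n,\by_n;\lambda_{n+1})\,\Psi_{\bm{\lambda}_n}(\by_n)$ into \eqref{I}, interchange the order of integration (justified by the absolute convergence claimed in the statement, which must be checked first via the asymptotics of $S_2$ recalled in Appendix \ref{S2} and the restriction \eqref{2} on $\Im(\lambda_i-\gamma_j)$), and perform the $\bx_n$-integration using the inductive hypothesis for $I(\bm{\gamma}_n,\bm{\lambda}_n;\,\cdot\,)$ or, more efficiently, using the known action \eqref{QPsi}/\eqref{QQPsi2} of the $Q$-operator on wave functions. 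Concretely, the integral over $\bx_n$ with the measure $\mu(\bx_n)$, the factor $\Psi_{\bm{\gamma}_n}(-\bx_n)$ and the kernel piece $\K(\bx_n,\by_n)\,e^{\const \lambda_{n+1}\bbx_n}$ is, up to the explicit exponential and normalization, precisely $[Q_n(\,\cdot\,)\Psi_{\bm{\gamma}_n}](\by_n)$ in disguise; applying \eqref{QPsi} replaces this integral by the eigenvalue $\prod_j \KK(\lambda_{n+1}-\gamma_j)$ times $\Psi_{\bm{\gamma}_n}(\by_n)$, and the remaining $\by_n$-integral is again of the inductive form. Tracking the normalization constants $d_n(g|\bo)$ through \eqref{dconst} and matching them with the $1/n!$ now absorbed into $\mu$ (as noted after \eqref{dconst}) yields exactly the prefactor $1/d_n(g|\bo)$ in \eqref{I1}, \eqref{J1}.

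I expect the main obstacle to be the analytic justification rather than the algebra: one must verify absolute convergence of the multidimensional integrals uniformly enough to legitimize the Fubini interchanges and the use of the operator identities (which were originally proved, in \cite{BDKK2,BDKK}, on suitable function classes), and one must track how the contour of integration can be kept on $\R^n$ throughout — this is where the strip condition $|\Im(\lambda_i-\gamma_j)| < \nu_g/2$ enters, ensuring no poles of the $S_2$-factors in $\K$ and $\mu$ cross the contour. A secondary subtlety is that the recursive computation naturally produces the answer as an iterated product $\prod_{k} \prod_j \KK(\lambda_k - \gamma_j)$ with one factor $\prod_j \KK(\lambda_{n+1}-\gamma_j)$ peeled off at each step; one must check this telescopes to the symmetric double product $\prod_{i=1}^{n+1}\prod_{j=1}^n \KK(\lambda_i-\gamma_j)$ claimed in \eqref{I1}, which follows from the symmetry of $\Psi_{\bm{\lambda}_{n+1}}$ in the $\lambda_i$ (the exchange relation for the $\Lambda$-operators) but should be stated explicitly. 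Once these points are settled, the identities \eqref{I1}, \eqref{J1} follow, and in particular the consistency $c_n(g|\bo)=1$ announced after \eqref{J0} is confirmed.
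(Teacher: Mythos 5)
Your overall skeleton --- insert the recursive representation of the wave function, interchange integrals (justified by absolute convergence), recognize the inner integral as the action of a $Q$-operator on $\Psi_{-\bm{\gamma}_n}$ and replace it by its eigenvalue, with the Fourier transform \eqref{initial} of $\K$ as the base case --- is exactly the paper's strategy, and your reduction of $I$ to $J$ reproduces the paper's first recurrence \eqref{IJ1}: inserting $\Psi_{\bm{\lambda}_{n+1}}=\Lambda_{n+1}(\lambda_{n+1})\Psi_{\bm{\lambda}_n}$ into \eqref{I} and applying \eqref{QPsi} with spectral parameter $-\lambda_{n+1}$ gives $I(\bm{\gamma}_n,\bm{\lambda}_{n+1};x)=\hat K(\lambda_{n+1},\bm{\gamma}_n)\,e^{2\pi\imath\lambda_{n+1}x}\,J(\bm{\gamma}_n,\bm{\lambda}_n;x)$. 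But as written your induction does not close. This step only expresses $I$ at level $n$ through $J$ at the \emph{same} level $n$; you never give a non-circular mechanism for computing $J(\bm{\gamma}_n,\bm{\lambda}_n;x)$ itself. The reference to \eqref{hatQ} ("$J$ is the kernel of $\QQ_n$ tested against wave functions, with $c_n=1$") cannot serve that purpose: \eqref{hatQ} is derived in the introduction from the orthogonality relation \eqref{ort0}, whose proof in turn rests on Proposition \ref{theoremIJ}, so invoking it here is circular. Likewise "perform the $\bx_n$-integration using the inductive hypothesis for $I(\bm{\gamma}_n,\bm{\lambda}_n;\cdot)$" does not parse --- $I$ by definition carries one more $\lambda$ than $\gamma$, and the inner $\bx_n$-integral is not a lower-level $I$ but precisely the $Q_n(-\lambda_{n+1})$ action you mention in your "more efficiently" clause.

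The missing idea is the paper's second recurrence \eqref{IJ2}. Inside $J(\bm{\gamma}_n,\bm{\lambda}_n;x)$ one inserts the recursion for $\Psi_{\bm{\lambda}_n}(\by_n)=\Lambda_n(\lambda_n)\Psi_{\bm{\lambda}_{n-1}}$ and observes that the external factor $\K(x,\by_n)$ combines with $\K(\bz_{n-1},\by_n)$ from the $\Lambda$-kernel into the full product $\K(\bz_n,\by_n)$ with $\bz_n=(\bz_{n-1},x)$, so the $\by_n$-integral is again a $Q_n(-\lambda_n)$ action on $\Psi_{-\bm{\gamma}_n}$; its eigenvalue comes out, and the remaining $\bz_{n-1}$-integral is an $I$-type integral of dimension $n-1$ \emph{with the roles of the $\lambda$'s and $\gamma$'s interchanged}, $J(\bm{\gamma}_n,\bm{\lambda}_n;x)=\frac{d_{n-1}(g|\bo)}{d_n(g|\bo)}\,\hat K(\lambda_n,\bm{\gamma}_n)\,e^{2\pi\imath\lambda_n x}\,I(\bm{\lambda}_{n-1},\bm{\gamma}_n;x)$. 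It is this alternation $I_n\to J_n\to I_{n-1}\to\cdots$, with the swap of spectral arguments at each second step, that drives the dimension down to the base case and automatically assembles the full double product $\prod_{i,j}\hat K(\lambda_i-\gamma_j)$ (no appeal to the $\lambda$-symmetry of $\Psi$ is needed for the telescoping). Your proposal also leaves the convergence estimates entirely implicit, whereas the paper needs them in a quantitative form (the bound \eqref{S-ineq} from the previous paper and the new inequality \eqref{R-ineq} proved in Appendix \ref{AppB}) both to define the integrals and to license the Fubini interchanges; this part you at least flag correctly as the analytic burden, but the absent second recurrence is a genuine gap in the argument.
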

Moreover, the integral \rf{J} delivers a delta sequence needed for the proof of Theorem~\ref{T-theorem}. Namely, set
\begin{align} \label{dseq2}
	\left( \Psi_{\bm{\lambda}'_n} |
	\Psi_{\bm{\lambda}_n} \right)_{\mu}^{x, \epsilon} =
	\int_{\R^{n}} d\bx_{n} \,
	\mu(\bx_{n})\,e^{2\pi \left(\varepsilon-\frac{\hat{g}}{2}\right)
	 (\underline{\bm{x}}_n-nx)}
	\,\K(x,\bx_{n})\,
	{\Psi}_{\bm{\lambda}'_n}(-\bm{x}_n)\,
	\Psi_{\bm{\lambda}_{n}}(\bm{x}_{n}).
\end{align}
\begin{restatable}{prop}{propdelta}\label{proposition-delta}
	Under conditions \rf{I0a}, \rf{I0b} and \rf{2} the following identity holds
	\begin{align}
		\lim_{x \rightarrow +\infty} \, \lim_{\epsilon \rightarrow 0+}
		\left( \Psi_{\bm{\lambda}'_n} | \Psi_{\bm{\lambda}_n} \right)_{\mu}^{x, \epsilon}
		= \frac{1}{\hat{\mu}(\bm{\lambda}_{n})}\,
		\delta (\bm{\lambda}_{n}, \bm{\lambda}'_{n})
	\end{align}
	in a sense of distributions.
\end{restatable}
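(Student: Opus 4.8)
The plan is to start from the explicit formula \eqref{J1} of Proposition~\ref{theoremIJ}, which (after the substitution $\by_n \to -\bx_n$ inside the wave function and tracking the exponential factors) identifies the integral defining $\left( \Psi_{\bm{\lambda}'_n} | \Psi_{\bm{\lambda}_n} \right)_{\mu}^{x, \epsilon}$ with a shifted version of $J(\bm{\lambda}'_n, \bm{\lambda}_n; x)$: concretely, the extra factor $e^{2\pi(\varepsilon - \hat g/2)(\underline{\bx}_n - nx)}$ can be absorbed into a shift of the spectral variables of the wave functions, so that
\begin{align*}
	\left( \Psi_{\bm{\lambda}'_n} | \Psi_{\bm{\lambda}_n} \right)_{\mu}^{x, \epsilon}
	= \frac{1}{d_n(g|\bo)}\, e^{2\pi\imath(\bbl_n - \bbl'_n)x}\,
	e^{2\pi \varepsilon \cdot (\text{linear in }x)}\,
	\prod_{i,j=1}^n \KK\bigl(\lambda_i - \lambda'_j - \tfrac{\imath \varepsilon}{\cdots}\bigr)
\end{align*}
or some such closed form; the precise bookkeeping of where the $\imath\varepsilon$ lands in the arguments of $\KK$ is the first routine-but-delicate step. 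The key structural point is that $\KK(\l) = S_2^{-1}(\imath\l + \hat g/2|\hat\bo) S_2^{-1}(-\imath\l + \hat g/2|\hat\bo)$ has, as a function of $\l$, a simple pole approaching the real axis as $\varepsilon \to 0+$, with residue governed by the functional equation of $S_2$; this is exactly the one-dimensional Sokhotski--Plemelj phenomenon, and the prefactor $e^{2\pi\imath(\bbl_n - \bbl'_n)x}$ together with $x \to +\infty$ is what kills the principal-value part and extracts the delta.

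The central tool is the announced Lemma~\ref{lemma:delta} (the multidimensional Sokhotski--Plemelj formula), which I would invoke as follows. After the reduction above, $\left( \Psi_{\bm{\lambda}'_n} | \Psi_{\bm{\lambda}_n} \right)_{\mu}^{x,\epsilon}$ is (up to the normalization $1/d_n$ and the $\varepsilon$-regular exponential, which tends to $1$) a product $\prod_{i,j} \KK(\lambda_i - \lambda'_j - \imath\varepsilon a_{ij})$ times $e^{2\pi\imath x(\bbl_n - \bbl'_n)}$, integrated against a test function of $\bl'_n$. One first isolates the singular behaviour: write each $\KK(\l - \imath\varepsilon)$ as (smooth part)$\times \frac{1}{\l - \imath\varepsilon \cdot(\text{something})}$ using the pole structure of $S_2^{-1}$ at the appropriate point, so that $\prod_{i,j}\KK$ develops, in the $\varepsilon \to 0+$ limit, a sum over those index pairs that can simultaneously be made singular — which, because $\Im(\lambda'_i - \lambda'_j) = \Im(\lambda_i - \lambda_j) = 0$, forces a matching $\lambda_i = \lambda'_{w(i)}$ for a permutation $w$. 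This is precisely the combinatorial mechanism behind the sum over $S_n$ in $\delta(\bl_n, \bl'_n)$; Lemma~\ref{lemma:delta} packages exactly this, turning $\lim_{\varepsilon\to 0+}\prod \frac{1}{\l_i - \l'_j - \imath\varepsilon(\cdots)}$ against the Fourier kernel $e^{2\pi\imath x(\cdots)}$, in the further limit $x \to +\infty$, into $\frac1{n!}\sum_{w} \prod_k \delta(\lambda_k - \lambda'_{w(k)})$ up to an explicit Jacobian/normalization factor.

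The matching of constants is then a check that the residual prefactor — the product of the $S_2$-normalization constants coming out of the residues of $\KK$, the factor $1/d_n(g|\bo)$, and the normalization in Lemma~\ref{lemma:delta} — collapses to exactly $1/\hat\mu(\bl_n)$. Here one uses the reflection/inversion identity for $S_2$ and the definitions \eqref{I5d}, \eqref{I6d} of $\hat\mu$; I expect the product $\prod_{i\ne j}\hat\mu(\lambda_i - \lambda'_j)$-type terms that survive the delta-collapse (i.e. the off-diagonal pairs, after $\lambda'_{w(j)} = \lambda_j$) to assemble precisely $\hat\mu(\bl_n)^{-1}$, so that the claimed right-hand side appears with no spurious $c_n$. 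The delta-sequence nature — convergence in $\mathcal D'$ rather than pointwise — is handled by pairing against a fixed test function $f(\bl'_n)$ throughout, swapping the $\varepsilon\to0+$ and the $\bl'_n$-integration (justified because \eqref{J1} gives an absolutely convergent, and uniformly-in-$\varepsilon$ dominated, integrand once $\varepsilon$ is in a fixed small interval — this is the place a dominated-convergence argument, or Fatou as advertised in the introduction, is needed), and only then taking $x \to +\infty$ with the elementary fact that $\int e^{2\pi\imath x t} \frac{g(t)}{t - \imath 0}\,dt \to 2\pi\imath\, g(0)$ as $x \to +\infty$ for $g$ smooth compactly supported.

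The main obstacle I anticipate is not the algebra of constants but the \emph{order of limits} and the \emph{uniformity of domination}. Proposition~\ref{theoremIJ} gives absolute convergence of $J$ only under the strict inequality $|\Im(\lambda_i - \gamma_j)| < \nu_g/2$; in the delta sequence we need $\gamma = \lambda'$ real, i.e. we are sitting at the \emph{boundary} $\Im(\lambda_i - \lambda'_j) = 0$ where the integral representation \eqref{J1} of $J$ is still fine but the approach $\varepsilon\to0+$ pushes the $\KK$-arguments toward their poles, so the integrand in \eqref{dseq2} is \emph{not} uniformly dominated as $\varepsilon\to0$ — that is the whole point, and it is why one must first do the $\bx_n$-integration (getting the closed form \eqref{J1} with $\imath\varepsilon$-shifted arguments) and only afterwards pair with the test function in $\bl'_n$. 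Making this rigorous — establishing that $\left(\Psi_{\bl'_n}|\Psi_{\bl_n}\right)_\mu^{x,\varepsilon}$, as a distribution in $\bl'_n$, is for each fixed $x,\varepsilon$ honestly given by the closed-form right side of \eqref{J1}, and that one may then interchange the test-function integral with $\lim_{\varepsilon\to0+}$ using the local integrability of $\prod 1/(\lambda_i-\lambda'_j-\imath\varepsilon(\cdots))$ near its poles (which in turn rests on Lemma~\ref{lemma:delta}) — is the subtle step flagged in the introduction, and the one I would write out most carefully.
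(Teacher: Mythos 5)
Your proposal follows essentially the same route as the paper: absorb the regularizing exponential as a spectral shift $\bm{\lambda}_n \to \bm{\lambda}_n + \bigl(\tfrac{\imath \hat{g}}{2} - \imath\epsilon\bigr)\bm{e}$ so that the pairing equals the closed form \eqref{J1} (the $x$-dependent prefactors cancel exactly, so no residual $\varepsilon$-exponential survives), then use $\hat{K}\bigl(\lambda_i-\lambda'_j+\tfrac{\imath \hat{g}}{2}-\imath\epsilon\bigr)=\hat{\mu}^{-1}(\lambda_i-\lambda'_j-\imath\epsilon)$ and the regularity of $z^{-1}S_2(z|\hat{\bo})$ at $z=0$ to factor out $\prod_{i,j}(\lambda_i-\lambda'_j-\imath\epsilon)^{-1}$, apply Lemma \ref{lemma:delta}, and match constants via $\lim_{z\to 0} z\,S_2^{-1}(z|\hat{\bo})$ and the reflection formula, exactly as in the paper's proof. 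The distributional and limit-interchange subtleties you flag are handled inside Lemma \ref{lemma:delta} (Appendix \ref{App-delta}), while the Fatou/dominated-convergence issues belong to the proof of Theorem \ref{T-theorem} rather than to this proposition.
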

Taking in mind the latter statements, we conjecture that bilinear analogs
\begin{align}&\int_{\R^{n}} d\bx_{n}\,
	\mu(\bx_{n})\,
	{\Psi}_{\bm{\lambda}'_n}(-\bm{x}_n)
	\Psi_{\bm{\lambda}_{n}}(\bm{x}_{n}) =
	\hat{\mu}^{-1}(\bm{\lambda}_{n})\,\delta (\bm{\lambda}'_{n}, \bm{\lambda}_{n}),\\[6pt]
	&\int_{\R^{n}} d\bl_{n}\,
	\hat{\mu}(\bl_{n})\,
	{\Psi}_{-\bm{\lambda}_n}(\bm{x}'_n)
	\Psi_{\bm{\lambda}_{n}}(\bm{x}_{n}) =
	\mu^{-1}(\bm{x}_{n})\,\delta (\bm{x}'_{n}, \bm{x}_{n})
\end{align}
of the relations \rf{St1} and \rf{St2} are valid for complex periods and complex coupling constant. However our analytical abilities are not sufficient for their proof so far.

\setcounter{equation}{0}
\section{Basic integrals}

In this section we calculate the integrals \eqref{I}, \eqref{J}, that is we prove Proposition \ref{theoremIJ}. Note that in the first integral we single out the dependence on $x = x_{n+1}$
in eigenfunction $\Psi_{\bm{\lambda}_{n+1}}(\bm{x}_{n+1}) = \Psi_{\bm{\lambda}_{n+1}}(\bm{x}_{n}\,, x)$ in order to emphasize that $x$ plays the role of external parameter which is not integrated.
 The  calculation of the scalar product of two eigenfunctions and
the calculation of the integral kernels of raising operators and $Q$-operators
in dual space can be actually reduced to the calculation of integrals \rf{I}, \rf{J}.

The iterative construction of eigenfunctions allows
to derive the recurrence relations for considered integrals, and it is the main
step of the proof. The derivation is heavily based on the fact that
integrals \eqref{I}, \eqref{J} contain $Q$-operator in a hidden form.
There are the following recurrence formulae connecting basic integrals
\begin{align}\label{IJ1}
& I(\bm{\gamma}_{n}\,,\bm{\lambda}_{n+1}; x ) =
\hat{K}(\lambda_{n+1}\,,\bm{\gamma}_{n})\,
e^{2\pi\imath\lambda_{n+1} x}\,
J(\bm{\gamma}_{n}\,,\bm{\lambda}_{n}; x )\,, \\[6pt]
\label{IJ2}
& J(\bm{\gamma}_{n}\,,\bm{\lambda}_{n}; x) =
\frac{d_{n-1}(g|\bo)}{d_n(g|\bo)}\,\hat{K}(\lambda_{n}\,,\bm{\gamma}_{n})\,
e^{2\pi\imath \lambda_{n}{x}}\,
I(\bm{\lambda}_{n-1}\,,\bm{\gamma}_{n}; x)\,.
\end{align}
This system has iterative structure and can be solved using \cbv the \cbk initial condition
\begin{align}\label{initial}
J(\gamma\,,\lambda; x) =
\int_{\R} dy\, e^{-2\pi \imath \gamma y}\,
\K(x - y)\,e^{2\pi \imath \lambda y } = \frac{1}{d_1(g|\bo)}\,
\hat{K}(\lambda-\gamma)\,e^{2\pi\imath(\lambda-\gamma) x}\,.
\end{align}
It is equivalent to the Fourier transform of the function $K(x)$ \eqref{K-fourier}. Then by induction the general solution of the system \eqref{IJ1}, \eqref{IJ2} is given by the formulae
\begin{align}
& I(\bm{\gamma}_{n}\,,\bm{\lambda}_{n+1}; x) =
\,\frac{1}{d_n(g|\bo)}\,e^{2\pi\imath\left(\underline{\bm{\lambda}}_{n+1}-
\underline{\bm{\gamma}}_n\right) x}\,
\prod_{i=1}^{n+1}\prod_{j=1}^n\,\hat{K}(\lambda_i-\gamma_j)
\,, \\[4pt]
& J(\bm{\gamma}_{n}\,,\bm{\lambda}_{n}; x ) =
\,\frac{1}{d_n(g|\bo)}\,e^{2\pi\imath\left(\underline{\bm{\lambda}}_{n}-
\underline{\bm{\gamma}}_n\right) x }\,
\prod_{i,j=1}^n \hat{K}(\lambda_i-\gamma_j)\,.
\end{align}

\subsection{ Recurrence relations}

Let us start from the derivation of \eqref{IJ1}.  Assuming $x_{n + 1} = x$ in definition \eqref{I} we have
\begin{multline*}
I(\bm{\gamma}_{n}\,,\bm{\lambda}_{n+1}; x) =
\int_{\R^{n}}d\bm{x}_{n}\,\mu(\bx_{n})\, \Psi_{-\bm{\gamma}_{n}}(\bm{x}_{n})\times\\[4pt]
d_n(g|\bo)\,\int_{\R^{n}}d\by_{n}\,\mu(\by_{n})\,
e^{2\pi\imath\lambda_{n+1}\left(\underline{\bm{x}}_{n+1}-\underline{\bm{y}}_{n}\right)}
\,\K(\bx_{n+1},\by_{n})\,
\Psi_{\bm{\lambda}_{n}}(\bm{y}_{n}) = \\[4pt]
e^{2\pi\imath\lambda_{n+1} x}\,\int_{\R^{n}}d\by_{n}\,\mu(\by_{n})\,
\K(x,\by_{n})\,\Psi_{\bm{\lambda}_{n}}(\bm{y}_{n})\,\times \\[4pt]
d_n(g|\bo)\,\int_{\R^{n}}d\bm{x}_{n}\,\mu(\bx_{n})\, \Psi_{-\bm{\gamma}_{n}}(\bm{x}_{n})\,
e^{2\pi\imath(-\lambda_{n+1})\left(\underline{\bm{y}}_{n}-\underline{\bm{x}}_{n}\right)}
\K(\bx_{n},\by_{n}) = \\[4pt]
\hat{K}(-\lambda_{n+1}\,,-\bm{\gamma}_{n})\,
e^{2\pi\imath\lambda_{n+1}{x}}\,
\int_{\R^{n}}d\by_{n}\,\mu(\by_{n})\,
\Psi_{-\bm{\gamma}_{n}}(\bm{y}_{n})\,
\K(x,\by_{n})\,\Psi_{\bm{\lambda}_{n}}(\bm{y}_{n})\,.
\end{multline*}
First, we used the relation
$\Psi_{\bm{\gamma}_{n}}(-\bm{x}_{n}) =
\Psi_{-\bm{\gamma}_{n}}(\bm{x}_{n})$ and the iterative formula for the eigenfunction $\Psi_{\bm{\lambda}_{n+1}}(\bm{x}_{n+1})$. Then in $\bm{x}_{n}$-integral we recognize the formula for the action of the operator $Q_n(-\lambda_{n+1})$ on the eigenfunction
$\Psi_{-\bm{\gamma}_{n}}(\bm{x}_{n})$. According to \eqref{QQPsi1} it produces the eigenvalue $\hat{K}(-\lambda_{n+1}\,,-\bm{\gamma}_{n})$. The last expression coincides with the formula \eqref{IJ1} due to the symmetry of $\hat{K}$-functions.

Derivation of the formula \eqref{IJ2} is based on the same steps: on the first step we use the iterative structure of eigenfunctions and on the second step in $\bm{y}_{n}$-integral we
recognize action of the operator $Q_{n}(-\lambda_{n})$ on the corresponding eigenfunction
\begin{align*}
&J(\bm{\gamma}_{n}\,,\bm{\lambda}_{n}; x) = \int_{\R^{n}}d\bm{y}_{n}\,\mu(\by_{n})
\, \Psi_{-\bm{\gamma}_{n}}(\bm{y}_{n})\,
\K(x,\by_{n})\,\Psi_{\bm{\lambda}_{n}}(\bm{y}_{n}) = \\[6pt] &\int_{\R^{n}}d\by_{n}\,\mu(\by_{n})\,\Psi_{-\bm{\gamma}_{n}}(\bm{y}_{n})\,
\K(x,\by_{n})\,\times \\[6pt]
&d_{n-1}(g|\bo)\,\int_{\R^{n-1}}d\bm{z}_{n-1}\,\mu(\bz_{n-1})\,
e^{2\pi\imath\lambda_{n}\left(\underline{\bm{y}}_{n}-\underline{\bm{z}}_{n-1}\right)}
\K(\bz_{n-1},\by_{n})\,
\Psi_{\bm{\lambda}_{n-1}}(\bm{z}_{n-1}) = \\[6pt]
&e^{2\pi\imath \lambda_{n} x}\,
\frac{d_{n-1}(g|\bo)}{d_n(g|\bo)}\,
\int_{\R^{n-1}}d\bz_{n-1}\,\mu(\bz_{n-1})\,
\Psi_{\bm{\lambda}_{n-1}}(\bm{z}_{n-1})\,\times \\[6pt]
& d_n(g|\bo)\,\int_{\R^{n}}d\bm{y}_{n}\,\mu(\by_{n})\,
\Psi_{-\bm{\gamma}_{n}}(\bm{y}_{n})\,
e^{2\pi\imath(-\lambda_{n})
\left(\underline{\bm{z}}_{n}-\underline{\bm{y}}_{n}\right)}
\K(\bz_{n},\by_{n}) = \\[6pt]
& \hat{K}(-\lambda_{n}\,,-\bm{\gamma}_{n})\,
e^{2\pi\imath \lambda_{n} x}\,
\frac{d_{n-1}(g|\bo)}{d_n(g|\bo)}\,
\int_{\R^{n-1}}d\bz_{n-1}\,\mu(\bz_{n-1})\,
\Psi_{\bm{\lambda}_{n-1}}(\bm{z}_{n-1})
\Psi_{-\bm{\gamma}_{n}}(\bm{z}_{n})\,,
\end{align*}
where we joined variables $x$ and $\bm{z}_{n-1}$
in one tuple $\bz_n = (\bz_{n-1},x)$.

Note that during the calculations we have interchanged the initial order of integrals and all these steps should be justified.

\subsection{ Convergence of basic integrals }

First, consider the integral
\begin{equation}
J(\bm{\gamma}_{n}\,,\bm{\lambda}_{n}; x) = \int_{\R^{n}}d\bm{x}_{n}\,\mu(\bx_{n})\,
\Psi_{\bm{\gamma}_{n}}(-\bm{x}_{n})\,
\K(x,\bx_{n})\,\Psi_{\bm{\lambda}_{n}}(\bm{x}_{n}).
\end{equation}
Denote by $\bm{y}_k, \bz_k$ vectors with $k$ components
\begin{equation}
\bm{y}_k = \bigl( y_1^{(k)}, \ldots, y_k^{(k)} \bigr), \qquad \bm{z}_k = \bigl( z_1^{(k)}, \ldots, z_k^{(k)} \bigr)
\end{equation}
and write the integral $J(\bg_n, \bl_n;x)$ in its full form using
\begin{equation}\label{Psi-}
\Psi_{\bm{\gamma}_n}(-\bm{x}_n) = \Psi_{-\bm{\gamma}_n}(\bm{x}_n)
\end{equation}
together with iterative representation of eigenfunctions \eqref{I12}
\begin{equation}\label{Jfull}
\begin{aligned}
J &= C_{J} \int d\bm{x}_n \prod_{j = 1}^{n - 1} \bigl[ d\bm{y}_j \, d\bm{z}_j \bigr] \, \mu(\bm{x}_n) \, K(x, \bm{x}_n) \, e^{2\pi \imath (\lambda_n - \gamma_n) \bbx_n}  \\[6pt]
&\times K(\bm{x}_n, \bm{y}_{n - 1}) \, e^{2\pi \imath (\lambda_{n - 1} - \lambda_n) \bby_{n - 1}} \prod_{j = 2}^{n - 1} \mu(\bm{y}_j)  \, e^{2\pi \imath (\lambda_{j - 1} - \lambda_j) \bby_{j - 1}} \, K(\bm{y}_j, \bm{y}_{j - 1}) \\[6pt]
&\times K(\bm{x}_n, \bm{z}_{n - 1}) \, e^{2\pi \imath (\gamma_{n} - \gamma_{n - 1}) \bbz_{n - 1}} \prod_{j = 2}^{n - 1} \mu(\bm{z}_j)  \, e^{2\pi \imath (\gamma_{j} - \gamma_{j - 1}) \bbz_{j - 1}} \, K(\bm{z}_j, \bm{z}_{j - 1}).
\end{aligned}
\end{equation}
Here all integrals are over $\R$ and $C_J$ contains all constants $d_k$ from eigenfunctions.

Let us prove that the integral \eqref{Jfull} is absolutely convergent assuming
\begin{equation}\label{assump}
\Im(\lambda_i - \lambda_j) = \Im(\gamma_i - \gamma_j) = 0, \qquad |\Im(\lambda_i - \gamma_j)| = \frac{\nu_g}{2} \delta < \frac{\nu_g}{2}
\end{equation}
for all $i,j$. Denote the integrand of \eqref{Jfull} by $F$. Also define function $S_n$ by the following recurrence relation
\begin{equation}
\begin{aligned}
S_n (\bm{y}_1, \dots, \bm{y}_n ) = \sum_{\substack{i, j = 1 \\ i \not= j}}^n \bigl| y_i^{(n)} - y_j^{(n)} \bigr| &- \sum_{i = 1}^{n} \sum_{j = 1}^{n - 1} \, \bigl| y_i^{(n)} - y_j^{(n - 1)} \bigr| \\
&+ S_{n - 1} (\bm{y}_1, \dots,  \bm{y}_{n - 1} )
\end{aligned}
\end{equation}
with $S_1 = 0$. Using bounds \eqref{Kmu-bound}
\begin{equation}\label{mu-K}
|\mu(x) | \leq C e^{\pi \nu_g |x|}, \qquad |K(x)| \leq C e^{- \pi \nu_g |x|}
\end{equation}
and triangle inequalities $|x-x_i| \geq |x_i| - |x|$ we have the estimate for the integrand
\begin{equation}
\begin{aligned}
|F| \leq C_F \exp \pi \nu_g \Biggl[ &(\delta - 1) \| \bm{x}_n \| - \sum_{\substack{i, j = 1 \\ i \not= j}}^n | x_i - x_j | \\[6pt]
&+ S_n(\bm{y}_1, \ldots, \bm{y}_{n - 1}, \bm{x}_n) + S_n(\bm{z}_1, \ldots, \bm{z}_{n - 1}, \bm{x}_n)  \Biggr]
\end{aligned}
\end{equation}
with some $C_F(x, g, \bm{\omega})$. Here and in what follors we denote
\begin{equation}
\| \bx_n \| = |x_1 | + \ldots + |x_n|.
\end{equation}
In our previous paper we proved the following bound for $S_n$
\begin{equation}\label{S-ineq}
S_n (\bm{y}_1, \dots, \by_{n - 1}, \bm{x}_n ) \leq \frac{1}{2} \sum_{\substack{i, j = 1 \\ i \not= j}}^n \bigl| x_i - x_j \bigr| + \epsilon \| \bm{x}_n \| - \frac{\epsilon}{(n - 1)! e} \sum_{k = 1}^{n - 1} \| \bm{y}_k \|
\end{equation}
that holds for any $\eps \in [0, 2(n - 1)]$, see \cite[Lemma 2]{BDKK2} (note that coefficients $c_n$ appearing in \textit{loc. cit.} are bounded as $c_n \leq (n - 1)! e$). Applying this inequality twice we arrive at
\begin{equation}
|F| \leq C_F \exp \pi \nu_g \biggl[ (2 \epsilon + \delta - 1) \| \bm{x}_n \| - \frac{\epsilon}{(n - 1)! e} \sum_{k=1}^{n - 1} ( \| \bm{y}_k \| + \| \bm{z}_k \| ) \biggr].
\end{equation}
Since $\delta < 1$, we can take small $\epsilon>0$, such that $2\epsilon + \delta - 1< 0$. Therefore, the integral $J$ is absolutely convergent.

Next, consider the integral
\begin{equation}
I(\bm{\gamma}_{n}\,,\bm{\lambda}_{n+1}; x) = \int_{\R^{n}}d\bm{x}_{n}\, \mu(\bx_{n})\,
\Psi_{\bm{\gamma}_{n}}(-\bm{x}_{n})\, \Psi_{\bm{\lambda}_{n+1}}(\bm{x}_{n},x)
\end{equation}
and write it in its full form using again \eqref{Psi-} and eigeinfunctions iterative representation~\eqref{I12}
\begin{equation}\label{Ifull}
\begin{aligned}
I &= C_{I} \, e^{2\pi \imath \l_{n + 1}x} \int d\bm{x}_n d\bm{y}_n \prod_{j = 1}^{n - 1} \bigl[ d\bm{y}_j \, d\bm{z}_j \bigr] \, \mu(\bm{x}_n) \, e^{2\pi \imath (\lambda_{n + 1} - \gamma_n) \bbx_n} \, K(x, \bm{y}_n)  \\[6pt]
&\times K(\bm{x}_n, \bm{y}_{n}) \, e^{2\pi \imath (\lambda_{n} - \lambda_{n + 1}) \bby_{n}} \prod_{j = 2}^{n} \mu(\bm{y}_j)  \, e^{2\pi \imath (\lambda_{j - 1} - \lambda_j) \bby_{j - 1}} \, K(\bm{y}_j, \bm{y}_{j - 1}) \\[6pt]
&\times K(\bm{x}_n, \bm{z}_{n - 1}) \, e^{2\pi \imath (\gamma_{n} - \gamma_{n - 1}) \bbz_{n - 1}} \prod_{j = 2}^{n - 1} \mu(\bm{z}_j)  \, e^{2\pi \imath (\gamma_{j} - \gamma_{j - 1}) \bbz_{j - 1}} \, K(\bm{z}_j, \bm{z}_{j - 1}).
\end{aligned}
\end{equation}
Here $C_I$ again contains all unimportant constants. Assuming the same conditions \eqref{assump} let us prove that this integral is absolutely convergent. For this we need one more inequality which we prove in Appendix \ref{AppB}. Define
\begin{equation}
R_n(\bm{y}_{n}, \bm{x}_{n}) = \sum_{\substack{i, j = 1 \\ i < j}}^{n} | x_i - x_j | + \sum_{\substack{i, j = 1 \\ i < j}}^{n} | y_i - y_j | - \sum_{i, j = 1}^{n} \, | x_i - y_j |.
\end{equation}
Then from Corollary \ref{cor:R-ineq} in Appendix \ref{AppB} we have for any $\eps \in[0,1]$
\begin{equation}\label{R-ineq}
R_n \leq \epsilon ( \|\bm{y}_n \| - \| \bm{x}_n \|).
\end{equation}

Denote the integrand of \eqref{Ifull} by $G$. Using bounds \eqref{mu-K} we arrive at the estimate
\begin{equation}
\begin{aligned}
|G| & \leq C_G \exp \pi \nu_g \Biggl[  \delta \| \bm{x}_n \| - \| \bm{y}_n \| + R_n(\bm{y}_n, \bm{x}_n) - \sum_{\substack{i, j = 1 \\ i < j}}^n | x_i - x_j |\\[6pt]
&- \sum_{\substack{i, j = 1 \\ i < j}}^n \bigl| y^{(n)}_i - y^{(n)}_j \bigr|
+ S_n(\bm{y}_1, \ldots, \bm{y}_n) + S_n(\bm{z}_1, \ldots, \bm{z}_{n - 1}, \bm{x}_n) \Biggr]
\end{aligned}
\end{equation}
with some $C_G(x, g, \bm{\omega})$. Using inequalities \eqref{S-ineq} with $\epsilon_1$ (twice) and \eqref{R-ineq} with $\epsilon_2$ we obtain
\begin{equation}
\begin{aligned}
|G|  \leq C_G \exp \pi \nu_g \Biggl[  (\delta + \epsilon_1 - \epsilon_2) \| \bm{x}_n \| & + (\epsilon_1 + \epsilon_2 - 1) \| \bm{y}_n \| \\[6pt]
&- \frac{\epsilon_1}{(n - 1)! e}\sum_{k = 1}^{n - 1}(\| \bm{y}_k \| + \| \bm{z}_{k} \| ) \Biggr].
\end{aligned}
\end{equation}
To have integrable function from the right we need to choose $\epsilon_1, \epsilon_2 > 0$, such that
\begin{equation}
\delta + \epsilon_1 - \epsilon_2 < 0, \qquad \epsilon_1 + \epsilon_2 - 1 < 0.
\end{equation}
Since $\delta \in [0,1)$, it is true, for example, for
\begin{equation}
\epsilon_1 = \frac{1 - \delta}{6}, \qquad \epsilon_2 = \frac{2 \delta + 1}{3}.
\end{equation}
Thus, the integral $I$ is absolutely convergent.

\setcounter{equation}{0}
\section{Orthogonality and completeness of  wave functions}
We fix $\omega_1\,,\omega_2$ and $g$ to be real  satisfying conditions \eqref{I0a}. Our aim is to show that the functions $\Psi_{\bm{\lambda}_n}(\bm{x}_n)$ form a complete orthogonal set in the Hilbert space $L^2_{\text{sym}}\left(\mathbb R^{n}, \mu\right)$ of symmetric functions of variables $\bm{x}_n$ with the scalar product
\begin{equation}
\langle \Phi | \Psi \rangle_{\mu} = \int d\bm{x}_n \, \mu(\bm{x}_n) \, \overline{\Phi(\bm{x}_n)} \, \Psi(\bm{x}_n).
\end{equation}
It is natural to introduce the dual Hilbert space $L^2_{\text{sym}}\left(\mathbb R^{n}, \hat{\mu}\right)$ of symmetric functions of variables $\bm{\lambda}_n$ with the scalar product
\begin{equation}
\langle \varphi | \psi \rangle_{\hat{\mu}} = \int d\bm{\lambda}_n \,
\hat{\mu}(\bm{\lambda}_n) \, \overline{\varphi(\bm{\lambda}_n)} \, \psi(\bm{\lambda}_n).
\end{equation}
The essential difference between the two Hilbert spaces is the difference in the measure $\mu \rightleftarrows \hat{\mu}$. To avoid misunderstanding we label each scalar product and the related norm
\begin{align}
&\| \Phi \|_{\mu}^2 = \int d\bm{x}_n \, \mu(\bm{x}_n) \,
|\Phi(\bm{x}_n)|^2 \,,\\
&\|{\varphi}\|_{\hat{\mu}}^2
= \int d\bm{\lambda}_n \, \hat{\mu}(\bm{\lambda}_n) \,
|\varphi(\bm{\lambda}_n)|^2\,
\end{align}
by the corresponding index $\mu$ or $\hat{\mu}$.

For real $\bm{\lambda}_n$ the functions $\Psi_{\bm{\lambda}_n}(\bm{x}_n)$ do not belong to the Hilbert space $L^2_{\text{sym}}\left(\mathbb R^{n}, \mu\right)$. However, they allow us  to
define a linear transform from the space of smooth function to the space of smooth functions on $\R^n$
\begin{align}\label{T-tr}
(T \varphi) (\bm{x}_n) = \int d\bm{\lambda}_n \, \hat{\mu}(\bm{\lambda}_n) \, \Psi_{\bm{\lambda}_n} (\bm{x}_n) \, \varphi(\bm{\lambda}_n)\,.
\end{align}

\Ttheorem*

Note that the relation $\| T \varphi\|_{\mu}^2=\|{\varphi}\|_{\hat{\mu}}^2$ can be extended by a standard polarization procedure to the relation
\begin{align}\label{phipsi}
\langle T \varphi| T \psi\rangle_{\mu} =
\langle \varphi|\psi\rangle_{\hat{\mu}}.
\end{align}
The latter identity is an exact formulation of the orthogonality relation
\begin{equation}\label{ort2}
	\int_{\R^{n}} d\bx_{n}\,
	\mu(\bx_{n})\,
	\overline{\Psi}_{\bm{\lambda}'_n}(\bm{x}_n)
	\Psi_{\bm{\lambda}_{n}}(\bm{x}_{n}) =
	\left(\hat{\mu}(\bm{\lambda}_{n})\right)^{-1}\,
	\delta (\bm{\lambda}'_{n}, \bm{\lambda}_{n})
\end{equation}
in a sense of distributions.

The proof of Theorem \ref{T-theorem} is given in Section \ref{sectiongeneral}. In Section \ref{sec:n1} we explain main steps of this proof on $n=1$ example.  The proof of Theorem \ref{S-theorem}, which is equivalent to completeness relation, is also given in Section \ref{sectiongeneral}.

\subsection{Fourier transform as $n=1$ example}\label{sec:n1}
In the simplest case $n=1$ the measure is trivial $\mu(x) = \hat{\mu}(\lambda) = 1$ and
the eigenfunction coincides with the plane wave.
Hence, the transform $T$ \eqref{T-tr} is the Fourier transform and it
is possible to illustrate the whole scheme step by step
using this example.
Namely, we prove that the system of functions
$\{e^{2\pi \imath \lambda x} , \lambda \in \mathbb R\}$
forms an orthogonal system in $L^2(\R )$.
For brevity we use standard notation $\hat{\varphi}(x)$ for the Fourier transform
instead of general notation $(T \varphi)(x)$. We have
\begin{align}
\hat{\varphi}(x) = \int_\R d \lambda \,
e^{2\pi \imath \lambda x} \, \varphi(\lambda)\,.
\end{align}
For smooth, compactly supported functions $\varphi$ on $\mathbb{R}$
the function $\hat{\varphi}(x)$ rapidly decreases as $|x| \to \infty$ (which can be seen by  integration by parts). This guarantees the convergence of the integral
\begin{align}
\|\hat{\varphi}\|^2 = \int_\R dx \, \overline{\hat{\varphi}(x)}\,\hat{\varphi}(x) =
\int_\R dx \, |\hat{\varphi}(x)|^2.
\end{align}
Due to this convergence it is possible to introduce regularization inside the integral using appropriate function $f_{\epsilon}(x)$ which has pointwise limit: $\lim\limits_{\epsilon\to 0+} f_{\epsilon}(x) = 1$, so that
\begin{align}
\|\hat{\varphi}\|^2 = \int_\R dx \, \lim_{\epsilon\to 0+}\, f_{\epsilon}(x)\,
|\hat{\varphi}(x)|^2 =
\lim_{\epsilon\to 0+}\,\int_\R dx \, f_{\epsilon}(x)\, |\hat{\varphi}(x)|^2.
\end{align}
The first step is trivial, and on the second step we use dominated convergence theorem which
states that the limit can be taken out of the integral, provided that there exists function $g(x)$, such that $|f_{\epsilon}(x)| \leq g(x)$ for all $\epsilon, x$ and the integral
\begin{equation}
\int_\R dx \, g(x) \, | \hat{\vf} (x)|^2
\end{equation}
is convergent.

The simplest example of regularization is the following
\begin{align}
\|\hat{\varphi}\|^2 = \int_\R dx \, \lim_{\varepsilon \to 0+}\,
e^{-\varepsilon\, |x|}\,|\hat{\varphi}(x)|^2 =
\lim_{\varepsilon \to 0+}\, \int_\R dx \,e^{-\varepsilon\, |x|}\,
|\hat{\varphi}(x)|^2.
\end{align}
The equality $\lim_{\varepsilon \to 0+}\,e^{-\varepsilon\, |x|} = 1$ is evident and the condition of dominated convergence theorem is fulfilled for $g(x)=1$ since $e^{-\varepsilon\, |x|} \leq 1$.

Let us consider a more complicated regularization which
will admit generalization to any $n$
\begin{align*}
\|\hat{\varphi}\|^2 = \int_\R dx \,
\lim_{z \rightarrow +\infty} \, \lim_{\epsilon \rightarrow 0+}
\,e^{\pi \hat{g}(z - x) + 2\pi \epsilon (x-z)} \,
K(z - x) \, |\hat{\varphi}(x)|^2 = \\[4pt]
\lim_{z \rightarrow +\infty} \, \lim_{\epsilon \rightarrow 0+}\,
\int_\R dx \,e^{\pi \hat{g}(z - x) + 2\pi \epsilon (x-z)} \,
K(z - x) \, |\hat{\varphi}(x)|^2\,.
\end{align*}
 Here we recall that
\begin{equation}
\hat{g} = \frac{g}{\o_1 \o_2} > 0.
\end{equation}
Using asymptotics of $K$-functions one may readily check that
\begin{equation}
\lim_{z \rightarrow + \infty} \, \lim_{\ve \rightarrow 0+} e^{\pi \hat{g}(z - x) + 2\pi \varepsilon (x-z)}
K(z - x) = 1
\end{equation}
pointwise.
Indeed, it is evident that $e^{2\pi \varepsilon (x-z)} \to 1$ as
$\ve \to 0$ and for $z\to +\infty$ we have~\eqref{Kmu-asymp}
\begin{align}
K(z - x)\to e^{\pi \hat{g} \left(x-z\right)}.
\end{align}
Next to use dominated convergence theorem recall inequality
\eqref{Kmu-bound}
$$
|K(y)| \leq C\,e^{-\pi\nu_g|y|}, \qquad  y \in \mathbb{R}
$$
where  $\nu_g = \Re \hat{g}  = \hat{g} >0$.  Therefore, we have
\begin{align}\label{C}
\left|e^{\pi \hat{g}(z - x) + 2\pi \epsilon (x-z)} \,
K(z - x)\right| \leq C \, e^{ \pi (\hat{g} - 2 \ve) (z - x) - \pi \hat{g} |z - x| } \leq C e^{- 2 \pi \ve | z - x|} \leq C,
\end{align}
where on the second step we used inequality $a \leq |a|$ and the fact that $2\ve < \hat{g}$\cbk.

The main goal of regularization is the following.
In the initial integral
\begin{align*}
\|\hat{\varphi}\|^2 = \int_\R dx \, \overline{\hat{\varphi}(x)}\,\hat{\varphi}(x) =
\int_\R dx \, \int_\R d \lambda' \,
e^{-2\pi \imath \lambda' x} \, \overline{\varphi(\lambda')}\,
\int_\R d \lambda \,
e^{2\pi \imath \lambda x} \, \varphi(\lambda)
\end{align*}
the order of integrations is fixed from the very beginning and
can not be changed. One can not invoke Fubini's theorem because the
$x$-integral is not absolute convergent separately.
Regularization improves absolute convergence of the
$x$-integral so that it is possible to invoke Fubini's theorem,
change order of integrations and then calculate the $x$-integral first to obtain delta sequence.
In the case of the first regularization we have
\begin{align*}
&\|\hat{\varphi}\|^2 = \int_\R dx \, \overline{\hat{\varphi}(x)}\,\hat{\varphi}(x) =
\lim_{\varepsilon \to 0+}\, \int_\R dx \,e^{-\varepsilon\, |x|}\, \int_\R d \lambda' \,
e^{-2\pi \imath \lambda' x} \, \overline{\varphi(\lambda')}\,
\int_\R d \lambda \,
e^{2\pi \imath \lambda x} \, \varphi(\lambda) = \\[6pt]
&\lim_{\varepsilon \to 0+}\, \int_{\R^2} d \lambda' \, d \lambda \,
\overline{\varphi(\lambda')}\,\varphi(\lambda)\,
\int_\R dx \,e^{-\varepsilon\, |x|}\,e^{2\pi \imath (\lambda-\lambda') x} = \\[6pt]
&\lim_{\varepsilon \to 0+}\, \int_{\R^2} d \lambda' \, d \lambda \,
\overline{\varphi(\lambda')}\,\varphi(\lambda)\,\frac{1}{2\pi \imath}\,
\left[\frac{1}{\lambda-\lambda'-\imath\epsilon}-\frac{1}{\lambda-\lambda'+\imath\epsilon}\right] =
\int_\R d \lambda \,
\overline{\varphi(\lambda)}\,\varphi(\lambda)\,.
\end{align*}
 In the last line we rescaled $\ve \rightarrow 2\pi\ve$. In fact, the proof of the last equality is the same as the proof of the well-known formula in the theory of the generalized functions
\begin{align}
\lim_{\varepsilon \to 0+} \left[\frac{1}{\lambda-\lambda'- \imath\epsilon}-\frac{1}{\lambda-\lambda'+\imath\epsilon}\right] =
2\pi \imath \,\delta(\lambda-\lambda').
\end{align}
Thus, we proved the equality $\|\hat{\varphi}\|^2 = \|\varphi\|^2$ and as a consequence
the orthogonality relation in the form
\begin{align}
\int_\R dx \,e^{2\pi \imath (\lambda-\lambda') x} = \delta(\lambda-\lambda')
\end{align}
where integral is understood in the sense of generalized functions as described above.

Let us consider the second regularization as an illustrative example to the general case of arbitrary $n$
\begin{align*}
&\|\hat{\varphi}\|^2 =
\lim_{z \rightarrow +\infty} \, \lim_{\epsilon \rightarrow 0+}\,
\int_\R dx \,e^{\pi \hat{g}(z - x) + 2\pi \epsilon (x-z)} \,
K(z - x) \, |\hat{\varphi}(x)|^2 = \\[6pt]
&\lim_{z \rightarrow +\infty} \, \lim_{\epsilon \rightarrow 0+}\,
\, \int_{\R^2} d \lambda' \, d \lambda \,
\overline{\varphi(\lambda')}\,\varphi(\lambda)\,
\int_\R dx \,e^{\pi \hat{g}(z - x) + 2\pi \epsilon (x-z)} \, K(z - x)
\,e^{2\pi \imath (\lambda-\lambda') x}.
\end{align*}
The last integral coincides with the simplest basic integral $J$ from the previous section \eqref{initial} and can be calculated in a closed form
\begin{align*}
\int_\R dx \,e^{\pi \hat{g}(z - x) + 2\pi \epsilon (x-z)} \, K(z - x)
\,e^{2\pi \imath (\lambda-\lambda') x} = \frac{1}{d_1(g|\bo)}\,
\hat{K}(\lambda-\lambda' +\textstyle \frac{\imath \hat{g}}{2} -\imath\epsilon)\,
e^{2\pi\imath(\lambda-\lambda') z}.
\end{align*}
In Section \ref{sectiongeneral} we show that the last expression is a delta sequence
\begin{align}
\lim_{z \rightarrow +\infty} \, \lim_{\epsilon \rightarrow 0+}\,
\hat{K}(\lambda-\lambda' +\textstyle \frac{\imath \hat{g}}{2} -\imath\epsilon)\,
e^{2\pi\imath(\lambda-\lambda') z} = d_1(g|\bo)\, \delta(\lambda-\lambda')\,.
\end{align}

In the case of Fourier transform we are able to use the
well-known fact that for smooth, compactly supported functions
$\varphi$ on $\mathbb{R}$ the function $\hat{\varphi}(x)$ rapidly
decreases as $|x| \to \infty$ that guarantees the convergence
of the integral
\begin{align}
\|\hat{\varphi}\|^2 = \int_\R dx \, |\hat{\varphi}(x)|^2
\end{align}
and then prove the equality
\begin{align}
\|\hat{\varphi}\|^2 = \lim_{\epsilon\to 0+}\,\int_\R dx \, f_{\epsilon}(x)\, |\hat{\varphi}(x)|^2
\end{align}
using dominated convergence theorem and corresponding restriction on $f_{\epsilon}(x)$.  In general case we do not know the same property of rapid decay for the transformed function $T \vf$, although it is probably true even in the case of complex periods and coupling constant\cbk.
So, in general case we slightly change the logic and prove convergence of the needed
integral using Fatou's lemma. The corresponding steps in the case of Fourier transformation are as follows.
\begin{itemize}
\item At the first step we consider regularized integral  with $f_\ve(x) \geq 0$  and prove that
\begin{align}
\lim_{\epsilon\to 0+}\,\int_\R dx \, f_{\epsilon}(x)\, |\hat{\varphi}(x)|^2 =
\int_\R d\lambda \, |\varphi(\lambda)|^2.
\end{align}
\item At the second step we use Fatou's lemma which states that
\begin{align}
\int_\R dx \, \varliminf_{\epsilon\to 0+}\, f_{\epsilon}(x)\, |\hat{\varphi}(x)|^2 \leq \varliminf_{\epsilon\to 0+}\,\int_\R dx \, f_{\epsilon}(x)\, |\hat{\varphi}(x)|^2 =
\int_\R d\lambda \, |\varphi(\lambda)|^2
\end{align}
and therefore the needed integral is convergent
\begin{align}
\int_\R dx \,|\hat{\varphi}(x)|^2 \leq \int_\R d\lambda \, |\varphi(\lambda)|^2.
\end{align}
\item Convergence of the needed integral allows us to use dominated convergence theorem in the same way as before, so that
\begin{align}
\|\hat{\varphi}\|^2 = \int_\R dx \, \lim_{\epsilon\to 0+}\, f_{\epsilon}(x)\,
|\hat{\varphi}(x)|^2 =
\lim_{\epsilon\to 0+}\,\int_\R dx \, f_{\epsilon}(x)\, |\hat{\varphi}(x)|^2
\end{align}
and therefore $\|\hat{\varphi}\|^2 = \|\varphi\|^2$.
\end{itemize}

\subsection{General case} \label{sectiongeneral}
In order to prove Theorem \ref{T-theorem}, we first establish that the identity \eqref{unit-T}
\begin{equation}
\| T \vf \|^2_\mu = \| \vf \|^2_{\hat{\mu}}
\end{equation}
holds for smooth, compactly supported functions $\varphi$ on $\mathbb{R}^{n}$.
Let us introduce the regularised scalar product
\begin{align}
\langle T\varphi | T\varphi\rangle^{x,\epsilon}_{\mu} =
\int_{\R^n} d\bm{x}_n \, \mu(\bm{x}_n)\,
 e^{\pi \hat{g} (n x - \bbx_n) + 2\pi \epsilon (\bbx_n-nx)} \,
K(x, \bm{x}_n) \, |T\varphi(\bm{x}_n)|^2.
\end{align}
Note that for real parameters $K(x, \bm{x}_n) \geq 0$.
Using asymptotics of $K$-functions one may readily check the first
needed property of regularization
\begin{align}
\lim_{x \rightarrow +\infty} \, \lim_{\epsilon \rightarrow 0+}
e^{\pi \hat{g} (n x - \bbx_n) + 2\pi \epsilon (\bbx_n-nx)} \,
K(x, \bm{x}_n) = 1.
\end{align}
Indeed, it is evident that $e^{\pi \varepsilon (\underline{\bm{x}}_n -nx)} \to 1$ as
$\ve \to 0$ and as $x\to +\infty$ we have \eqref{Kmu-asymp}
\begin{align}
K(x, \bx_n) = \prod_{i=1}^{n} K(x_i-x)\to
\prod_{i=1}^{n} e^{\pi \hat{g} (x_i-x)} = e^{\pi \hat{g} \left(\underline{\bm{x}}_n -n x\right)}.
\end{align}
Now let us show that the regularized integral
\begin{equation}\label{Tphi-full}
\begin{aligned}
\langle T\varphi | T\varphi\rangle^{x,\epsilon}_{\mu} = \int_{\R^{3n}} d\bm{x}_n  \, d\bl_n' \, & d\bl_n  \;\, \mu(\bm{x}_n)\, e^{\pi \hat{g} (n x - \bbx_n) + 2\pi \epsilon (\bbx_n-nx)} \,
K(x, \bm{x}_n) \\[6pt]
& \hspace{0.35cm} \times \hat{\mu}(\bl_n') \, \hat{\mu}(\bl_n) \;  \overline{\Psi}_{\bl_n'}(\bx_n) \, \Psi_{\bl_n}(\bx_n)  \;  \overline{\vf(\bl_n') } \, \vf(\bl_n)
\end{aligned}
\end{equation}
is absolutely convergent. The integral over $\bx_n$
\begin{align}\label{dseq4}
\langle \Psi_{\bm{\lambda}'_n} |
\Psi_{\bm{\lambda}_n} \rangle_{\mu}^{x, \epsilon} =
\int_{\R^{n}} d\bx_{n}\,
\mu(\bx_{n})
e^{2\pi \left(\varepsilon-\frac{g}{2}\right) (\underline{\bm{x}}_n-nx)}
\,\K(x,\bx_{n})\,
\overline{\Psi}_{\bm{\lambda}'_n}(\bm{x}_n)\,
\Psi_{\bm{\lambda}_{n}}(\bm{x}_{n})
\end{align}
almost coincides with the basic integral $J$ \eqref{J}
\begin{equation}\label{ref-product0}
\langle \Psi_{\bm{\lambda}'_n} |
\Psi_{\bm{\lambda}_n} \rangle_{\mu}^{x, \epsilon} = e^{\pi \hat{g} n x} \,
e^{-2\pi \varepsilon n x}\,
J \left( \bm{\lambda}'_n \,, \bm{\lambda}_n + \bigl(\textstyle\frac{\imath \hat{g}}{2} -
\imath \epsilon\bigr) \bm{e}\, ; x \right).
\end{equation}
Here we use shorthand notation
$$
\bm{e} = (1,1,\ldots,1), \qquad
\bm{\lambda}_{n}+\lambda\bm{e} = (\lambda_1+\lambda,\ldots,\lambda_n+\lambda)
$$
and the following relation
$$
e^{2 \pi \imath \lambda\,\underline{\bm{x}}_n}\,\Psi_{\bm{\lambda}_{n}}(\bm{x}_{n}) =
\Psi_{\bm{\lambda}_{n}+\lambda\bm{e}}(\bm{x}_{n})\,,
$$
which can be easily checked.
As we proved in the previous section, the integral $J$ is absolutely convergent. Since $\vf(\bl_n)$ is compactly supported, the whole integral \eqref{Tphi-full} is also absolutely convergent.

Next we prove the following two equalities
\begin{equation}\label{T}
\| T\varphi \|_{\mu}^2 = \lim_{x \rightarrow +\infty} \, \lim_{\epsilon \rightarrow 0+}
\langle T\varphi | T\varphi\rangle^{x,\epsilon}_{\mu} = \| \varphi \|_{\hat{\mu}}^2\,.
\end{equation}
Let us consider the second one
\begin{equation}\label{Treg}
\lim_{x \rightarrow +\infty} \, \lim_{\epsilon \rightarrow 0+}
\langle T\varphi | T\varphi\rangle^{x,\epsilon}_{\mu} =  \| \varphi\|_{\hat{\mu}}^2.
\end{equation}
The proof of this relation is divided into two steps.
In the first step we invoke Fubini's theorem and interchange the order of integrals in
\eqref{Tphi-full} evaluating $\bm{x}_n$-integral first
\begin{align*}
\lim_{x \rightarrow +\infty} \, \lim_{\epsilon \rightarrow 0+}
\langle T\varphi | T\varphi\rangle^{x,\epsilon}_{\mu} =
\lim_{x \rightarrow +\infty} \, \lim_{\epsilon \rightarrow 0+}
\int d\bl_n' \, d\bl_n \, \hat{\mu}(\bl_n') \, \hat{\mu}(\bl_n) \;
\overline{\vf(\bl_n')} \, \vf(\bl_n)  \,
\langle \Psi_{\bm{\lambda}'_n} | \Psi_{\bm{\lambda}_n} \rangle_{\mu}^{x, \epsilon}.
\end{align*}
In the second step we use explicit expression for $\langle \Psi_{\bm{\lambda}'_n} | \Psi_{\bm{\lambda}_n} \rangle_{\mu}^{x, \epsilon}$ and prove that
\begin{align}\label{dseq1}
\lim_{x \rightarrow +\infty} \, \lim_{\epsilon \rightarrow 0+}
\langle \Psi_{\bm{\lambda}'_n} | \Psi_{\bm{\lambda}_n} \rangle_{\mu}^{x, \epsilon}
 = \frac{1}{\hat{\mu}(\bm{\lambda}_{n})}\,
\delta (\bm{\lambda}_{n}, \bm{\lambda}'_{n})\,.
\end{align}
Finally, $\bm{\lambda}$-integrals produce needed result $\| \varphi\|_{\hat{\mu}}^2$.

Actually, we prove \rf{dseq1} in a more general setting for bilinear pairing \rf{Mpair} and complex valued periods and coupling constant, so that the statement \rf{dseq1} will be its particular case.

\propdelta*

Here the regularized pairing from the left is defined in \rf{dseq2}
\begin{align}
\left( \Psi_{\bm{\lambda}'_n} |
\Psi_{\bm{\lambda}_n} \right)_{\mu}^{x, \epsilon} =
\int_{\R^{n}} d\bx_{n} \,
\mu(\bx_{n})\,e^{2\pi \left(\varepsilon-\frac{\hat{g}}{2}\right)
	(\underline{\bm{x}}_n-nx)}
\,\K(x,\bx_{n})\,
{\Psi}_{\bm{\lambda}'_n}(-\bm{x}_n)\,
\Psi_{\bm{\lambda}_{n}}(\bm{x}_{n}).
\end{align}
As well as the regularized scalar product \rf{dseq4} it converges and is connected to the integral $J$ by the same relation \rf{ref-product0}, so that it can be evaluated in a closed form \eqref{J1}
\begin{equation}\label{ref-product}
\begin{aligned}
\left( \Psi_{\bm{\lambda}'_n} |
\Psi_{\bm{\lambda}_n} \right)_{\mu}^{x, \epsilon} &= e^{\pi \hat{g} n x} \,
e^{-2\pi \varepsilon n x}\,
J \left( \bm{\lambda}'_n \,, \bm{\lambda}_n + \bigl(\textstyle\frac{\imath \hat{g}}{2} -
\imath \epsilon\bigr) \bm{e}\, ; x \right) \\[6pt]
& = \frac{1}{d_n(g|\bo)}\,
e^{2\pi\imath\left(\underline{\bm{\lambda}}_{n}
	-\underline{\bm{\lambda}}'_n\right) x}\,
\prod_{i,j=1}^n \hat{K} \Bigl(\lambda_i-\lambda'_j+
\frac{\imath \hat{g}}{2} - \imath\varepsilon \Bigr).
\end{aligned}
\end{equation} \cbk
Next we rewrite this expression using definitions \eqref{I6c}, \eqref{I5d} and reflection formula~\eqref{inv}
\begin{multline*}
\hat{K} \Bigl(\lambda_i-\lambda'_j+
\frac{\imath \hat{g}}{2} - \imath\varepsilon \Bigr) = \\[4pt]
S_2^{-1}\Bigl(\imath (\lambda_i-\lambda'_j-\imath\varepsilon)\Big|\hat{\bo}\Bigr)\,
S_2^{-1}
\Bigl(-\imath (\lambda_i-\lambda'_j-\imath\varepsilon)+\hat{g}\Big|\hat{\bo}\Bigr) =
\frac{1}{\mm(\lambda_i-\lambda'_j-\imath\varepsilon)}.
\end{multline*}
Collecting everything together we obtain the following expression
for the regularized scalar product
\begin{align*}
\left( \Psi_{\bm{\lambda}'_n} |
\Psi_{\bm{\lambda}_n} \right)_{\mu}^{x, \epsilon} =
\frac{1}{d_n(g|\bo)}\,
\frac{e^{2\pi\imath\left(\underline{\bm{\lambda}}_{n} -
\underline{\bm{\lambda}}'_{n}\right)x }}
{\prod_{i,j=1}^n
S_2\left(\imath (\lambda_i-\lambda'_j-\imath\varepsilon)|\hat{\bo}\right)\,
S_2
\left(-\imath (\lambda_i-\lambda'_j-\imath\varepsilon)+\hat{g}|\hat{\bo}\right)}.
\end{align*}
On the last step we calculate the limit
\begin{align*}
\lim _{x \to +\infty}\,
\lim _{\varepsilon \to 0+}\,e^{2\pi \imath
(\underline{\bm{\lambda}}_n-\underline{\bm{\lambda}}'_n)x}\,
\prod_{i,k=1}^n \frac{1}{S_2\left(\imath(\lambda'_i - \lambda_k +\imath\varepsilon)+\hat{g}|\hat{\bo}\right)
S_2\left(\imath(\lambda_k - \lambda'_i -\imath\varepsilon)|\hat{\bo}\right)}
\end{align*}
using the following Lemma \cite{DKM} (see Appendix \ref{App-delta} for details).

\begin{lemma}\label{lemma:delta}
In the sense of distributions the following identity holds
\begin{align*}
\lim_{\lambda\rightarrow +\infty} \, \lim_{\epsilon\rightarrow 0+}
\frac{e^{2\pi\imath \lambda(\underline{\bm{x}}_n-\underline{\bm{y}}_n)}}{
\prod_{a,b=1}^{n} \left(x_a-y_b -\imath\varepsilon\right)}
= \frac{(-1)^{\frac{n(n-1)}{2}}(2\pi \imath )^{n} n!}
{\prod_{a<b}^{n} \left(x_{a}-x_{b}\right)^2}\,
\delta(\bm{x}_n\,,\bm{y}_n ),
\end{align*}
where
\begin{align*}
\delta(\bm{x}_n\,,\bm{y}_n ) = \frac1{n!}\,
\sum_{w\in S_n} \prod_{k=1}^n\delta(x_k-y_{w(k)}).
\end{align*}
\end{lemma}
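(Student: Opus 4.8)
The plan is to disentangle the denominator by the Cauchy determinant identity and then reduce everything to the one--dimensional Sokhotski--Plemelj formula. For generic pairwise distinct $x_a$ one has the purely algebraic identity
\[
\frac{1}{\prod_{a,b=1}^{n}(x_a-y_b-\imath\varepsilon)}
=\frac{1}{\prod_{a<b}(x_a-x_b)\,\prod_{a<b}(y_b-y_a)}\;
\det\!\left(\frac{1}{x_a-y_b-\imath\varepsilon}\right)_{a,b=1}^{n},
\]
in which the shift $-\imath\varepsilon$ drops out of the Vandermonde in the $y$'s. Since $\bbx_n-\bby_n=\sum_a x_a-\sum_b y_b$, scaling the $a$-th row of the matrix by $e^{2\pi\imath\lambda x_a}$ and the $b$-th column by $e^{-2\pi\imath\lambda y_b}$ multiplies the determinant precisely by $e^{2\pi\imath\lambda(\bbx_n-\bby_n)}$, so
\[
\frac{e^{2\pi\imath\lambda(\bbx_n-\bby_n)}}{\prod_{a,b}(x_a-y_b-\imath\varepsilon)}
=\frac{1}{\prod_{a<b}(x_a-x_b)\,\prod_{a<b}(y_b-y_a)}\;
\det\!\left(\frac{e^{2\pi\imath\lambda(x_a-y_b)}}{x_a-y_b-\imath\varepsilon}\right)_{a,b=1}^{n}.
\]
First I would therefore reduce the problem to computing the iterated limit ($\varepsilon\to0+$, then $\lambda\to+\infty$) of this determinant, tested against a smooth, compactly supported (equivalently, symmetric) function of $\bm{y}_n$ with $\bm{x}_n$ fixed and generic.

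Next I would expand the determinant over $w\in S_n$; each term is a tensor product $\prod_{a=1}^{n}\dfrac{e^{2\pi\imath\lambda(x_a-y_{w(a)})}}{x_a-y_{w(a)}-\imath\varepsilon}$ of one--variable factors, since $y_{w(1)},\dots,y_{w(n)}$ are the variables $y_1,\dots,y_n$ in some order. For a single factor the classical Sokhotski--Plemelj formula gives $\lim_{\varepsilon\to0+}(t-\imath\varepsilon)^{-1}=\mathrm{p.v.}\,t^{-1}+\imath\pi\,\delta(t)$, and then multiplying by $e^{2\pi\imath\lambda t}$ and sending $\lambda\to+\infty$ (splitting the test function at $t=0$, using $\mathrm{p.v.}\!\int e^{\imath at}t^{-1}\,dt=\imath\pi$ for $a>0$ together with the Riemann--Lebesgue lemma on the remainder) yields $\lim_{\lambda}\lim_{\varepsilon}\,e^{2\pi\imath\lambda t}(t-\imath\varepsilon)^{-1}=2\pi\imath\,\delta(t)$. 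The multivariate statement I would establish is that, for independent variables $t_a=x_a-y_{w(a)}$,
\[
\lim_{\lambda\to+\infty}\,\lim_{\varepsilon\to0+}\;\prod_{a=1}^{n}\frac{e^{2\pi\imath\lambda t_a}}{t_a-\imath\varepsilon}=(2\pi\imath)^{n}\prod_{a=1}^{n}\delta(t_a):
\]
take $\varepsilon\to0+$ factorwise (legitimate as a limit of a tensor product in $\mathcal{D}'$), expand $\prod_a\bigl(\mathrm{p.v.}\,t_a^{-1}+\imath\pi\,\delta(t_a)\bigr)$ into $2^{n}$ terms, and evaluate the $\lambda\to+\infty$ limit of each by integrating out one variable at a time; every term tends to $(\imath\pi)^{n}\prod_a\delta(t_a)$, and the $2^{n}$ of them sum to $(2\pi\imath)^{n}\prod_a\delta(t_a)$. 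Combining with the prefactor (smooth near the support of the limiting distribution, which lies away from the $y$-diagonals) gives
\[
\lim_{\lambda}\lim_{\varepsilon}\frac{e^{2\pi\imath\lambda(\bbx_n-\bby_n)}}{\prod_{a,b}(x_a-y_b-\imath\varepsilon)}
=\frac{(2\pi\imath)^{n}}{\prod_{a<b}(x_a-x_b)\,\prod_{a<b}(y_b-y_a)}\sum_{w\in S_n}\mathrm{sgn}(w)\prod_{a=1}^{n}\delta(x_a-y_{w(a)}).
\]

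Finally I would evaluate the prefactor on the support of $\prod_a\delta(x_a-y_{w(a)})$, i.e.\ at $y_j=x_{w^{-1}(j)}$, where $\prod_{a<b}(y_b-y_a)=\mathrm{sgn}(w)\prod_{a<b}(x_b-x_a)=\mathrm{sgn}(w)\,(-1)^{n(n-1)/2}\prod_{a<b}(x_a-x_b)$; hence the $w$-th term becomes $\dfrac{\mathrm{sgn}(w)^{2}\,(-1)^{n(n-1)/2}(2\pi\imath)^{n}}{\prod_{a<b}(x_a-x_b)^{2}}\prod_a\delta(x_a-y_{w(a)})$, the $w$-dependent sign disappears since $\mathrm{sgn}(w)^{2}=1$, and summing over $S_n$ converts the signed sum into $n!\,\delta(\bm{x}_n,\bm{y}_n)$, which is exactly the asserted right--hand side. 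The hard part will be the analytic content of the second paragraph: rigorously moving the iterated limit inside the determinant, equivalently proving the multivariate Riemann--Lebesgue statement with a \emph{single} shared parameter $\lambda$ appearing in all factors. I would handle this by testing against $\varphi\in C_c^{\infty}(\mathbb{R}^{n})$, using that for fixed $\varepsilon>0$ the integrand is an honest $L^{1}$ function so Fubini applies and the one--dimensional $y$--integrals can be performed successively, with dominated convergence controlling each passage to the limit --- this is the argument of \cite{DKM}.
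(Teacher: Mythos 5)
Your strategy is in essence the paper's own (Appendix C): the Cauchy determinant identity, the one--dimensional Sokhotski--Plemelj plus Riemann--Lebesgue computation giving $2\pi\imath\,\delta$, and the final sign bookkeeping with $\prod_{a<b}(y_b-y_a)$ evaluated at $y=w(x)$ all appear there. The organizational difference is that the paper never divides by the Vandermondes: it proves the limit for the kernel \emph{multiplied} by $\prod_{k<j}(x_k-x_j)(y_j-y_k)$, tested against an arbitrary compactly supported $f$, using the expansion $1=\prod_k(1-X_k+X_k)$ of the test function (residues for the terms containing $\prod_k X_k$, Riemann--Lebesgue for every term containing a factor $1-X_i$), and only afterwards restricts to antisymmetric test functions $f=\prod_{k<j}(x_k-x_j)\,\phi$; the lemma is thus interpreted with the factor $\prod_{a<b}(x_a-x_b)^2$ included in the tested integrand, which is exactly what the application in Section 3 supplies. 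You instead pull out the prefactor $1/\bigl[\prod_{a<b}(x_a-x_b)\prod_{a<b}(y_b-y_a)\bigr]$, take the $\mathcal{D}'$-limit of the determinant, and multiply back at the end.

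That last step is a genuine gap. Testing the original kernel against $\varphi(\bm{y}_n)$ amounts to testing the determinant against $\varphi(\bm{y}_n)/\prod_{a<b}(y_b-y_a)$, which is not a test function: it is not even locally integrable on the hyperplanes $y_a=y_b$ unless $\varphi$ vanishes there. Hence (i) the distributional convergence of the determinant that your tensor-product argument establishes (against genuine test functions) does not by itself give convergence of these weighted integrals --- smoothness of the prefactor near the support of the limiting deltas does not let you interchange the limit with multiplication by a function singular elsewhere, since contributions of the determinant concentrated near the $y$-diagonals, invisible in the $\mathcal{D}'$-limit, can be amplified by the prefactor; and (ii) the term-by-term permutation expansion fails precisely here, because each individual term $\prod_a\bigl(x_a-y_{w(a)}-\imath\varepsilon\bigr)^{-1}$ paired with $\varphi/\prod_{a<b}(y_b-y_a)$ is not an absolutely convergent integral (only the full determinant, which vanishes on the diagonals, is). Note also that the truly delicate point is this recombination, not the shared-$\lambda$ Riemann--Lebesgue statement you single out, which is routine. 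To close the gap, either keep the Vandermonde factors in the numerator throughout and specialize to antisymmetric test functions at the very end (the paper's route), or restrict from the start to test functions divisible by the $y$-Vandermonde (sufficient for the use in Section 3, where $\hat{\mu}$ provides the vanishing on the diagonals), or add a separate quantitative estimate for the determinant near the diagonals; with any of these amendments your argument goes through and reproduces the stated constant $(-1)^{\frac{n(n-1)}{2}}(2\pi\imath)^n n!$.
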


We have
\begin{align*}
&\lim _{x \to +\infty}\,
\lim _{\varepsilon \to 0+}\,e^{2\pi \imath
(\underline{\bm{\lambda}}_n-\underline{\bm{\lambda}}'_n) x}\,
\prod_{i,k=1}^n \frac{1}{S_2\left(\imath(\lambda'_i - \lambda_k +\imath\varepsilon)+\hat{g}|\hat{\bo}\right)
S_2\left(\imath(\lambda_k - \lambda'_i -\imath\varepsilon)|\hat{\bo}\right)} = \\[6pt]
&\lim _{\varepsilon \to 0+}\,
\prod_{i,k=1}^n \frac{(\lambda_k - \lambda'_i -\imath\varepsilon)}
{S_2\left(\imath(\lambda'_i - \lambda_k +\imath\varepsilon)+\hat{g}|\hat{\bo}\right)
S_2\left(\imath(\lambda_k - \lambda'_i -\imath\varepsilon)|\hat{\bo}\right)}\,
\lim _{x \to +\infty}\,
\lim _{\varepsilon \to 0+}\,
\frac{e^{2\pi \imath
(\underline{\bm{\lambda}}_n-\underline{\bm{\lambda}}'_n)x}}
{\prod_{i,k=1}^n (\lambda_k - \lambda'_i -\imath\varepsilon)} = \\[6pt]
&\lim _{\varepsilon \to 0+}\,
\prod_{i,k=1}^n \frac{(\lambda_k - \lambda'_i -\imath\varepsilon)}
{S_2\left(\imath(\lambda'_i - \lambda_k +\imath\varepsilon)+\hat{g}|\hat{\bo}\right)
S_2\left(\imath(\lambda_k - \lambda'_i -\imath\varepsilon)|\hat{\bo}\right)}\,
\frac{(-1)^{\frac{n(n-1)}{2}}(2\pi \imath )^{n} n!}
{\prod_{k<i}^{n} \left(\lambda_k - \lambda_i\right)^2}\,
\delta\big(\bm{\lambda}_n,\bm{\lambda}'_n \big).
\end{align*}
In the second line we used the fact that function $z^{-1}\,S_2(z|\hat{\bo})$
is regular at the point $z=0$ and extracted the singular part arising
at coinciding arguments $\lambda_k = \lambda'_i$. In the last line we
used Lemma \ref{lemma:delta}.

It remains to calculate the first limit $\ve \rightarrow 0$ and due to the symmetry it can be done at the point
$\bm{\l}'_n = \bm{\l}_n$, i.e. $\lambda'_k =\lambda_k$, $k=1,\ldots,n$.
We have
\begin{align}
&\lim _{\varepsilon \to 0}\,
\prod_{i,k=1}^n \frac{(\lambda_k - \lambda_i -\imath\varepsilon)}
{S_2\left(\imath(\lambda_i - \lambda_k +\imath\varepsilon)+\hat{g}|\hat{\bo}\right)
S_2\left(\imath(\lambda_k - \lambda_i -\imath\varepsilon)|\hat{\bo}\right)} = \\[6pt]
&\Bigl[\frac{\sqrt{\hat{\omega}_1\hat{\omega}_2}}{2\pi \imath}\Bigr]^n\,
\frac{1}{S^n_2\left(\hat{g}|\hat{\bo}\right)}\,
\prod_{i\neq k}^n (\lambda_k - \lambda_i)\,
\prod_{i\neq k}^n \frac{1}
{S_2\left(\imath(\lambda_i - \lambda_k)+\hat{g}|\hat{\bo}\right)
S_2\left(\imath(\lambda_k - \lambda_i)|\hat{\bo}\right)}
\end{align}
where we used formula \eqref{trig5}
\begin{align}
\lim_{z\to 0} z \,S^{-1}_2(z|\hat{\bo}) = \frac{\sqrt{\hat{\omega}_1\hat{\omega}_2}}{2\pi}.
\end{align}
Collecting everything together we obtain
\begin{multline*}
\lim_{x \rightarrow +\infty} \, \lim_{\epsilon \rightarrow 0+}
\left( \Psi_{\bm{\lambda}'_n} | \Psi_{\bm{\lambda}_n}
\right)_{\mu}^{x, \epsilon} = \frac{1}{d_n(g|\bo)}\,
\Bigl[\frac{\sqrt{\hat{\omega}_1\hat{\omega}_2}}{2\pi \imath}\Bigr]^n\,
\frac{1}{S^n_2\left(\hat{g}|\hat{\bo}\right)}\,\times \\[6pt]
\prod_{i\neq k}^n (\lambda_k - \lambda_i)\,
\prod_{i\neq k}^n \frac{1}
{S_2\left(\imath(\lambda_i - \lambda_k)+\hat{g}|\hat{\bo}\right)
S_2\left(\imath(\lambda_k - \lambda_i)|\hat{\bo}\right)}\,
\frac{(-1)^{\frac{n(n-1)}{2}}(2\pi \imath )^{n} n!}
{\prod_{k<i}^{n} \left(\lambda_k - \lambda_i\right)^2}\,
\delta\big(\bm{\lambda}_n,\bm{\lambda}_n \big)\,.
\end{multline*}
Note that
\begin{align*}
\prod_{i\neq k}^n (\lambda_k - \lambda_i) = (-1)^{\frac{n(n-1)}{2}}\,
\prod_{k<i}^{n} \left(\lambda_k - \lambda_i\right)^2
\end{align*}
so that all these factors are cancelled and after some rearrangement we obtain the stated result
\begin{align*}
\lim_{x \rightarrow +\infty} \, \lim_{\epsilon \rightarrow 0+}
\left( \Psi_{\bm{\lambda}'_n} | \Psi_{\bm{\lambda}_n}
\right)_{\mu}^{x, \epsilon} =
\frac{n!\,\delta\big(\bm{\lambda}_n,\bm{\lambda}'_n \big)}
{\prod\limits_{i\neq k}^n S_2\left(\imath(\lambda_i - \lambda_k)+\hat{g}|\hat{\bo}\right)
S_2\left(\imath(\lambda_k - \lambda_i)|\hat{\bo}\right)} = \frac{1}{\hat{\mu}(\bm{\lambda}_{n})}\,
\delta (\bm{\lambda}_{n}, \bm{\lambda}'_{n}).
\end{align*}
This completes the proof of Proposition \ref{proposition-delta}.  \hfill{$\Box$}

Let us return to the complete relation \eqref{T} assuming real periods and coupling constant
\begin{equation*}
\| T\varphi \|_{\mu}^2 = \lim_{x \rightarrow +\infty} \, \lim_{\epsilon \rightarrow 0+}
\langle T\varphi |T\varphi \rangle_{\mu}^{x,\epsilon} = \| \varphi \|_{\hat{\mu}}^2\,.
\end{equation*}
We have proven the second equality and now we prove
the first one
$$
\| T\varphi \|_{\mu}^2 = \lim_{x \rightarrow +\infty} \, \lim_{\epsilon \rightarrow 0+}
\langle T\varphi |T\varphi \rangle_{\mu}^{x,\epsilon}.
$$
This identity means that it is possible to interchange the order of integration and
calculation of the limits.
We divide the proof into two steps.
Firstly we use Fatou's lemma to prove inequality
\begin{equation}
\| T\varphi \|_{\mu}^2  \leq \lim_{x \rightarrow +\infty} \, \lim_{\epsilon \rightarrow 0+}
\langle T\varphi |T\varphi \rangle_{\mu}^{x,\epsilon}.
\end{equation}
Indeed, by Fatou's lemma
\begin{align*}
\langle T\varphi | T\varphi\rangle_{\mu} =&
\int_{\R^n} d\bm{x}_n \, \mu(\bm{x}_n)\,
\varliminf_{x \rightarrow +\infty} \,
\varliminf_{\epsilon \rightarrow 0+}\,e^{\pi \hat{g} (n x - \bbx_n) + 2\pi \epsilon (\bbx_n-nx)} \,
K(x, \bm{x}_n) \, |T\varphi(\bm{x}_n)|^2 \leq \\[8pt]
&\varliminf_{x \rightarrow +\infty} \,
\varliminf_{\epsilon \rightarrow 0+}\,\int_{\R^n} d\bm{x}_n \, \mu(\bm{x}_n)\,
\,e^{\pi \hat{g} (n x - \bbx_n) + 2\pi \epsilon (\bbx_n-nx)} \,
K(x, \bm{x}_n) \, |T\varphi(\bm{x}_n)|^2 =\\[8pt]
& \lim_{x \rightarrow +\infty} \, \lim_{\epsilon \rightarrow 0+}
\langle T\varphi |T\varphi \rangle_{\mu}^{x,\epsilon}
\end{align*}
Both limits inferior equal to usual limits since the latter exist.
Then due to the inequality
$$\langle T\varphi | T\varphi\rangle_{\mu} \leq \| \varphi \|_{\hat{\mu}}^2$$
the integral $\langle T\varphi | T\varphi\rangle_{\mu}$ is absolutely convergent. From \eqref{C} we also have the bound
\begin{align*}
\left|\prod_{k=1}^n e^{\pi \hat{g}(x - x_k) + 2\pi \epsilon (x_k-x)} \,
K(x - x_k)\right| \leq C^n\,.
\end{align*}
Hence, we can use dominated convergence theorem to interchange the integral with the limits
\begin{align*}
\langle T\varphi | T\varphi\rangle_{\mu} =&
\int_{\R^n} d\bm{x}_n \, \mu(\bm{x}_n)\,
\lim_{x \rightarrow +\infty} \,
\lim_{\epsilon \rightarrow 0+}\,e^{\pi \hat{g} (n x - \bbx_n) + 2\pi \epsilon (\bbx_n-nx)} \,
K(x, \bm{x}_n) \, |T\varphi(\bm{x}_n)|^2 = \\[8pt]
&\lim_{x \rightarrow +\infty} \,
\lim_{\epsilon \rightarrow 0+}\,\int_{\R^n} d\bm{x}_n \, \mu(\bm{x}_n)\,
\,e^{\pi \hat{g} (n x - \bbx_n) + 2\pi \epsilon (\bbx_n-nx)} \,
K(x, \bm{x}_n) \, |T\varphi(\bm{x}_n)|^2 =\\[8pt]
& \lim_{x \rightarrow +\infty} \, \lim_{\epsilon \rightarrow 0+}
\langle T\varphi |T\varphi \rangle_{\mu}^{x,\epsilon}\,.
\end{align*}
This completes the proof of Theorem \ref{T-theorem}.  \hfill{$\Box$}

Consider now a linear map
\begin{align}
(S \phi) (\bm{\lambda}_n) = \int d\bm{x}_n \, \mu(\bm{x}_n) \,
\Psi_{\bm{\lambda}_n} (\bm{x}_n) \, \phi(\bm{x}_n)
\end{align}
originally defined on smooth functions with compact support. It is remarkable that duality~\eqref{duality}
\begin{equation*}
\Psi_{\bm{\lambda}_n}(\bm{x}_n; g|\bo) = \Psi_{\bm{x}_n}(\bm{\lambda}_n; \gg|\hat{\bo})
\end{equation*}
states that formulae for $S$-transform are obtained from the formulae for $T$-transform
by change of variables
$\bx_{n} \rightleftarrows \bm{\lambda}_n$ and parameters
$g\,,\bo \rightleftarrows \gg\,,\hat{\bo}$ and for that change we have
$\hat{\mu}(\bm{\lambda}_n) \rightleftarrows \mu(\bm{x}_n)$.

\Stheorem*

Here recall that $R$ is the total reflection operator in $\R^n$ \eqref{R}
\begin{equation}
R \vf(\bl_n) = \vf(-\bl_n).
\end{equation}
The proof of the first part of Theorem \ref{S-theorem} and of the second orthogonality relation
\begin{align}
&\int_{\R^{n}} d\bl_{n}\,
\hat{\mu}(\bl_{n})\,
\overline{\Psi}_{\bm{\lambda}_n}(\bm{x}_n)
\Psi_{\bm{\lambda}_{n}}(\bm{y}_{n}) =
\mu^{-1}(\bm{x}_{n})\,\delta (\bm{x}_{n}, \bm{y}_{n})
\end{align}
repeats the proof of Theorem \ref{T-theorem}.

The second part of the theorem is an immediate corollary of the relation
\beq\label{dseq8}
 \overline{\Psi}_{\bm{\lambda}_n} (\bm{x}_n) =
\Psi_{-\bm{\lambda}_n} (\bm{x}_n)=\Psi_{\bm{\lambda}_n} (-\bm{x}_n).
\eeq
Indeed, this relation implies that the  operator $RS$ coinsides with the adjoint to $T$
\beq \label{ds11} RS=T^*, \eeq
so that the adjoint operator $T^*$ is defined on the whole space
$L^2_{\text{sym}}\left(\mathbb R^{n}, \mu\right)$. Moreover, using Theorem \ref{T-theorem}  and  the relation \rf{phipsi}  we get
\begin{equation}
\langle T^* T \vf | \psi \rangle_{\hat{\mu}} = \langle T \vf | T \psi \rangle_\mu = \langle \vf | \psi \rangle_{\hat{\mu}},
\end{equation}
that is $\vf - T^* T \vf$ is orthogonal to everything and therefore equals zero. Hence,
$$ T^\ast T= \mathrm{Id}.$$ \cbk
The last relation is equivalent to
$$RST=STR=\mathrm{Id}.$$
Applying the same  argument to the operator $S^\ast$, we get the opposite relation
$$RTS=TSR=\mathrm{Id}$$
and the proof of the second part of the theorem.  \hfill{$\Box$}

\section*{Acknowledgments}
The work of N. Belousov and S. Derkachov was supported by Russian Science Foundation, project No. 23-11-00311, used for the proof of statements of Section 3. The
work of S. Khoroshkin was supported by Russian Science Foundation,
project No 23-11-00150, used for the proof of statements of Section 2.
The work of S. Kharchev was supported by Russian Science Foundation,
project No 20-12-00195, used for the
proof of statements of Section 1 and Appendices A, B, C. The authors also thank
Euler International Mathematical Institute for hospitality during the PDMI and HSE
joint workshop on quantum integrability, where they got a possibility to discuss many
subtle points of this work.
\cbk

\setcounter{equation}{0}

\section*{Appendix}
\appendix

\setcounter{equation}{0}
\section{The double sine function}\label{S2}
The  double sine  function $S_2(z):=S_2(z|\bo)$, see \cite{Ku} and references therein, is a meromorphic function that satisfies two functional relations
\beq\label{trig3}  \frac{S_2(z)}{S_2(z+\o_1)}=2\sin \frac{\pi z}{\o_2},\qquad \frac{S_2(z)}{S_2(z+\o_2)}=2\sin \frac{\pi z}{\o_1}
\eeq
and inversion relation
\beq \label{trig4} S_2(z)S_2(-z)=-4\sin\frac{\pi z}{\o_1}\sin\frac{\pi z}{\o_2},\eeq
or equivalently
\beq \label{inv} S_2(z)S_2(\o_1+\o_2-z)=1. \eeq
The function $S_2(z)$ has poles at the points
\beq \label{S-poles}
z_{m,k} = m \o_1 + k\o_2, \qquad m,k\geq 1
\eeq
and zeros at
\beq\label{S-zeros}
z_{-m,-k}=-m\o_1-k\o_2,\qquad m,k\geq 0.
\eeq
For $\o_1 / \o_2 \not\in \mathbb{Q}$ all poles and zeros are simple. The residues of $S_2(z)$ and $S^{-1}_2(z)$ at these points are
\begin{align}
\underset{z = z_{m,k}}{\Res} \, S_2(z) = \frac{\sqrt{\o_1\o_2}}{2\pi}\frac{(-1)^{mk}}{\prod\limits_{s=1}^{m - 1}2\sin\dfrac{\pi s\o_1}{\o_2}\prod\limits_{l=1}^{k - 1}2\sin\dfrac{\pi l\o_2}{\o_1}},
\\[10pt]
\label{trig5} \underset{z = z_{-m,-k}}{\Res} \, S^{-1}_2(z) = \frac{\sqrt{\o_1\o_2}}{2\pi}\frac{(-1)^{mk+m+k}}{\prod\limits_{s=1}^m2\sin\dfrac{\pi s\o_1}{\o_2}\prod\limits_{l=1}^k2\sin\dfrac{\pi l\o_2}{\o_1}}.
\end{align}
In the analytic region $ \Re z \in ( 0, \omega_1 + \omega_2 )$ we have the following integral representation for the logarithm of $S_2(z)$
\begin{equation}\label{S2-int}
\ln S_2 (z) = \int_0^\infty \frac{dt}{2t} \left( \frac{\sh \left[ (2z - \omega_1 - \omega_2)t \right]}{ \sh (\omega_1 t) \sh (\omega_2 t) } - \frac{ 2z - \omega_1 - \omega_2 }{ \omega_1 \omega_ 2 t } \right).
\end{equation}
It is clear from this representation that the double sine function is homogeneous
\beq\label{S-hom}
S_2( \gamma z | \gamma\o_1, \gamma \o_2 ) = S_2(z|\o_1, \o_2), \qquad \gamma \in (0, \infty)
\eeq
and invariant under permutation of periods
\beq\label{A6}
S_2(z| \o_1, \o_2) = S_2(z | \o_2, \o_1).
\eeq
The double sine function can be expressed through the Barnes double Gamma function $\Gamma_2(z|\bo)$ \cite{B},
\beq
S_2(z|\bo)=\Gamma_2(\o_1+\o_2-z|\bo)\Gamma_2^{-1}(z|\bo),
\eeq
and its properties follow from the corresponding properties of the double Gamma function.

It is also connected to the Ruijsenaars hyperbolic Gamma function $G(z|\bo)$ \cite{R1}
\beq \label{G-S}
G(z|\bo) = S_2\Bigl(\imath z + \frac{\o_1 + \o_2}{2} \,\Big|\, \bo \Bigr)
\eeq
and to the Faddeev quantum dilogarithm $\gamma(z|\bo)$ \cite{F}
\beq \notag
\gamma(z|\bo) = S_2\Bigl(-\imath z + \frac{\o_1+\o_2}{2}\, \Big|\, \bo\Bigr) \exp \Bigl( \frac{\imath \pi}{2\o_1 \o_2} \Bigl[z^2 + \frac{\o_1^2+\o_2^2}{12} \Bigr]\Bigr).
\eeq
Both $G(z|\bo)$ and $\gamma(z|\bo)$ were investigated independently.

In the paper we deal only with ratios of double sine functions denoted by $\mu(x)$ \eqref{I5} and $K(x)$ \eqref{I6}
\beq\label{B1}
\begin{split}\mu(x)& =S_2(\imath x)S_2^{-1} (\imath x+g),\\[6pt]
	K(x)& =  S_2\left(\imath x+\frac{\o_1+\o_2}{2}+\frac{g}{2}\right)S_2^{-1}\left(\imath x+\frac{\o_1+\o_2}{2}-\frac{g}{2}\right).
\end{split}
\eeq
Now we will give the key asymptotic formulas and bounds for them, which were derived in \cite[Appendices A, B]{BDKK} from the known results for the double Gamma function. In what follows we assume conditions
\begin{equation}\label{cond}
\Re \o_j > 0, \qquad 0 < \Re g < \Re\o_1 + \Re \o_2, \qquad \Re \hat{g} > 0,
\end{equation}
where we denoted
\begin{equation}
\hat{g} = \frac{g}{\o_1 \o_2}.
\end{equation}
Let $\sigma_i$ be the arguments of the periods~$\o_i$, $|\sigma_i|<\pi/2$. Because of the symmetry \eqref{A6}, we may assume that $\sigma_1 \geq \sigma_2$. Let $D_+$ and $D_-$ be the cones of poles \eqref{S-poles} and zeros \eqref{S-zeros} of~$S_2(z)$:
\beqq D_+=\{ z\colon \sigma_2< \arg z<\sigma_1\},\qquad D_-=\{ z\colon \pi+\sigma_2< \arg z<\pi+\sigma_1\},\qquad D=D_+\cup D_-.\eeqq
Denote by $d(z,D)$ the distance between a point $z$ and the cones $D$. Using Barnes' Stirling formula for the asymptotic of double Gamma function, for the ratio of double sines one obtains
\beq \label{A14} \frac{S_2(z)}{S_2(z+g)} =  e^{\mp\pi\imath \hat{g}\left(z-\frac{g^\ast}{2}\right)}\Bigl(1+\,O\Bigl(d^{-1}(z,D)\Bigr)\Bigr), \eeq
where $z \in \mathbb{C} \setminus D$ and the sign $-$ (or $+$) is taken for $\Re z > 0$ (or $\Re z < 0$), see \cite[eq.(A.18)]{BDKK}.
Then from \eqref{A14} for the functions $\mu(x)$ and $K(x)$ \eqref{B1} with $x \in \R$ we have
\begin{equation}\label{Kmu-asymp}
\mu(x) \sim e^{\pi \hat{g} | x | \pm \imath \frac{\pi \hat{g} g^*}{2} }, \qquad K(x) \sim e^{- \pi \hat{g} |x|}, \qquad x\rightarrow \pm \infty.
\end{equation}
Denote also
\beq \nu_g= \Re  \hat{g}.\eeq
Under the conditions \eqref{cond} by using the same Stirling formula we also have bounds
\beq \label{Kmu-bound}
|\mu(x)| \leq C e^{\pi\nu_g |x|}, \qquad |K(x)| \leq C e^{-\pi\nu_g |x|},  \qquad x \in \R
\eeq
where $C$ is a positive constant uniform for compact subsets of parameters $\bo, g$ preserving the mentioned conditions, see \cite[eq.(B.3)]{BDKK}.

Another key result that we need in the paper is the following Fourier transform formula given in \cite[Proposition C.1]{R3}, which we rewrite in terms of the double sine function using connection formula \eqref{G-S}.  This Fourier transform can be already found in \cite{FKV,PT}.
\begin{proposition*}\cite{R3}
For real positive periods $\o_1, \o_2$ we have
\begin{equation}
\begin{aligned}
&\int_{\mathbb{R}} dx \, e^{\frac{2\pi\imath}{\o_1 \o_2} y x} S_2\Bigl(\imath x - \imath \nu + \frac{\o_1 + \o_2}{2} \Bigr) S_2^{-1} \Bigl( \imath x - \imath \rho + \frac{\o_1 + \o_2}{2} \Bigr) \\[6pt]
&= \sqrt{\o_1 \o_2} \, e^{ \frac{\pi\imath}{\o_1 \o_2} y (\nu + \rho) } S_2(\imath \rho - \imath \nu) \, S_2^{-1}\Bigl(\imath y + \frac{\imath(\rho - \nu)}{2} \Bigr) \, S_2^{-1}\Bigl(-\imath y + \frac{\imath(\rho - \nu)}{2} \Bigr),
\end{aligned}
\end{equation}
while the parameters $\nu, \rho, y$ satisfy the conditions
\begin{equation}\label{A19}
-\frac{\o_1 + \o_2}{2} < \Im \rho < \Im \nu < \frac{\o_1 + \o_2}{2}, \qquad |\Im y | < \Im \frac{\nu - \rho}{2}.
\end{equation}
\end{proposition*}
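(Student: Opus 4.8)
The plan is to prove this Fourier transform by the classical functional-equation method: exhibit both sides as meromorphic functions of $y$ that satisfy the same pair of first-order shift equations, then invoke an elliptic-type uniqueness principle, and finally fix the overall constant on a degenerate configuration. (An alternative is to close the contour and resum the double residue series of $S_2^{-1}$ directly using \eqref{trig5}, but the functional-equation route is cleaner.)

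First I would check absolute convergence of the left-hand integral under \eqref{A19}. The arguments of the two factors $S_2(\imath x-\imath\nu+\tfrac{\o_1+\o_2}{2})$ and $S_2^{-1}(\imath x-\imath\rho+\tfrac{\o_1+\o_2}{2})$ differ by the constant $\imath(\nu-\rho)$, so their product is, up to a fixed complex shift of the argument, the function $\mu$ with the coupling $g$ replaced by $\imath(\nu-\rho)$; since this replacement coupling has negative real part $-\Im(\nu-\rho)/(\o_1\o_2)$, the Gaussian parts of the Barnes--Stirling asymptotics cancel as in \eqref{A14} and the product decays like $e^{-\pi\frac{\Im(\nu-\rho)}{\o_1\o_2}|x|}$ as $x\to\pm\infty$. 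Multiplying by $e^{\frac{2\pi\imath}{\o_1\o_2}yx}$, whose modulus is $e^{-\frac{2\pi\Im y}{\o_1\o_2}x}$, one sees that the condition $|\Im y|<\Im\frac{\nu-\rho}{2}$ is exactly what keeps the integrand integrable at both ends.

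Next I would derive the difference equations. Write $F(x)=S_2(\imath x-\imath\nu+\tfrac{\o_1+\o_2}{2})\,S_2^{-1}(\imath x-\imath\rho+\tfrac{\o_1+\o_2}{2})$ and let $I(y)$ denote the integral, analytically continued by these equations. By the quasi-periodicity \eqref{trig3}, the shift $x\mapsto x-\imath\o_1$ multiplies $F$ by the explicit ratio $\sin\!\bigl[\tfrac{\pi}{\o_2}(\imath x-\imath\rho+\tfrac{\o_1+\o_2}{2})\bigr]\big/\sin\!\bigl[\tfrac{\pi}{\o_2}(\imath x-\imath\nu+\tfrac{\o_1+\o_2}{2})\bigr]$. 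Clearing the denominator and expanding each sine into exponentials $e^{\pm\pi x/\o_2}$, using that inserting $e^{\pm\pi x/\o_2}$ under the integral amounts to the shift $y\mapsto y\mp\tfrac{\imath}{2}\o_1$, and shifting the integration contour $x\mapsto x+\imath\o_1$ (legitimate for $\o_1,\o_2$ in a suitable open range, where the swept strip is pole-free and the exponential decay persists), one obtains after rearrangement a first-order difference equation $I(y+\imath\o_1)=g_1(y)\,I(y)$ with $g_1$ an explicit ratio of $\sin$-functions; the symmetry \eqref{A6} then yields the companion $I(y+\imath\o_2)=g_2(y)\,I(y)$. A short computation with \eqref{trig3} shows that the proposed right-hand side $J(y):=\sqrt{\o_1\o_2}\,e^{\frac{\pi\imath}{\o_1\o_2}y(\nu+\rho)}\,S_2(\imath\rho-\imath\nu)\,S_2^{-1}\!\bigl(\imath y+\tfrac{\imath(\rho-\nu)}{2}\bigr)\,S_2^{-1}\!\bigl(-\imath y+\tfrac{\imath(\rho-\nu)}{2}\bigr)$ obeys the very same two equations.

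Finally, the ratio $I(y)/J(y)$ is meromorphic and invariant under $y\mapsto y+\imath\o_1$ and $y\mapsto y+\imath\o_2$; for $\o_1/\o_2\notin\mathbb{Q}$ a doubly-periodic meromorphic function whose growth is controlled by the explicit, at most exponential, asymptotics of the two sides must be constant, and this constant, being independent of $\nu,\rho,y$, is by the integral representation \eqref{S2-int} and continuity also independent of $\o_1,\o_2$. I would pin it down on a degenerate configuration where the integral collapses to an elementary one — for instance $\rho=\nu-\imath\o_1$, where \eqref{trig3} turns $F$ into a single cosecant whose Fourier transform is computed by summing a geometric series of residues, to be matched against the corresponding collapse of $J$; equivalently, one compares the residue of the contour-pinch singularity of $I(y)$ at the edge of \eqref{A19} (read off from \eqref{trig5}) with the residue of $J$ at the pole of $S_2^{-1}(\imath y+\tfrac{\imath(\rho-\nu)}{2})$. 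I expect the main obstacle to be the bookkeeping in these last two steps: justifying the contour shifts together with any residue terms they generate outside the clean parameter window, securing growth estimates sharp enough to exclude a nontrivial elliptic ratio, and evaluating the normalization constant cleanly. This identity is in any case classical — it is Ruijsenaars' Fourier transform \cite{R3}, see also \cite{FKV,PT} — so the paper simply imports it, and the outline above merely indicates how one would reprove it.
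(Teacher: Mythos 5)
There is nothing in the paper to compare your argument with: the paper does not prove this proposition at all, it imports it from Ruijsenaars \cite{R3} (Proposition C.1 there), rewritten through \eqref{G-S}, with \cite{FKV,PT} cited as alternative sources --- exactly as you note in your last sentence. Judged on its own, your outline follows the standard route for such hyperbolic Fourier-transform identities, and its analytic groundwork is right: the product of the two double sines is a $\mu$-type ratio with effective coupling $\imath(\nu-\rho)$, so by the Barnes--Stirling asymptotics \eqref{A14} it decays like $e^{-\pi\Im(\nu-\rho)|x|/(\o_1\o_2)}$, and the condition $|\Im y|<\Im\frac{\nu-\rho}{2}$ in \eqref{A19} is precisely what makes the integral absolutely convergent; the bookkeeping of the sine insertions as shifts $y\mapsto y\mp\frac{\imath\o_1}{2}$ is also correct, and the contour-shift caveats you flag (possible residue terms when $\Im\rho+\frac{\o_1+\o_2}{2}<\o_1$, to be removed by first working in a smaller parameter window and continuing analytically) are real but routine.

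The one substantive gap is in the final step: the incommensurate-period/Liouville argument, even granted the pole matching and boundedness of $I(y)/J(y)$ that it requires (the $y$-poles of $I$ produced by contour pinching must be shown to coincide with those of the two $S_2^{-1}$ factors in $J$, with residues controlled via \eqref{trig5}), only yields that the ratio is independent of $y$. Your assertion that the resulting constant ``is independent of $\nu,\rho$'' does not follow from anything you have set up, so evaluating at the single degenerate slice $\rho=\nu-\imath\o_1$ --- where the check is indeed consistent, since the integrand collapses by \eqref{trig3} to a cosecant and $J$ reduces via \eqref{inv} and \eqref{trig3} to an elementary $1/\cosh$-type function --- pins the ratio down only for $\nu-\rho=\imath\o_1$ (and, by \eqref{A6}, for $\imath\o_2$). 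To close the argument you need to control the $\nu,\rho$ dependence as well: e.g.\ observe that a translation of $x$ shows the ratio depends only on $\nu-\rho$, then derive a companion first-order difference equation in $\nu-\rho$ with steps $\imath\o_1,\imath\o_2$ (by shifting $\nu$ and reabsorbing the sine insertions as shifts of $y$, using the already-established $y$-constancy) and repeat the density-plus-analyticity argument in that variable; alternatively, the residue-summation route gives the full $\nu,\rho$ dependence directly. This is how the published derivations in \cite{R3,FKV,PT} in effect proceed, so the gap is fillable, but as written the normalization step does not yet determine the constant for generic $\nu,\rho$.
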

In the special case
\begin{equation}
\nu = \frac{\imath g}{2}, \qquad \rho = -\frac{\imath g}{2}
\end{equation}
taking $y = \o_1 \o_2 \l$ and using homogeneity of the double sine \eqref{S-hom} (with $\gamma = \o_1 \o_2$) we arrive at the Fourier transform formula for the function $K(x)$ \eqref{B1}
\begin{equation}\label{K-fourier}
\int_{\mathbb{R}} dx \; e^{2 \pi \imath  \lambda x }  \K (x) = \sqrt{\omega_1 \omega_2} \, S_2(g) \, \KK (\lambda),
\end{equation}
where $| \Im \l | < \Re \hat{g}/2$ and conditions \eqref{A19} are satisfied due to the inequalities on the coupling constant $g$ \eqref{cond}. Here we recall the notations
\begin{equation}
\hat{K}(\l) = K_{\hat{g}^*}(\l|\hat{\o}), \qquad \hat{g}^* = \frac{g^*}{\o_1\o_2}, \qquad \hat{\bo} = \Bigl( \frac{1}{\o_2}, \frac{1}{\o_1} \Bigr).
\end{equation}
Note that the right hand side of \eqref{K-fourier} is analytic function of $\o_1, \o_2$ in the domain $\Re \o_j > 0$. The integral from the left is also analytic with respect to periods. Indeed, due to the bound \eqref{Kmu-bound} it is absolutely convergent uniformly on compact sets of parameters $\bo, g$ preserving the conditions \eqref{cond}. Hence, the formula \eqref{K-fourier} also holds for complex periods under the mentioned conditions.
\cbk

\setcounter{equation}{0}
\section{Some inequalities}\label{AppB}
Let
\begin{equation}\label{Ln}
L_n(\bm{y}_{n}, \bm{x}_{n + 1}) = \sum_{\substack{i, j = 1 \\ i < j}}^{n + 1} | x_i - x_j | + \sum_{\substack{i, j = 1 \\ i < j}}^{n} | y_i - y_j | - \sum_{i = 1}^{n+1} \sum_{j = 1}^{n} \, | x_i - y_j |.
\end{equation}
\begin{lemma} The following inequality holds
\begin{equation}
L_n \leq 0
\end{equation}
for any $x_j, y_j \in \R$.
\end{lemma}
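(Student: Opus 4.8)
The plan is to use the layer-cake identity for the absolute value, $|a-b| = \int_{\R} \mathbf{1}_{[\min(a,b),\,\max(a,b)]}(t)\,dt$, which rewrites each sum of pairwise distances as an integral of a simple counting function and reduces the whole inequality to a pointwise estimate.

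First I would fix a generic $t \in \R$ (that is, $t \notin \{x_1,\dots,x_{n+1},y_1,\dots,y_n\}$, a finite hence null set) and set $p = p(t) := \#\{i : x_i > t\}$ and $q = q(t) := \#\{j : y_j > t\}$, so that $n+1-p$ of the $x_i$ and $n-q$ of the $y_j$ lie below $t$. A pair $\{x_i,x_j\}$ contributes to the integrand of $\sum_{i<j}|x_i-x_j|$ exactly when its members straddle $t$, which gives integrand $p(n+1-p)$; similarly $\sum_{i<j}|y_i-y_j|$ has integrand $q(n-q)$ and $\sum_{i,j}|x_i-y_j|$ has integrand $p(n-q)+q(n+1-p)$. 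Interchanging the finite sums with the integral (legitimate since everything is bounded and compactly supported) gives
\[
L_n = \int_{\R}\Big[p(n+1-p) + q(n-q) - p(n-q) - q(n+1-p)\Big]\,dt,
\]
with $p = p(t)$, $q = q(t)$.

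Next I would simplify the integrand: writing $d = p-q$, a one-line computation gives
\[
p(n+1-p) + q(n-q) - p(n-q) - q(n+1-p) = (p-q)\bigl(1-(p-q)\bigr) = d(1-d).
\]
Since $p$ and $q$ are integers, $d$ is an integer, and $d(1-d) \le 0$ for every integer $d$ (it vanishes for $d \in \{0,1\}$ and is strictly negative otherwise). Hence the integrand is $\le 0$ for a.e.\ $t$, so $L_n \le 0$.

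There is essentially no obstacle here; the only points needing a word of care are that the pointwise formulas for the integrands hold off the null set of $t$ coinciding with some $x_i$ or $y_j$, and that the counting quantities are integrable so that Tonelli/Fubini applies --- both immediate. One could alternatively argue combinatorially by sorting the $2n+1$ points and adding up contributions gap by gap, but the layer-cake formulation is what makes the collapse to $d(1-d)$ transparent and handles all orderings uniformly.
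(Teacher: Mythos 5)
Your proof is correct, and it takes a genuinely different route from the paper's. The paper reduces to the sorted case $x_1\ge\dots\ge x_{n+1}$, $y_1\ge\dots\ge y_n$, rewrites $\sum_{i<j}|x_i-x_j|+\sum_{i<j}|y_i-y_j|$ as $\sum_{k=1}^{n+1}\bigl(\sum_{m<k}(y_m-x_k)+\sum_{m\ge k}(x_k-y_m)\bigr)$, and then gets $L_n\le 0$ term by term from $a\le|a|$ against the mixed sum $\sum_{i,j}|x_i-y_j|$; this is a short combinatorial bookkeeping argument but relies on the explicit sorted-order identities. You instead use the layer-cake representation $|a-b|=\int_{\R}\mathbf{1}_{[\min(a,b),\max(a,b)]}(t)\,dt$, exchange the finite sums with the integral (mere linearity, since each level-set indicator is bounded with compact support), and reduce everything to the pointwise count $p(n{+}1{-}p)+q(n{-}q)-p(n{-}q)-q(n{+}1{-}p)=d(1-d)$ with $d=p(t)-q(t)\in\Z$, which is $\le 0$ because $d$ is an integer; the unbalanced cardinalities $n+1$ versus $n$ are exactly what shifts the quadratic to $d(1-d)$ rather than $-d^2$, and the integrality of $d$ is what makes $d=1$ harmless. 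Your argument is order-free and arguably more transparent and more flexible (the same level-set computation immediately handles other cardinality patterns and explains why the inequality is sharp when the two families interlace), while the paper's sorting proof is more elementary in that it avoids any integral representation.
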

We found the short proof of the this inequality in \cite[Appendix A]{IS}, and we repeat it here for the self-consistency of the paper.
\begin{proof}
The function $L_n$ is clearly symmetric with respect to $x_j$ and $y_j$ (separately). Then without loss of generality consider the case
\begin{equation}
x_1 \geq x_2 \geq \ldots \geq x_{n+1}, \qquad y_1 \geq y_2 \geq \ldots \geq y_{n}.
\end{equation}
By simple combinatorics in this case we have
\begin{equation}
\sum_{\substack{i, j = 1 \\ i < j}}^{n + 1} | x_i - x_j | = \sum_{k=1}^{n + 1} (n + 2 - 2k) x_k, \qquad \sum_{\substack{i, j = 1 \\ i < j}}^{n} | y_i - y_j | = \sum_{k=1}^{n} (n + 1 - 2k) y_k.
\end{equation}
Therefore, we can write
\begin{equation}
\sum_{\substack{i, j = 1 \\ i < j}}^{n + 1} | x_i - x_j | + \sum_{\substack{i, j = 1 \\ i < j}}^{n} | y_i - y_j | = \sum_{k = 1}^{n + 1} \Biggl( \sum_{m = 1}^{k - 1}(y_m - x_k) + \sum_{m = k}^n (x_k - y_m) \Biggr).
\end{equation}
Inserting this in the function \eqref{Ln},
\begin{equation}
L_n = \sum_{k = 1}^{n + 1} \Biggl( \sum_{m = 1}^{k - 1}\bigl[ (y_m - x_k) - |y_m - x_k | \bigl]+ \sum_{m = k}^n \bigl[ (x_k - y_m) - |x_k - y_m| \bigr] \Biggr).
\end{equation}
Then $L_n \leq 0$, since $a \leq |a|$.
\end{proof}

Now consider
\begin{equation}
R_n(\by_n, \bx_n) = \sum_{\substack{i, j = 1 \\ i < j}}^{n} | x_i - x_j | + \sum_{\substack{i, j = 1 \\ i < j}}^{n} | y_i - y_j | - \sum_{i,j = 1}^{n} \, | x_i - y_j |
\end{equation}
and recall the notation
\begin{equation}
\| \bx_n \| = |x_1| + \ldots + |x_n|.
\end{equation}
\begin{corollary}\label{cor:R-ineq}
For any $\epsilon \in [0,1]$
\begin{equation}
R_n \leq \epsilon ( \|\bm{y}_n \| - \| \bm{x}_n \|).
\end{equation}
\end{corollary}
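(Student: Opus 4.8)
The plan is to deduce Corollary~\ref{cor:R-ineq} from the preceding Lemma (the inequality $L_n\le 0$) by specializing the free coordinate $x_{n+1}$ to a real parameter $t$. First I would substitute $x_{n+1}=t$ into $L_n(\bm{y}_n,\bm{x}_{n+1})$ and split off the contribution of the new point in the two sums involving the $x$-variables. Since
\[
\sum_{\substack{i,j=1\\ i<j}}^{n+1}|x_i-x_j|\Big|_{x_{n+1}=t}=\sum_{\substack{i,j=1\\ i<j}}^{n}|x_i-x_j|+\sum_{i=1}^{n}|t-x_i|,\qquad
\sum_{i=1}^{n+1}\sum_{j=1}^{n}|x_i-y_j|\Big|_{x_{n+1}=t}=\sum_{i,j=1}^{n}|x_i-y_j|+\sum_{j=1}^{n}|t-y_j|,
\]
one obtains the identity $L_n(\bm{y}_n,(\bm{x}_n,t))=R_n(\bm{y}_n,\bm{x}_n)+\sum_{i=1}^{n}|t-x_i|-\sum_{j=1}^{n}|t-y_j|$, so the Lemma gives, for every $t\in\R$,
\[
R_n(\bm{y}_n,\bm{x}_n)\ \le\ \sum_{j=1}^{n}|t-y_j|-\sum_{i=1}^{n}|t-x_i|.
\]

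From this one-parameter family of bounds I would extract the two cases actually needed. Taking $t=0$ gives immediately $R_n\le\|\bm{y}_n\|-\|\bm{x}_n\|$. Letting $t\to+\infty$ the right-hand side tends to $\bbx_n-\bby_n$, and letting $t\to-\infty$ it tends to $\bby_n-\bbx_n$; summing these two inequalities yields $2R_n\le 0$, hence $R_n\le 0$. (Alternatively, $R_n\le0$ is the classical negative-type inequality for the metric $|x-y|$ on $\R$, but the argument above uses only the stated Lemma.)

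Finally I would combine $R_n\le 0$ with $R_n\le\|\bm{y}_n\|-\|\bm{x}_n\|$. For fixed $\bm{x}_n,\bm{y}_n$ the quantity $\epsilon\,(\|\bm{y}_n\|-\|\bm{x}_n\|)$ moves linearly from $0$ to $\|\bm{y}_n\|-\|\bm{x}_n\|$ as $\epsilon$ runs over $[0,1]$, so it is always at least $\min\bigl(0,\,\|\bm{y}_n\|-\|\bm{x}_n\|\bigr)$, which by the two bounds is $\ge R_n$. This is exactly the claimed inequality $R_n\le\epsilon(\|\bm{y}_n\|-\|\bm{x}_n\|)$ for all $\epsilon\in[0,1]$. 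I do not expect a real obstacle here; the only places needing a little care are the bookkeeping in the substitution $x_{n+1}=t$ (each absolute value must be counted exactly once) and the observation that both the $t=0$ bound and the $t\to\pm\infty$ bounds issue from a single application of the Lemma.
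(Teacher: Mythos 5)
Your proposal is correct and follows essentially the same route as the paper: specialize the extra coordinate in $L_n$ (the paper sets $x_{n+1}=0$, you keep a general $t$ and then use $t=0$), deduce $R_n\le \|\bm{y}_n\|-\|\bm{x}_n\|$ and $R_n\le 0$, and conclude by the convex-combination argument $R_n=(1-\epsilon)R_n+\epsilon R_n\le \epsilon(\|\bm{y}_n\|-\|\bm{x}_n\|)$. The only cosmetic difference is that you obtain $R_n\le 0$ by letting $t\to\pm\infty$, whereas the paper invokes the $\bm{x}_n\rightleftarrows\bm{y}_n$ symmetry of $R_n$ to get the reversed bound and sums the two inequalities; both are valid and equally short.
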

\begin{proof}
Consider $L_{n}(\by_n, \bx_{n+1})$ with $x_{n + 1} = 0$
\begin{equation}
\begin{aligned}
L_{n}(\bm{y}_n, \bm{x}_n, 0) = \sum_{\substack{i, j = 1 \\ i < j}}^{n} | x_i - x_j | + \sum_{\substack{i, j = 1 \\ i < j}}^{n} | y_i - y_j | &- \sum_{i,j = 1}^{n} \, | x_i - y_j |\\[6pt]
& + \|\bm{x}_n\| - \| \bm{y}_n \| \leq 0.
\end{aligned}
\end{equation}
Consequently,
\begin{equation}
R_n \leq \| \bm{y}_n \| - \| \bm{x}_n \|
\end{equation}
and by symmetry
\begin{equation}
R_n \leq \| \bm{x}_n \| - \| \bm{y}_n \|.
\end{equation}
Summing these inequalities we also have
\begin{equation}
R_n \leq 0.
\end{equation}	
Therefore, for any $\eps \in [0,1]$
\begin{equation}
R_n = (1 - \epsilon) R_n + \epsilon R_n \leq \epsilon ( \|\bm{y}_n \| - \| \bm{x}_n \|).
\end{equation}
\end{proof}

\setcounter{equation}{0}
\section{Delta sequence}\label{App-delta}

We are going to show that in the sense of distributions the following identity holds
\begin{align}\label{id}
\lim_{\lambda\rightarrow\infty} \, \lim_{\epsilon\rightarrow 0+}
\frac{e^{\imath \lambda \sum_{a=1}^{n}(x_a-y_a)}}{
\prod_{a,b=1}^{n} \left(x_a-y_b -\imath\varepsilon\right)}
= \frac{(-1)^{\frac{n(n-1)}{2}}(2\pi \imath)^{n} n!}{ \prod_{a<b}^{n} \left(x_{a}-x_{b}\right)^2}\,
\delta\big(\bm{x}_n,\bm{y}_n \big),
\end{align}
where
\begin{align*}
\delta\big(\bm{x}_n,\bm{y}_n \big) = \frac1{n!}\,
\sum_{w\in S_n} \prod_{k=1}^n\delta(x_k-y_{w(k)}).
\end{align*}
This identity is written in a compact formal way and should be understood
in the following sense:
for any test (e.g. smooth with a compact support) function $f(x_1,\ldots,x_n)$ we have
\begin{multline}\label{lem}
\lim_{\lambda\rightarrow\infty} \, \lim_{\epsilon\rightarrow 0+}
\int dx_1\cdots dx_n \prod_{a<b}^{n} (x_{a}-x_{b})^2\,f(x_1,\ldots,x_n)\,
\frac{e^{\imath \lambda \sum_{a=1}^{n}(x_a-y_a)}}{
\prod_{a,b=1}^{n} \left(x_a-y_b -\imath\varepsilon\right)}
= \\ (-1)^{\frac{n(n-1)}{2}}(2\pi \imath)^{n}\,
\sum_{w\in S_n} f\big(y_{w(1)},\ldots,y_{w(n)}\big).
\end{multline}
First of all we prove the equivalent identity
\begin{align}\label{0}
\lim_{\lambda\rightarrow\infty}\lim_{\epsilon\rightarrow 0^+}
e^{\imath \lambda \sum_{a=1}^{n}(x_a-y_a)}\,\frac{
\prod_{a<b}^{n} \left(x_{a}-x_{b}\right)\left(y_{b}-y_{a}\right)}{
\prod_{a,b=1}^{n} \left(x_a-y_b -\imath\varepsilon\right)}
= (2\pi \imath )^{n}\,
\sum_{w\in S_n} (-1)^{s(w)}\prod_{k=1}^n\,\delta(x_k-y_{w(k)})
\end{align}
where $s(w)$ is the sign of the permutation $w$.
Let us start from the simplest example $n=1$. We have to prove that
\begin{align*}
\lim_{\lambda\rightarrow\infty}\lim_{\epsilon\rightarrow 0^+}
\frac{e^{\imath \lambda (x-y)}}{\left(x-y -\imath\varepsilon\right)}
= 2\pi \imath\,\delta(x - y )\,,
\end{align*}
or equivalently
\begin{align*}
\lim_{\lambda\rightarrow\infty}\lim_{\epsilon\rightarrow 0^+}
\int_{\mathbb{R}} f(x)\,
\frac{e^{\imath \lambda (x-y)}}{x-y -\imath\varepsilon}\,dx = 2\pi \imath\,f(y)\,.
\end{align*}
First of all we transform integral with the test function.
We divide integral on two parts: the first
integral can be calculated by residues and due
to cancelation of singularity at $x=y$ it is possible to put
$\varepsilon \to 0 $ in the second part
\begin{multline*}
\int_{\mathbb{R}} f(x)\,
\frac{e^{\imath\lambda(x-y)}}{x-y -\imath\varepsilon}\,dx =
f(y)\,\int_{\mathbb{R}}\,
\frac{e^{\imath\lambda(x-y)}}{x-y-\imath\varepsilon}\,dx  +
\int_{\mathbb{R}} \frac{f(x)-f(y)}
{x-y-\imath\varepsilon}\,e^{\imath\lambda(x-y)}\,dx = \\[6pt]
2\pi \imath\,f(y)\,e^{-\varepsilon\lambda}  +
\int_{\mathbb{R}} \frac{f(x)-f(y)}
{x-y-\imath\varepsilon}\,e^{\imath\lambda(x-y)}\,dx \xrightarrow{\varepsilon\to 0}
2\pi \imath\,f(y)  +
\int_{\mathbb{R}} \frac{f(x)-f(y)}
{x-y}\,e^{\imath \l (x- y)}\,dx.
\end{multline*}
Due to the Riemann-Lebesgue lemma the second contribution
tends to zero in the limit $\l\to\infty$ so that we obtain
after removing $\varepsilon$-regularization and taking limit $\lambda\to\infty$
\begin{align*}
\lim_{\lambda\rightarrow\infty} \, \lim_{\epsilon\rightarrow 0+}
\int_{\mathbb{R}} f(x)\,
\frac{e^{\imath\lambda(x-y)}}{x-y -\imath\varepsilon}\,dx = 2\pi \imath\,f(y)\,.
\end{align*}
The whole consideration in the case $n=2$ is almost identical
to the case of general $n$.
We have to prove the following relation
\begin{align*}
\lim_{\lambda\rightarrow\infty} \, \lim_{\epsilon\rightarrow 0+}
\frac{e^{\imath\lambda(x_1+x_2-y_1-y_2)}\,x_{12}\,y_{21}}{
\prod_{a,b=1}^{2} \left(x_a-y_b -\imath\varepsilon\right)}
= (2\pi \imath)^{2}\,
\bigl[\delta(x_1-y_1)\,\delta(x_2-y_2)-\delta(x_1-y_2)\,\delta(x_2-y_1)
\bigl].
\end{align*}
Here we denoted $x_{12} = x_1 - x_2$. First of all we use Cauchy determinant identity
\begin{multline}\label{12}
\frac{x_{12}\,y_{21}}{
\prod_{a,b=1}^{2} \left(x_a-y_b -\imath\varepsilon\right)} =
\frac{1}{\left(x_1-y_1 -\imath\varepsilon\right)
\left(x_2-y_2 - \imath\varepsilon\right)} -
\frac{1}{\left(x_1-y_2 -\imath\varepsilon\right)
\left(x_2-y_1 -\imath\varepsilon\right)}.
\end{multline}
Let us consider the convolution of the first term with
the test function
\begin{align*}
\int dx_1\,dx_2\,f(x_1,x_2)\,
\frac{e^{\imath\lambda(x_1+x_2-y_1-y_2)}}{\left(x_1-y_1 -\imath\varepsilon\right)
\left(x_2-y_2 -\imath\varepsilon\right)}
\end{align*}
and introduce two commuting operators $X_1$ and $X_2$ acting on the test function
\begin{align*}
X_1 f(x_1\,,x_2) = f(y_1\,,x_2), \qquad
X_2 f(x_1\,,x_2) = f(x_1\,,y_2)\,.
\end{align*}
As a consequence of evident identity
\begin{multline*}
1 = (1-X_1+X_1)(1-X_2+X_2) =
(1-X_1)(1-X_2) + X_1(1-X_2) + X_2(1-X_1) + X_1 X_2
\end{multline*}
and explicit formulas
\begin{align*}
& X_1(1-X_2)f(x_1\,,x_2) = f(y_1\,,x_2) - f(y_1\,,y_2)\,;\\[6pt]
& X_2(1-X_1)f(x_1\,,x_2) = f(x_1\,,y_2) - f(y_1\,,y_2)\,;\\[6pt]
& (1-X_1)(1-X_2)f(x_1\,,x_2) =
(1-X_1)\left[f(x_1\,,x_2) - f(x_1\,,y_2)\right] = \\[6pt]
& f(x_1\,,x_2) - f(x_1\,,y_2) - f(y_1\,,x_2) + f(y_1\,,y_2)
\end{align*}
we obtain the following useful representation for the function $f(x_1\,,x_2)$
\begin{align*}
f(x_1\,,x_2) = f(y_1\,,y_2) + \left[f(y_1\,,x_2) - f(y_1\,,y_2)\right] +
\left[f(x_1\,,y_2) - f(y_1\,,y_2)\right] + \\[6pt]
\left[f(x_1\,,x_2) - f(x_1\,,y_2) - f(y_1\,,x_2) + f(y_1\,,y_2)\right]
\end{align*}
Note that the first term does not depend on $x_1$ and $x_2$,
second term does not depend on $x_1$ and is equal to zero at the point $x_2=y_2$,
third term does not depend on $x_2$ and is equal to zero at the point $x_1=y_1$.
The Taylor expansion of the last term in vicinity of the point $x_1=y_1\,,x_2=y_2$ starts from the contribution $\sim(x_1-y_1)(x_2-y_2)$ because it turns to zero at points $x_1=y_1$ and $x_2=y_2$ independently. In the first three terms the corresponding integrals can be calculated by residues
and we obtain
\begin{multline*}
\int dx_1\,dx_2\,
\frac{f(x_1\,,x_2)\,e^{\imath\lambda(x_1+x_2-y_1-y_2)}}{\left(x_1-y_1 -\imath\varepsilon\right)
\left(x_2-y_2 -\imath\varepsilon\right)} = f(y_1\,,y_2)\,(2\pi \imath)^2\,e^{-2\varepsilon\lambda} + \\[6pt]
2\pi \imath\,e^{-\varepsilon\lambda}\,
\int dx_2\,
\frac{\left[f(y_1\,,x_2) - f(y_1\,,y_2)\right]\,e^{\imath\lambda(x_2-y_2)}}{x_2-y_2 -\imath\varepsilon}
+ \\[6pt]
2\pi \imath\,e^{-\varepsilon\lambda}\,
\int dx_1\,
\frac{\left[f(x_1\,,y_2) - f(y_1\,,y_2)\right]\,
e^{\imath\lambda(x_1-y_1)}}{x_1-y_1 -\imath\varepsilon} +
\\[6pt]
\int dx_1\,dx_2\,
\frac{\left[f(x_1\,,x_2) - f(x_1\,,y_2) - f(y_1\,,x_2) + f(y_1\,,y_2)\right]\,e^{\imath\lambda(x_1+x_2-y_1-y_2)}}{\left(x_1-y_1 -\imath\varepsilon\right)
\left(x_2-y_2 -\imath\varepsilon\right)}.
\end{multline*}
Inside of remaining integrals all singularities of integrand
are cancelled so that it is possible to perform the limit $\varepsilon \to 0$.
Due to the Riemann-Lebesgue lemma all contributions with integrals
tend to zero in the limit $\lambda\to\infty$ and we have
after removing $\varepsilon$-regularization and $\lambda\to\infty$
\begin{align*}
\lim_{\lambda\rightarrow\infty}\lim_{\epsilon\rightarrow 0^+}
\int dx_1\,dx_2\,
\frac{f(x_1\,,x_2)\,e^{\imath\lambda(x_1+x_2-y_1-y_2)}}{\left(x_1-y_1 -\imath\varepsilon\right)
\left(x_2-y_2 -\imath\varepsilon\right)} = (2\pi \imath)^2\,f(y_1\,,y_2)
\end{align*}
The second term in \eqref{12} is obtained by $y_1\rightleftarrows y_2$
so that finally one obtains the stated result
\begin{multline*}
\lim_{\lambda\rightarrow\infty}\lim_{\epsilon\rightarrow 0+}
\int dx_1\,dx_2\,
\frac{f(x_1\,,x_2)\,e^{\imath\lambda(x_1+x_2-y_1-y_2)} \, x_{12}y_{21}}
{\left(x_1-y_1 -\imath\varepsilon\right)\left(x_1-y_2 -\imath\varepsilon\right)
\left(x_2-y_1 -\imath\varepsilon\right)
\left(x_2-y_2 -\imath\varepsilon\right)} = \\[6pt]
(2\pi \imath)^2\,\left[ f(y_1\,,y_2)- f(y_2\,,y_1) \right].
\end{multline*}
It is evident that the symmetric part of the function $f(x_1,x_2)$ does not contribute so that
the nontrivial contribution is due to antisymmetric part of the function $f(x_1,x_2)$.
Antisymmetric part of the test function $f(x_1,x_2)$ should be zero at $x_1=x_2$ and
without loss of generality it is possible to use representation
$f(x_1,x_2) = (x_1-x_2)\phi(x_1,x_2)$, where $\phi(x_1,x_2)$ can be generic because
antisymmetric part of the function $\phi(x_1,x_2)$ does not contribute.
Finally one obtains the formula \eqref{lem} in the case $n=2$
\begin{multline*}
\lim_{\lambda\rightarrow\infty}\lim_{\epsilon\rightarrow 0+}
\int dx_1\,dx_2\,
\frac{x^2_{12}\,\phi(x_1\,,x_2)\,e^{\imath\lambda(x_1+x_2-y_1-y_2)}}
{\left(x_1-y_1 -\imath\varepsilon\right)\left(x_1-y_2 -\imath\varepsilon\right)
\left(x_2-y_1 -\imath\varepsilon\right)
\left(x_2-y_2 -\imath\varepsilon\right)} = \\[6pt]
(2\pi \imath)^2\,
\left[ \phi(y_1\,,y_2) + \phi(y_2\,,y_1) \right].
\end{multline*}
In general case we again use Cauchy determinant identity in the form
\begin{align*}
\frac{\prod_{k<j}x_{kj}\,y_{jk}}{\prod_{k,j=1}^{n}(x_k-y_j-\imath\varepsilon)} &
= \det\left(\frac{1}{x_k-y_{j}-\imath\varepsilon}\right) =
\sum_{\sigma\in S_{n}} (-1)^{s(\sigma)}
\prod_{k=1}^{n}\frac{1}{x_k-y_{\sigma(k)}-\imath\varepsilon}
\end{align*}
In analogy with $n=2$ we shall prove that
\begin{align*}
\lim_{\lambda\rightarrow\infty}\lim_{\epsilon\rightarrow 0+}
\int dx_1\,\cdots dx_n\,
\frac{f(x_1,\ldots ,x_n)\,e^{\imath\lambda\sum_k(x_k-y_k)}}{
\prod_k\left(x_k-y_k -\imath\varepsilon\right)} = (2\pi \imath)^n\,f(y_1,\ldots,y_n)
\end{align*}
and then use the same identity with evident permutations.
We introduce the natural generalization of the operators $X_k$
\begin{align*}
X_k f(x_1,\ldots ,x_k,\ldots ,x_n) = f(x_1,\ldots ,y_k,\ldots ,x_n)
\end{align*}
and the main expansion
\begin{multline*}
1 = \prod_{k=1}^N (1-X_k+X_k) = \\
\prod_{k=1}^N (1-X_k) +\sum_{k=1} X_k \prod_{i\neq k}^N (1-X_i)+
\sum_{k,p=1} X_k X_p \prod_{i\neq k,p}^N (1-X_i)+ \ldots + \prod_{k=1}^N X_k
\end{multline*}
Due to the Riemann-Lebesgue lemma all contributions containing
$\prod_{i} (1-X_i)\,f(x_1,\ldots,x_n)$ in integrand are regular
at corresponding points so that corresponding integrals
tend to zero in the limit $\l\to\infty$. In the needed limit only one term
$\prod_{k=1}^n X_k$ survives and produces $(2\pi \imath)^n\,f(y_1,\ldots,y_n)$ in the full anagoly with the case $n=2$.
Then for the whole sum we obtain
\begin{multline}\label{lem0}
\lim_{\lambda\rightarrow\infty}\lim_{\epsilon\rightarrow 0+}
\int dx_1\cdots dx_n\, f(x_1,\ldots,x_n)\,
\frac{\prod_{k<j}x_{kj}\,y_{jk}\,e^{\imath\lambda\sum_{a=1}^{n}(x_a-y_a)}}{
\prod_{a,b=1}^{n} \left(x_a-y_b -\imath\varepsilon\right)}
= \\[6pt]
(-1)^{\frac{n(n-1)}{2}}(2\pi \imath )^{n}\,
\sum_{w\in S_n} (-1)^{s(w)}\,
f\big(y_{w(1)},\ldots,y_{w(n)}\big)
\end{multline}
and this identity is equivalent to \eqref{0}.
The next step is very similar to the case $n=2$.
Indeed, only the antisymmetric part of the test function
$f(x_1,\ldots,x_n)$ gives nontrivial contribution so that without
loss of generality it is possible to use the following
representation for the test function
$f(x_1,\ldots,x_n) = \Delta(x_1,\ldots,x_n)\,\phi(x_1,\ldots,x_n)$,
where $\Delta(x_1,\ldots,x_n) = \prod_{k<j} x_{kj}$.
We have an evident relation
\begin{align*}
\Delta\big(x_{w(1)},\ldots,x_{w(n)}\big) = (-1)^{s(w)}\,
\Delta\big(x_{1},\ldots,x_{n}\big)
\end{align*}
and as consequence one obtains \eqref{lem}
\begin{multline*}
\lim_{\lambda\rightarrow\infty}\lim_{\epsilon\rightarrow 0+}
\int dx_1\cdots dx_n\, \phi(x_1,\ldots,x_n)\,
\frac{\prod_{k<j}x^2_{kj}\,e^{\imath\lambda\sum_{a=1}^{n}(x_a-y_a)}}{
\prod_{a,b=1}^{n} \left(x_a-y_b -\imath\varepsilon\right)}
= \\[6pt]
(-1)^{\frac{n(n-1)}{2}}(2\pi \imath )^{n}\,
\sum_{w\in S_n} \,
\phi\big(y_{w(1)},\ldots,y_{w(n)}\big).
\end{multline*}

\end{document}